\title{Faster Game Solving via Predictive Blackwell Approachability: Connecting Regret Matching and Mirror Descent}
\author{
Gabriele Farina,\textsuperscript{\rm 1}
Christian Kroer,\textsuperscript{\rm 2}
Tuomas Sandholm\textsuperscript{\rm 1,\rm 3,\rm 4,\rm 5}\\
}
\date{\today}
\newcommand{\cX}{\mathcal{X}}
\newcommand{\cY}{\mathcal{Y}}
\newcommand{\cL}{\mathcal{L}}
\newcommand{\cJ}{\mathcal{J}}
\newcommand{\cC}{\mathcal{C}}
\newcommand{\cR}{\mathcal{R}}
\newcommand{\cK}{\mathcal{K}}
\newcommand{\cD}{\mathcal{D}}
\newcommand{\bbR}{\mathbb{R}}
\newcommand{\bbB}{\mathbb{B}}
\newcommand{\defeq}{\coloneqq}
\renewcommand{\vec}[1]{\bm{#1}}
\newcommand{\mat}[1]{\bm{#1}}
\DeclareMathOperator*{\argmin}{arg\,min}
    \patchcmd\algocf@Vline{\vrule}{\vrule \kern-0.4pt}{}{}
    \patchcmd\algocf@Vsline{\vrule}{\vrule \kern-0.4pt}{}{}
\definecolor{darkgrey}{gray}{0.3}
\definecolor{commentcolor}{gray}{0.3}
\crefname{algocf}{Algorithm}{Algorithms}
\let\cref@old@stepcounter\stepcounter
\def\stepcounter#1{%
  \cref@old@stepcounter{#1}%
  \cref@constructprefix{#1}{\cref@result}%
  \@ifundefined{cref@#1@alias}%
    {\def\@tempa{#1}}%
    {\def\@tempa{\csname cref@#1@alias\endcsname}}%
  \protected@edef\cref@currentlabel{%
    [\@tempa][\arabic{#1}][\cref@result]%
    \csname p@#1\endcsname\csname the#1\endcsname}}
\newtheorem{corollary}{Corollary}[]
\newtheorem{definition}{Definition}[]
\newtheorem{lemma}{Lemma}[]
\newtheorem{theorem}{Theorem}[]
\newcommand*\circled[1]{\tikz[baseline=(char.base)]{
            \node[shape=circle,draw,inner sep=1pt] (char) {\scriptsize #1};}}
\newcommand\numberthis{\addtocounter{equation}{1}\tag{\theequation}}
\newcounter{game}
\crefname{game}{Game}{Games}
\newcommand{\@@addplots}[3]{
    \node at (0,-2.41) {\scalebox{.75}[.73]{\includegraphics{plots/#2_error_#1.pdf}}};
    \node at (0, 0) {\scalebox{.75}[.73]{\includegraphics{plots/#2_regret_#1.pdf}}};
    \node (title) at (.35, 1.31) {\small \ref*{game:#2}~~#3};
}
\newcommand{\addplot@y}[3]{
    \iftoggle{create@game@label}{\label{game:#2}}{}%
    \begin{minipage}[t]{4.7cm}%
        \begin{tikzpicture}[baseline=(title.base)]
            \@@addplots{#1}{#2}{#3}%
            \node[rotate=90] at (-2.35, -.1) {\fontsize{8}{8}\selectfont Nash gap};
\node[rotate=90] at (-2.35, -2.45) {\fontsize{8}{8}\selectfont Avg. $\|\vec{\ell}^t - \vec{m}^t\|_2$};
        \end{tikzpicture}%
    \end{minipage}%
}
\newcommand{\addplot@noy}[3]{%
    \iftoggle{create@game@label}{\label{game:#2}}{}%
    \begin{minipage}[t]{4.4cm}%
        \begin{tikzpicture}[baseline=(title.base)]%
            \@@addplots{#1}{#2}{#3}
        \end{tikzpicture}%
    \end{minipage}%
}
\newcommand{\addplotA}{\toggletrue{create@game@label}\refstepcounter{game}%
\@ifstar{\addplot@noy{pcfrp_vs_sota}}{\addplot@y{pcfrp_vs_sota}}}
\newcommand{\addplotB}{\togglefalse{create@game@label}\@ifstar{\addplot@noy{lin_vs_quad_avg}}{\addplot@y{lin_vs_quad_avg}}}
\newcommand{\addplotC}{\togglefalse{create@game@label}\@ifstar{\addplot@noy{pdfcr}}{\addplot@y{pdfcr}}}
\newcommand{\linesty}[1]{\raisebox{1mm}{\tikz\draw[very thick,#1] (0,0)--(.55,0);}}
\definecolor{pred}{HTML}{FF0000}
\definecolor{pblue}{HTML}{0000FF}
\definecolor{pgreen}{HTML}{008000}
\definecolor{ppurple}{HTML}{800080}
\definecolor{pviolet}{HTML}{EE82EE}
\newcommand{\makelegendA}{\small
\fbox{
    Legend:\qquad
     \linesty{pred,dash dot}~\pcfrp{}\quad
     \linesty{pblue}~\cfrp{}\quad
     \linesty{pgreen,dashed}~LCFR\quad
     \linesty{ppurple,dotted}~DCFR
}
}
\newcommand{\makelegendB}{\small
\fbox{
    Legend:\qquad
     \begin{tabular}{ll}
     \linesty{pgreen,dotted}~\pcfrp{} (linear averaging)&
     \linesty{pred,dash dot}~\pcfrp{} (quadratic averaging)\\
     \linesty{pblue}~\cfrp{} (linear averaging)&
     \linesty{pviolet,dashed}~\cfrp{} (quadratic averaging)
     \end{tabular}
}
}
\newcommand{\makelegendC}{\small
\fbox{
    Legend:\qquad
     \begin{tabular}{ll}
     \linesty{pblue}~\cfrp{}&
     \linesty{ppurple,dotted}~DCFR\\
     \linesty{pred,dash dot}~\pcfrp{}&
     \linesty{pred,dashed}~\pcfrp{} (Quad. avg. prediction)\\
     \linesty{black}~Predictive DFCR&
     \linesty{black,dashed}~Predictive DCFR (Quad. avg. prediction)\\
     \end{tabular}
}
}
\LetLtxMacro{\baseproof}{\proof}
\LetLtxMacro{\endbaseproof}{\endproof}
\newcommand{\xhat}{\hat{\vec{x}}}
\newcommand{\zhat}{\hat{\vec{z}}}
\newcommand{\thetahat}{\hat{\vec{\theta}}}
\newcommand{\regu}{\varphi}
\renewcommand{\div}[2]{D_\regu(#1 \,\|\, #2)}
\newcommand{\rmp}{RM$^+$}
\newcommand{\prmp}{PRM$^+$}
\newcommand{\cfrp}{CFR$^+$}
\newcommand{\pcfrp}{PCFR$^+$}
        \newcommand*\patchAmsMathEnvironmentForLineno[1]{%
          \expandafter\let\csname old#1\expandafter\endcsname\csname #1\endcsname
          \expandafter\let\csname oldend#1\expandafter\endcsname\csname end#1\endcsname
          \renewenvironment{#1}%
                           {\linenomath\csname old#1\endcsname}%
                           {\csname oldend#1\endcsname\endlinenomath}%
        }%
        \newcommand*\patchBothAmsMathEnvironmentsForLineno[1]{%
          \patchAmsMathEnvironmentForLineno{#1}%
          \patchAmsMathEnvironmentForLineno{#1*}%
        }%
\begin{document}
    \maketitle

    \begin{abstract}
      Blackwell approachability is a framework for reasoning about repeated games with vector-valued payoffs.
      We introduce \emph{predictive} Blackwell approachability, where an estimate of the next payoff vector is given, and the decision maker tries to achieve better performance based on the accuracy of that estimator.
       In order to derive algorithms that achieve predictive Blackwell approachability, we start by showing a powerful connection between four well-known algorithms. \emph{Follow-the-regularized-leader (FTRL)} and \emph{online mirror descent (OMD)} are the most prevalent regret minimizers in online convex optimization. In spite of this prevalence, the \emph{regret matching (RM)} and \emph{regret matching$^+$ (\rmp)} algorithms have been preferred in the practice of solving large-scale games (as the local regret minimizers within the counterfactual regret minimization framework).
       We show that RM and \rmp{} are the algorithms that result from running FTRL and OMD, respectively, to select the halfspace to force at all times in the underlying Blackwell approachability game.
       By applying the predictive variants of FTRL or OMD to this connection, we obtain predictive Blackwell approachability algorithms, as well as predictive variants of RM and \rmp.
      In experiments across 18 common zero-sum extensive-form benchmark games, we show that predictive \rmp{} coupled with counterfactual regret minimization converges vastly faster than the fastest prior algorithms (\cfrp{}, DCFR, LCFR) across all games but two of the poker games, sometimes by two or more orders of magnitude.
    \end{abstract}

    \section{Introduction}

Extensive-form games (EFGs) are the standard class of games that can be used to model sequential interaction, outcome uncertainty, and imperfect information.
Operationalizing these models requires algorithms for computing game-theoretic equilibria. A recent success of EFGs is the use of Nash equilibrium for several recent poker AI milestones, such as essentially solving the game of limit Texas hold'em~\citep{Bowling15:Heads}, and beating top human poker pros in no-limit Texas hold'em with the \emph{Libratus} AI~\citep{Brown17:Superhuman}.
%
%
%
A central component of all recent poker AIs has been a fast iterative method for computing approximate Nash equilibrium at scale.
The leading approach is the \emph{counterfactual regret minimization (CFR)} framework, where the problem of minimizing regret over a player's strategy space of an EFG is decomposed into a set of regret-minimization problems over probability simplexes~\citep{Zinkevich07:Regret,Farina19:Regret}. Each simplex represents the probability over actions at a given decision point. The CFR setup can be combined with any regret minimizer for the simplexes. If both players in a zero-sum EFG repeatedly play each other using a CFR algorithm, the average strategies converge to a Nash equilibrium. Initially \emph{regret matching} (RM) was the prevalent simplex regret minimizer used in CFR. Later, it was found that by alternating strategy updates between the players, taking linear averages of strategy iterates over time, and using a variation of RM called \emph{regret-matching$^+$ (\rmp)}~\citep{Tammelin14:Solving} leads to significantly faster convergence in practice. This variation is called \cfrp{}. Both CFR and \cfrp{} guarantee convergence to Nash equilibrium at a rate of $T^{-1/2}$.
\cfrp{} has been used in every milestone in developing poker AIs in the last decade~\citep{Bowling15:Heads,Moravvcik17:DeepStack,Brown17:Superhuman,Brown19:Superhuman}. This is in spite of the fact that its theoretical rate of convergence is the same as that of CFR with RM~\citep{Tammelin14:Solving,Farina19:Online,Burch19:Revisiting}, and there exist algorithms which converge at a faster rate of $T^{-1}$~\citep{Hoda10:Smoothing,Kroer20:Faster,Farina19:Optimistic}.
In spite of this theoretically-inferior convergence rate, \cfrp{} has repeatedly performed favorably against $T^{-1}$ methods for EFGs~\cite{Kroer18:Solving,Kroer20:Faster,Farina19:Optimistic,Gao19:Increasing}. Similarly, the \emph{follow-the-regularized-leader (FTRL)} and \emph{online mirror descent (OMD)} regret minimizers, the two most prominent algorithms in online convex optimization, can be instantiated to have a better dependence on dimensionality than \rmp{} and RM, yet \rmp{} has been found to be superior~\citep{Brown17:Dynamic}.

There has been some interest in connecting RM to the more prevalent (and more general) online convex optimization algorithms such as OMD and FTRL, as well as classical first-order methods.
\citet{Waugh15:Unified} showed that RM is equivalent to Nesterov's dual averaging algorithm (which is an offline version of FTRL), though this equivalence requires specialized step sizes that are proven correct by invoking the correctness of RM itself.
\citet{Burch18:Time} studies RM and \rmp{}, and contrasts them with mirror descent and other prox-based methods.

We show a strong connection between RM, \rmp{}, and FTRL, OMD. This connection arises via \emph{Blackwell approachability}, a framework for playing games with vector-valued payoffs, where the goal is to get the average payoff to approach some convex target set.
Blackwell originally showed that this can be achieved by repeatedly \emph{forcing} the payoffs to lie in a sequence of halfspaces containing the target set~\cite{Blackwell56:analog}.
Our results are based on extending an equivalence between approachability and regret minimization~\cite{Abernethy11:Blackwell}.
We show that RM and \rmp{} are the algorithms that result from running FTRL and OMD, respectively, to select the halfspace to force at all times in the underlying Blackwell approachability game. The equivalence holds for any constant step size.
Thus, RM and \rmp{}, the two premier regret minimizers in EFG solving, turn out to follow exactly from the two most prevalent regret minimizers from online optimization theory.
This is surprising for several reasons:
\begin{itemize}[leftmargin=5mm,itemsep=1mm]
\item
  \rmp{} was originally discovered as a heuristic modification of RM in order to avoid accumulating large negative regrets.
   In contrast, OMD and FTRL were developed separately from each other.
 \item
   When applying FTRL and OMD directly to the strategy space of each player, \citet{Farina19:Optimistic,Farina20:Stochastic} found that FTRL seems to perform better than OMD, even when using stochastic gradients. This relationship is reversed here, as \rmp{} is \emph{vastly} faster numerically than RM.
 \item
   The dual averaging algorithm (whose simplest variant is an offline version of FTRL), was originally developed in order to have increasing weight put on more recent gradients, as opposed to OMD which has constant or decreasing weight~\citep{Nesterov09:Primal}. Here this relationship is reversed: OMD (which we show has a close link to \rmp{}) thresholds away old negative regrets, whereas FTRL keeps them around. Thus OMD ends up being \emph{more} reactive to recent gradients in our setting.
 \item
   FTRL and OMD both have a step-size parameter that needs to be set according to the magnitude of gradients, while RM and \rmp{} are parameter free (which is a desirable feature from a practical perspective).
   To reconcile this seeming contradiction, we show that the step-size parameter does not affect which halfspaces are forced, so any choice of step size leads to RM and \rmp{}.
\end{itemize}

Leveraging our connection, we study the algorithms that result from applying predictive variants of FTRL and OMD to choosing which halfspace to force. By applying predictive OMD we get the first predictive variant of \rmp{}, that is, one that has regret that depends on how good the sequence of predicted regret vectors is (as a side note of their paper, \citet{Brown19:Solving} also tried a heuristic for optimism/predictiveness by counting the last regret vector twice in \rmp{}, but this does not yield a predictive algorithm). We call our regret minimizer \emph{predictive regret matching$^+$} (\prmp{}).
We go on to instantiate CFR with \prmp{} using the two standard techniques---alternation and quadratic averaging----and find that it often converges much faster than \cfrp{} and every other prior CFR variant, sometimes by several orders of magnitude.
We show this on a large suite of common benchmark EFGs. However, we find that on poker games (except shallow ones), \emph{discounted CFR (DCFR)}~\citep{Brown19:Solving} is the fastest. We conclude that our algorithm based on \prmp{} yields the new state-of-the-art convergence rate for the remaining games. Our results also highlight the need to test on EFGs other than poker, as our non-poker results invert the superiority of prior algorithms as compared to recent results on poker.


    \section{Online Linear Optimization,\\Regret Minimizers, and Predictions}
\begin{figure*}\centering
    \begin{minipage}[t]{.490\linewidth}\small
        \makeatletter\let\@latex@error\@gobble\makeatother
        \SetInd{0.25em}{0.4em}
        \begin{algorithm}[H]\caption{(Predictive) FTRL}\label{algo:predictive ftrl}
            \DontPrintSemicolon
            $\vec{L}^0 \gets \vec{0} \in \bbR^n$\;
            \Hline{}
            \Fn{\normalfont\textsc{NextStrategy}($\vec{m}^{t}$)}{
                \Comment{\color{commentcolor} Set $\vec{m}^t = \vec{0}$ for non-predictive version}\vspace{1mm}
                \Return{$\displaystyle\argmin_{\xhat \in \cD} \mleft\{\langle \vec{L}^{t-1} + \vec{m}^t, \xhat\rangle + \frac{1}{\eta}\regu(\xhat)\mright\}$}\hspace*{-4cm}\;\label{line:ftrl next strategy}\vspace{1mm}
            }
            \Hline{}
            \Fn{\normalfont\textsc{ObserveLoss}($\vec{\ell}^{t}$)}{
                \vspace{2mm}$\vec{L}^{t} \gets \vec{L}^{t-1} + \vec{\ell}^t$\vspace{2mm}\;
            }
        \end{algorithm}
    \end{minipage}
    \hfill
    \begin{minipage}[t]{.490\linewidth}\small
        \makeatletter\let\@latex@error\@gobble\makeatother
        \SetInd{0.25em}{0.4em}
        \begin{algorithm}[H]\caption{(Predictive) OMD}\label{algo:predictive omd}
            \DontPrintSemicolon
                                $\vec{z}^0 \in \cD$ such that $\nabla\regu(\vec{z}^0) = \vec{0}$\;\label{line:omd setup}\vspace{.1mm}
            \Hline{}
            \Fn{\normalfont\textsc{NextStrategy}($\vec{m}^{t}$)}{
                \Comment{\color{commentcolor} Set $\vec{m}^t = \vec{0}$ for non-predictive version}\vspace{1mm}
                \Return{$\displaystyle\argmin_{\xhat \in \cD} \mleft\{\langle \vec{m}^t, \xhat\rangle + \frac{1}{\eta}\div{\xhat}{\vec{z}^{t-1}}\mright\}$}\hspace*{-3cm}\;\label{line:omd next xt}\vspace{1mm}
            }
            \Hline{}
            \Fn{\normalfont\textsc{ObserveLoss}($\vec{\ell}^{t}$)}{
                $\displaystyle\vec{z}^{t} \gets \argmin_{\zhat \in \cD}\mleft\{\langle \vec{\ell}^t, \zhat\rangle + \frac{1}{\eta}\div{\zhat}{\vec{z}^{t-1}}\mright\}$\;\label{line:omd next zt}
            }
        \end{algorithm}
    \end{minipage}
\end{figure*}

At each time $t$, an oracle for the \emph{online linear optimization (OLO)} problem supports the following two operations, in order:
\textsc{NextStrategy} returns a point $\vec{x}^t \in \cD \subseteq \bbR^n$, and
\textsc{ObserveLoss} receives a \emph{loss vector} $\vec{\ell}^t$ that is meant to evaluate the strategy $\vec{x}^t$ that was last output. Specifically, the oracle incurs a loss equal to $\langle \vec{\ell}^t, \vec{x}^t\rangle$. The loss vector $\vec{\ell}^t$ can depend on all past strategies that were output by the oracle.
The oracle operates \emph{online} in the sense that each strategy $\vec{x}^t$ can depend only on the decision $\vec{x}^1,\dots,\vec{x}^{t-1}$ output in the past, as well as the loss vectors $\vec{\ell}^1, \dots,\vec{\ell}^{t-1}$ that were observed in the past. No information about the future losses $\vec{\ell}^t,\vec{\ell}^{t+1},\dots$ is available to the oracle at time $t$.
The objective of the oracle is to make sure the \emph{regret}
\[
    R^T(\hat{\vec{x}}) \defeq \sum_{t=1}^T \langle\vec{\ell}^t, \vec{x}^t\rangle - \sum_{t=1}^T \langle \vec{\ell}^t, \hat{\vec{x}}\rangle =
    \sum_{t=1}^T \langle \vec{\ell}^t, \vec{x}^t - \hat{\vec{x}}\rangle,
\]
which measures the difference between the total loss incurred up to time $T$ compared to always using the \emph{fixed} strategy $\hat{\vec{x}}$,
does not grow too fast as a function of time $T$. Oracles that guarantee that $R^T(\hat{\vec{x}})$ grow sublinearly in $T$ in the worst case for all $\hat{\vec{x}} \in \cD$ (no matter the sequence of losses $\vec{\ell}^1, \dots, \vec{\ell}^T$ observed) are called \emph{regret minimizers}.
While most theory about regret minimizers is developed under the assumption that the domain $\cD$ is \emph{convex} and \emph{compact}, 
in this paper we will need to consider sets $\cD$ that are convex and closed, but unbounded (hence, not compact). 

\subsection{Incorporating Predictions} A recent trend in online learning has been concerned with constructing oracles that can incorporate \emph{predictions} of the next loss vector $\vec{\ell}^t$ in the decision making~\citep{Chiang12:Online,Rakhlin13:Online,Rakhlin13:Optimization}. Specifically, a  \emph{predictive} oracle differs from a regular (that is, non-predictive) oracle for OLO in that the \textsc{NextStrategy} function receives a \emph{prediction} $\vec{m}^t \in \bbR^n$ of the next loss $\vec{\ell}^t$ at all times $t$.
Conceptually, a ``good'' predictive regret minimizer should guarantee a superior regret bound than a non-predictive regret minimizer if $\vec{m}^t \approx \vec{\ell}^t$ at all times $t$. Algorithms exist that can guarantee this. For instance, it is always possible to construct an oracle that guarantees that $R^T = O(1 + \sum_{t=1}^T \|\vec{\ell}^t - \vec{m}^t\|^2)$, which implies that the regret stays constant when $\vec{m}^t$ is clairvoyant. In fact, even stronger regret bounds can be attained: for example, \citet{Syrgkanis15:Fast} show that the sharper \emph{Regret bounded by Variation in Utilities (RVU)} condition can be attained, while \citet{Farina19:Stable} focus on \emph{stable-predictivity}.


\subsection{FTRL, OMD, and their Predictive Variants}
\emph{Follow-the-regularized-leader (FTRL)}~\citep{Schwartz07:Primal} and \emph{online mirror descent (OMD)} are the two best known oracles for the online linear optimization problem. Their \emph{predictive} variants are relatively new and can be traced back to the works by \citet{Rakhlin13:Online} and \citet{Syrgkanis15:Fast}. Since the original FTRL and OMD algorithms correspond to predictive FTRL and predictive OMD when the prediction $\vec{m}^t$ is set to the $\vec{0}$ vector at all $t$, the implementation of FTRL in \cref{algo:predictive ftrl} and OMD in \cref{algo:predictive omd} captures both algorithms. In both algorithm, $\eta > 0$ is an arbitrary step size parameter, $\cD \subseteq \bbR^n$ is a convex and closed set, and $\regu : \cD \to \bbR_{\ge 0}$ is a $1$-strongly convex differentiable regularizer (with respect to some norm $\|\cdot\|$). The symbol $\div{}{}$ used in OMD denotes the \emph{Bregman divergence} associated with $\regu$, defined as $\div{\vec{x}}{\vec{c}} \defeq \regu(\vec{x}) - \regu(\vec{c}) - \langle \nabla\regu(\vec{c}), \vec{x} - \vec{c}\rangle$ for all $\vec{x},\vec{c}\in \cD$.

We state regret guarantees for (predictive) FTRL and (predictive) OMD in \cref{prop:oco bound}. Our statements are slightly more general than those by \citet{Syrgkanis15:Fast}, in that we (i) do not assume that the domain is a simplex, and (ii) do not use quantities that might be unbounded in non-compact domains $\cD$. A proof of the regret bounds is in \cref{app:proofs ftrl} of the full version of the paper\footnote{The full version of this paper is at \url{arxiv.org/abs/2007.14358}.} for FTRL and \cref{app:proofs omd} for OMD.

\begin{restatable}{proposition}{propocobounds}\label{prop:oco bound}
    At all times $T$, the regret cumulated by (predictive) FTRL (\cref{algo:predictive ftrl}) and (predictive) OMD (\cref{algo:predictive omd}) compared to \emph{any} strategy $\hat{\vec{x}} \in \cD$ is bounded as
    \begin{equation*}
        R^T(\xhat) \le \frac{\regu(\xhat)}{\eta} + \eta\sum_{t=1}^T \|\vec{\ell}^t - \vec{m}^t\|_*^2 - \frac{1}{c\eta} \sum_{t=1}^{T-1} \|\vec{x}^{t+1} - \vec{x}^t\|^2,
    \end{equation*}
    where $c = 4$ for FTRL and $c = 8$ for OMD, and where $\|\cdot\|_*$ denotes the dual of the norm $\|\cdot\|$ with respect to which $\regu$ is $1$-strongly convex.
\end{restatable}

    \cref{prop:oco bound} implies that, by appropriately setting the step size parameter (for example, $\eta = T^{-1/2}$), (predictive) FTRL and (predictive) OMD guarantee $R^T(\xhat) = O(T^{1/2})$ for all $\xhat$. Hence, (predictive) FTRL and (predictive) OMD are regret minimizers.

    \section{Blackwell Approachability}

\emph{Blackwell approachability}~\citep{Blackwell56:analog} generalizes the problem of playing a repeated two-player game to games whose utilites are vectors instead of scalars. In a Blackwell approachability game, at all times $t$, two players interact in this order: first, Player 1 selects an action $\vec{x}^t \in \cX$; then, Player 2 selects an action $\vec{y}^t \in \cY$; finally, Player 1 incurs the vector-valued payoff $\vec{u}(\vec{x}^t, \vec{y}^t) \in \bbR^d$, where $\vec{u}$ is a biaffine function. The sets $\cX,\cY$ of player actions are assumed to be compact convex sets. Player 1's objective is to guarantee that the average payoff converges to some desired closed convex \emph{target set} $S \subseteq \bbR^d$. Formally, given target set $S \subseteq \bbR^d$, Player 1's goal is to pick actions $\vec{x}^1, \vec{x}^2, \ldots \in \cX$ such that no matter the actions $\vec{y}^1, \vec{y}^2, \ldots \in \cY$ played by Player 2,
\begin{equation}\label{eq:blackwell goal}
    \min_{\hat{\vec{s}} \in S}\ \mleft\|\hat{\vec{s}} - \frac{1}{T}\sum_{t=1}^T \vec{u}(\vec{x}^t, \vec{y}^t)\mright\|_2 \to 0\quad\text{as}\quad T\to\infty.
\end{equation}

A central concept in the theory of Blackwell approachability is the following.
\begin{definition}[Approachable halfspace, forcing function]\label{def:forcing action}
  Let $(\cX, \cY, \vec{u}(\cdot, \cdot), S)$ be a Blackwell approachability game as described above and let $H \subseteq \bbR^d$ be a halfspace, that is, a set of the form $H = \{\vec{x} \in \bbR^d : \vec{a}^{\!\top}\vec{x} \le b\}$ for some $\vec{a} \in \bbR^d, b \in \bbR$. The halfspace $H$ is said to be \emph{forceable} if there exists a strategy of Player 1 that guarantees that the payoff is in $H$ no matter the actions played by Player 2. In symbols, $H$ is forceable if there exists $\vec{x}^* \in \cX$ such that for all $\vec{y}\in \cY$, $\vec{u}(\vec{x}^*, \vec{y}) \in H$. When this is the case, we call action $\vec{x}^*$ a \emph{forcing action} for $H$.
\end{definition}

Blackwell's \emph{approachability theorem}~\citep{Blackwell56:analog} states that goal \eqref{eq:blackwell goal} can be attained if and only if all halfspaces $H \supseteq S$ are forceable.
Blackwell approachability has a number of applications and connections to other problems in the online learning and game theory literature (e.g.,~\citep{Blackwell54:Controlled,Foster99:Proof,Hart00:Simple}).
%

In this paper we leverage the Blackwell approachability formalism to draw new connections between FTRL and OMD with RM and \rmp{}, respectively. 
We also introduce predictive Blackwell approachability, and show that it can be used to develop new state-of-the-art algorithms for simplex domains and imperfect-information extensive-form zero-sum games.

\section{From Online Linear Optimization to Blackwell Approachability}

\citet{Abernethy11:Blackwell} showed that it is always possible to convert a regret minimizer into an algorithm for a Blackwell approachability game (that is, an algorithm that chooses actions $\vec{x}^t$ at all times $t$ in such a way that goal~\eqref{eq:blackwell goal} holds no matter the actions $\vec{y}^1, \vec{y}^2, \dots$ played by the opponent).%
\footnote{Gordon's Lagrangian Hedging \citep{Gordon05:NoRegret,Gordon07:NoRegret}
        partially overlaps with the construction by \citet{Abernethy11:Blackwell}. 
        We did not investigate to what extent the \emph{predictive}
        point of view we adopted in the paper could apply to Gordon's result.
}

 In this section, we slightly extend their constructive proof by allowing more flexibility in the choice of the domain of the regret minimizer. This extra flexibility will be needed to show that RM and \rmp{} can be obtained directly from FTRL and OMD, respectively.

We start from the case where the target set in the Blackwell approachability game is a closed convex cone $C \subseteq \bbR^n$. As \cref{prop:olo to approachability} shows, \cref{algo:olo to approachability} provides a way of playing the Blackwell approachability game that guarantees that~\eqref{eq:blackwell goal} is satisfied (the proof is in \cref{app:proofs olo to approachability} in the full version of the paper). In broad strokes, \cref{algo:olo to approachability} works as follows (see also \cref{fig:olo to approachability pictorial}): the regret minimizer has as its decision space the polar cone to $C$ (or a subset thereof), and its decision is used as the normal vector in choosing a halfspace to force.  At time t, the algorithm plays a forcing action $\vec{x}^t$ for the halfspace $H_t$ induced by the last decision $\vec{\theta}^t$ output by the OLO oracle $\cL$. Then, $\cL$ incurs the loss $-\vec{u}(\vec{x}^t, \vec{y}^t)$, where $\vec{u}$ is the payoff function of the Blackwell approachability game.

\begin{figure}[t]
    \begin{minipage}[b]{8.8cm}\small
        \makeatletter\let\@latex@error\@gobble\makeatother
        \scalebox{.95}{\begin{algorithm}[H]\caption{From OLO to (predictive) approachability\!\!\!\!\!\!\!}
            \label{algo:olo to approachability}
            \DontPrintSemicolon
            \KwData{$\cD \subseteq \bbR^n$ convex and closed, s.t. $\cK \defeq C^\circ \cap \bbB^n_2 \subseteq \cD \subseteq C^\circ $\hspace*{-2cm}\\
                \hspace{0.9cm}$\cL$ online linear optimization algorithm for domain $\cD$\!\!\!}
            \BlankLine
            \Fn{\normalfont\textsc{NextStrategy}($\vec{v}^{t}$)}{
                \Comment{\color{commentcolor} Set $\vec{v}^t = \vec{0}$ for non-predictive version}\vspace{1mm}
                $\vec{\theta}^t \gets \cL\textsc{.NextStrategy($-\vec{v}^t$)}$\;\label{line:pass in prediction}
                \Return{$\vec{x}^t$ forcing action for $H^t \defeq \{\vec{x}: \langle\vec{\theta}^t), \vec{x} \rangle \le 0\}$}\;            }
            \Hline{}
            \Fn{\normalfont\textsc{ReceivePayoff}($\vec{u}(\vec{x}^{t}, \vec{y}^t)$)}{
                $\cL\textsc{.ObserveLoss}(-\vec{u}(\vec{x}^t, \vec{y}^t))$\;\label{line:pass in loss}
            }
        \end{algorithm}}
    \end{minipage}
\end{figure}
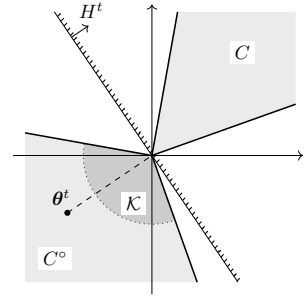
\begin{figure}[h]
    \centering
        \scalebox{.73}{\begin{tikzpicture}[scale=1.1,baseline=0]
            \fill[black!8!white] (0.5*.85, 2.8*.85) -- (0, 0) -- (2.8*.85, 1*.85) -- (2.8*.85, 2.8*.85) -- cycle;
            \fill[black!8!white] (-2.8*.75, 0.5*.75) -- (0, 0) -- (1*.75, -2.8*.75) -- (-2.8*.75,-2.8*.75) -- cycle;
            \draw[black,dotted,fill=black!20!white] (0,0) -- (-2.8*.4, 0.5*.4) arc (170:289.3:1.14) -- cycle;

            \draw[thick] (-2.8*.75, 0.5*.75) -- (0, 0) -- (1*.75, -2.8*.75);
            \draw[thick] (0.5*.85, 2.8*.85) -- (0, 0) -- (2.8*.85, 1*.85);

            \draw[] (1.9*.75, -2.8*.75) -- (-1.9*.85, 2.8*.85);
            \fill[thick,pattern=north east lines] (2.0*.75, -2.8*.75) -- (1.9*.75, -2.8*.75) -- (-1.9*.85, 2.8*.85) -- (-1.8*.85, 2.8*.85) -- cycle;
            \draw[->] (-1.9*.7,2.8*.7) -- ++(2.8*.1,1.9*.1);

            \fill (-2.8*.5,-1.9*.5) circle (.05);
            \draw[dashed] (-2.8*.5,-1.9*.5) -- (0,0);

            \node[fill=white,inner sep=2.5] at (1.5,1.7) {$C$};
            \node[fill=white,inner sep=2.5] at (-1.6,-1.7) {$C^\circ$};
            \node[fill=white,inner sep=2.5] at (-.3,-0.8) {$\cK$};
            \node[] at (-1.5,-0.7) {$\vec{\theta}^t$};
            \node[inner sep=2.5] at (-1.0,2.4) {$H^t$};

            \draw[thin,black,->] (-2.3, 0) -- (2.5, 0);
            \draw[thin,black,->] (0, -2.3) -- (0, 2.5);
        \end{tikzpicture}}
    \vspace{-1mm}
    \caption{Pictorial depiction of \cref{algo:olo to approachability}'s inner working: at all times $t$, the algorithm plays a forcing action for the halfspace $H^t$ induced by the last decision output by $\cL$.}
    \label{fig:olo to approachability pictorial}
    \vspace{-1mm}
\end{figure}

\begin{restatable}{proposition}{propolotoapproachability}\label{prop:olo to approachability}
    Let $(\cX, \cY, \vec{u}(\cdot, \cdot), C)$ be an approachability game, where $C \subseteq \bbR^n$ is a closed convex cone, such that each halfspace $H \supseteq C$ is approachable (\cref{def:forcing action}). Let $\cK \defeq C^\circ \cap \bbB_2^n$, where $C^\circ = \{\vec{x} \in \bbR^n : \langle\vec{x},\vec{y}\rangle \le 0\ \forall\vec{y}\in C\}$ denotes the \emph{polar cone} to $C$ and $\bbB_2^n \defeq \{\vec{x} \in \bbR^n : \|\vec{x}\|_2 \le 1\}$ is the unit ball. Finally, let $\cL$ be an oracle for the OLO problem (for example, the FTRL or OMD algorithm) whose domain of decisions is any closed convex set $\cD$, such that $\cK \subseteq \cD \subseteq C^\circ$. Then, at all times $T$, the distance between the average payoff cumulated by \cref{algo:olo to approachability} and the target cone $C$ is upper bounded as
    \[
        \min_{\hat{\vec{s}}\in C}\ \mleft\| \hat{\vec{s}} - \frac{1}{T}\sum_{t=1}^T \vec{u}(\vec{x}^t, \vec{y}^t)\mright\|_2 \le \frac{1}{T}\max_{\xhat \in \cK} R_\cL^T(\xhat),
    \]
    where $R_\cL^T(\xhat)$ is the regret cumulated by $\cL$ up to time $T$ compared to always playing $\hat{\vec{x}}\in\cK$.
\end{restatable}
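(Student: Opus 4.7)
The plan is to combine two ingredients: the Blackwell forcing property guaranteed by the choice of $\vec{x}^t$ in \cref{algo:olo to approachability}, and a duality identity that rewrites the distance from a point to a cone as a maximization over the polar cone.

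First, I would rewrite the quantity to be bounded. For a closed convex cone $C \subseteq \bbR^n$ and any $\vec{p} \in \bbR^n$, Moreau's decomposition $\vec{p} = \Pi_C(\vec{p}) + \Pi_{C^\circ}(\vec{p})$ (or, equivalently, the support-function description of $C$ combined with positive homogeneity) yields the identity
\[
    \min_{\hat{\vec{s}} \in C}\|\hat{\vec{s}} - \vec{p}\|_2 \;=\; \max_{\xhat \in \cK} \langle \xhat, \vec{p}\rangle.
\]
Specializing to $\vec{p} = \bar{\vec{u}}^T \defeq \tfrac{1}{T}\sum_{t=1}^T \vec{u}(\vec{x}^t, \vec{y}^t)$ reduces the proposition to showing that $T\langle \xhat, \bar{\vec{u}}^T\rangle \le R_\cL^T(\xhat)$ for every $\xhat \in \cK$.

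Second, I would exploit the forcing step of the algorithm. Because $\cD \subseteq C^\circ$, every iterate $\vec{\theta}^t$ produced by $\cL$ lies in $C^\circ$, so the halfspace $H^t = \{\vec{z} : \langle \vec{\theta}^t, \vec{z}\rangle \le 0\}$ contains $C$ and is forceable by hypothesis. Thus the forcing action $\vec{x}^t$ enforces $\langle \vec{\theta}^t, \vec{u}(\vec{x}^t, \vec{y}^t)\rangle \le 0$ no matter the response $\vec{y}^t$, and summing over $t = 1,\dots,T$ shows that the cumulative loss $\sum_t \langle -\vec{u}(\vec{x}^t, \vec{y}^t), \vec{\theta}^t\rangle$ incurred by $\cL$ is nonnegative. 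Plugging this into the definition of regret with $\vec{\ell}^t = -\vec{u}(\vec{x}^t, \vec{y}^t)$ gives, for every $\xhat \in \cK \subseteq \cD$,
\[
    R_\cL^T(\xhat) \;=\; \sum_{t=1}^T \langle -\vec{u}(\vec{x}^t, \vec{y}^t), \vec{\theta}^t \rangle \;+\; \sum_{t=1}^T \langle \vec{u}(\vec{x}^t, \vec{y}^t), \xhat\rangle \;\ge\; T\langle \bar{\vec{u}}^T, \xhat\rangle.
\]
Taking the maximum over $\xhat \in \cK$ on both sides, combined with the duality identity of the first step, completes the argument.

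The main subtlety, to my mind, is keeping track of the sandwich $\cK \subseteq \cD \subseteq C^\circ$: this is exactly what is needed so that (i) each element of $\cK$ is a legitimate comparator in the regret $R_\cL^T(\xhat)$ of $\cL$, and (ii) each $\vec{\theta}^t$ lies in $C^\circ$ so that the halfspace $H^t$ contains $C$ and is therefore forceable. Once this is set up, the only non-routine ingredient is the cone-distance duality identity; everything else is a direct rearrangement of the definition of regret together with the forcing inequality. Note that the prediction $\vec{v}^t$ plays no role in this argument, so the same proof covers both the predictive and non-predictive variants.
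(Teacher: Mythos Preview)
Your proposal is correct and follows essentially the same approach as the paper: both combine the cone-distance duality identity $\min_{\hat{\vec{s}}\in C}\|\hat{\vec{s}}-\vec{p}\|_2=\max_{\xhat\in\cK}\langle\xhat,\vec{p}\rangle$ with the forcing inequality $\langle\vec{\theta}^t,\vec{u}(\vec{x}^t,\vec{y}^t)\rangle\le 0$, and then read off the bound from the definition of $R_\cL^T$. The only cosmetic difference is that you establish the pointwise inequality $T\langle\xhat,\bar{\vec{u}}^T\rangle\le R_\cL^T(\xhat)$ before maximizing, whereas the paper adds and subtracts $\sum_t\langle\vec{\theta}^t,\vec{\ell}^t\rangle$ inside the max directly; your remarks on the role of the sandwich $\cK\subseteq\cD\subseteq C^\circ$ and the irrelevance of $\vec{v}^t$ are also accurate.
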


As $\cK$ is compact, by virtue of $\cL$ being a regret minimizer, $\nicefrac{1}{T} \cdot \max_{\hat{\vec{x}} \in \cK}R^T(\hat{\vec{x}})\to 0$ as $T \to \infty$, \cref{algo:olo to approachability} satisfies the Blackwell approachability goal \eqref{eq:blackwell goal}.
The fact that \cref{prop:olo to approachability} applies only to conic target sets does not limit its applicability. Indeed, \citet{Abernethy11:Blackwell} showed that any Blackwell approachability game with a non-conic target set can be efficiently transformed to another one with a conic target set. In this paper, we only need to focus on conic target sets.

The construction by \citet{Abernethy11:Blackwell} coincides with \cref{prop:olo to approachability} in the special case where the domain $\cD$ is set to $\cD = \cK$. In the next section, we will need our added flexibility in the choice of $\cD$: in order to establish the connection between \rmp{} and OMD, it is necessary to set $\cD = C^\circ \neq \cK$.

\section{Connecting FTRL, OMD with RM, \rmp{}}\label{sec:ftrl omd rm rmp}
\begin{figure*}\centering
    \begin{minipage}[t]{.490\linewidth}\small
        \makeatletter\let\@latex@error\@gobble\makeatother
        \begin{algorithm}[H]\caption{(Predictive) regret matching}\label{algo:prm}
            \DontPrintSemicolon
            $\vec{r}^0 \gets \vec{0} \in \bbR^n,\ \ \vec{x}^0 \gets \vec{1}/n \in \Delta^n$\;
            \Hline{}
            \Fn{\normalfont\textsc{NextStrategy}($\vec{m}^{t}$)}{
                \Comment{\color{commentcolor} Set $\vec{m}^t = \vec{0}$ for non-predictive version}\vspace{1mm}
                $\displaystyle\vec{\theta}^t \gets [\vec{r}^{t-1} + \langle\vec{m}^t,\vec{x}^{t-1}\rangle \vec{1} - \vec{m}^t]^+$\;
                \textbf{if} $\vec{\theta}^t \neq \vec{0}$ \textbf{return} $\vec{x}^t \gets \vec{\theta}^t \ /\ \|\vec{\theta}^t\|_1$\;
                \textbf{else} \hspace{0.715cm}\textbf{return} $\vec{x}^t \gets $ arbitrary point in $\Delta^{\!n}$\hspace*{-1cm}\;
            }
            \Hline{}
            \Fn{\normalfont\textsc{ObserveLoss}($\vec{\ell}^{t}$)}{
                $\displaystyle\vec{r}^t \gets \vec{r}^{t-1} + \langle\vec{\ell}^t,\vec{x}^t\rangle \vec{1} - \vec{\ell}^t$\;\vspace{.5mm}
            }
        \end{algorithm}
    \end{minipage}
    \hfill
    \begin{minipage}[t]{.490\linewidth}\small
        \makeatletter\let\@latex@error\@gobble\makeatother
        \begin{algorithm}[H]\caption{(Predictive) regret matching$^+$\!\!\!}\label{algo:prmp}
            \DontPrintSemicolon
            $\vec{z}^0 \gets \vec{0} \in \bbR^n,\ \ \vec{x}^0 \gets \vec{1}/n \in \Delta^n$\;
            \Hline{}
            \Fn{\normalfont\textsc{NextStrategy}($\vec{m}^{t}$)}{
                \Comment{\color{commentcolor} Set $\vec{m}^t = \vec{0}$ for non-predictive version}\vspace{1mm}
                $\displaystyle\vec{\theta}^t \gets [\vec{z}^{t-1} + \langle\vec{m}^t,\vec{x}^{t-1}\rangle \vec{1} - \vec{m}^t]^+$\;
                \textbf{if} $\vec{\theta}^t \neq \vec{0}$ \textbf{return} $\vec{x}^t \gets \vec{\theta}^t \ /\ \|\vec{\theta}^t\|_1$\;
                \textbf{else} \hspace{0.715cm}\textbf{return} $\vec{x}^t \gets $ arbitrary point in $\Delta^{\!n}$\hspace*{-1cm}\;
            }
            \Hline{}
            \Fn{\normalfont\textsc{ObserveLoss}($\vec{\ell}^{t}$)}{
                $\displaystyle\vec{z}^t \gets [\vec{z}^{t-1} + \langle\vec{\ell}^t,\vec{x}^t\rangle \vec{1} - \vec{\ell}^t]^+$\;
            }
        \end{algorithm}
    \end{minipage}
    \vspace{-4mm}
\end{figure*}

Constructing a regret minimizer for a simplex domain $\Delta^n \defeq \{\vec{x} \in \bbR_{\ge 0} : \|\vec{x}\|_1=1\}$ can be reduced to constructing an algorithm for a particular Blackwell approachability game $\Gamma \defeq (\Delta^n, \bbR^n, \vec{u}(\cdot, \cdot), \bbR_{\le 0}^n)$ that we now describe~\cite{Hart00:Simple}. For all $i\in\{1,\dots,n\}$, the $i$-th component of the vector-valued payoff function $\vec{u}$ measures the change in regret incurred at time $t$, compared to always playing the $i$-th vertex $\vec{e}_i$ of the simplex. Formally, $\vec{u}: \Delta^n \times \bbR^n \to \bbR^n$ is defined as
\begin{equation}\label{eq:blackwell simplex utility}
    \vec{u}(\vec{x}^t, \vec{\ell}^t) = \langle \vec{\ell}^t, \vec{x}^t\rangle \vec{1} - \vec{\ell}^t, 
\end{equation}
where $\vec{1}$ is the $n$-dimensional vector whose components are all $1$.
It is known that $\Gamma$ is such that the halfspace 
$
    H_{\vec{a}} \defeq \{\vec{x} \in \bbR^n : \langle\vec{x}, \vec{a}\rangle \le 0 \} \supseteq \bbR^n_{\le 0}
$
is forceable (\cref{def:forcing action}) for all $\vec{a} \in \bbR^n_{\ge 0}$. A forcing action for $H_{\vec{a}}$ is given by  $\vec{g}(\vec{a}) \defeq \vec{a} / \|\vec{a}\|_1 \in \Delta^n$ when $\vec{a} \neq \vec{0}$; when $\vec{a} = \vec{0}$, any $\vec{x}\in\Delta^n$ is a forcing action.
The following is known.
\begin{restatable}{lemma}{lemsimplextoblackwell}
The regret
$
    R^T(\xhat) = \frac{1}{T}\sum_{t=1}^T \langle \vec{\ell}^t, \vec{x}^t - \xhat\rangle
$
cumulated up to any time $T$ by the decisions $\vec{x}^1, \dots,\vec{x}^T \in \Delta^n$
compared to any $\xhat \in \Delta^n$
is related to the distance of the average Blackwell payoff from the target cone $\bbR_{\le 0}^n$ as
\begin{align}
  \frac{1}{T} R^T(\xhat) \le \min_{\hat{\vec{s}}\in\bbR_{\le 0}^n}\mleft\|\hat{\vec{s}} - \frac{1}{T}\sum_{t=1}^T \vec{u}(\vec{x}^t, \vec{\ell}^t)\mright\|_2.
\label{eq:simplex to blackwell}
\end{align}
So, a strategy for the Blackwell approachability game $\Gamma$ is a regret-minimizing
strategy for the simplex domain $\Delta^n$.
\end{restatable}

When the approachability game $\Gamma$ is solved by means of the constructive proof of Blackwell's approachability theorem~\citep{Blackwell56:analog}, one recovers a particular regret minimizer for the domain $\Delta^n$ known as the \emph{regret matching (RM)} algorithm~\cite{Hart00:Simple}. The same cannot be said for the closely related \rmp{} algorithm~\cite{Tammelin14:Solving}, which converges significantly faster in practice than RM, as has been reported many times.

We now uncover deep and surprising connections between RM, \rmp{} and the OLO algorithms FTRL, OMD by solving $\Gamma$ using \cref{algo:olo to approachability}.
  Let $\cL^\text{ftrl}_\eta$ be the FTRL algorithm instantiated over the conic domain $\cD = \bbR^n_{\ge 0}$ with the $1$-strongly convex regularizer $\regu(\vec{x}) = \nicefrac{1}{2}\,\|\vec{x}\|^2_2$ and an arbitrary step size parameter $\eta$. Similarly, let $\cL^\text{omd}_\eta$ be the OMD algorithm instantiated over the same domain $\cD = \bbR^n_{\ge 0}$ with the same convex regularizer $\regu(\vec{x}) = \nicefrac{1}{2}\,\|\vec{x}\|^2_2$. Since $\bbR_{\ge 0}^n = (\bbR^n_{\le 0})^\circ$, $\cD$ satisfies the requirements of \cref{prop:olo to approachability}. So, $\cL^\text{ftrl}_\eta$ and $\cL^\text{omd}_\eta$ can be plugged into \cref{algo:olo to approachability} to compute a strategy for the Blackwell approachability game $\Gamma$. When that is done, the following can be shown (all proofs for this section are in \cref{app:proofs rm rmp} in the full version of the paper).

\begin{restatable}[FTRL reduces to RM]{theorem}{thmrmisftrl}\label{thm:rm is ftrl}
  For all $\eta > 0$, when \cref{algo:olo to approachability} is set up with $\cD = \bbR^n_{\ge 0}$ and regret minimizer $\cL^\text{\normalfont ftrl}_\eta$ to play $\Gamma$, it produces the same iterates as the RM algorithm.
\end{restatable}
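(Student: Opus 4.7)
The plan is to explicitly trace what Algorithm~\ref{algo:olo to approachability} computes when $\cL = \cL^{\text{ftrl}}_\eta$, and observe that it coincides step-for-step with RM. The key reason this works is that FTRL with quadratic regularization on $\bbR_{\ge 0}^n$ admits a closed-form solution given by coordinate-wise positive-part truncation, and the subsequent normalization of $\vec{\theta}^t$ into a forcing action cancels all dependence on $\eta$.

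First, I would unwind the loss sequence fed into $\cL^{\text{ftrl}}_\eta$. By \cref{line:pass in loss}, after $t$ rounds the accumulated loss vector maintained by FTRL is $\vec{L}^t = -\sum_{\tau=1}^t \vec{u}(\vec{x}^\tau, \vec{\ell}^\tau)$, where $\vec{u}$ is the simplex payoff in~\eqref{eq:blackwell simplex utility}. Comparing to the standard definition of RM, this is precisely the negation of the cumulative regret vector $\vec{R}^t \defeq \sum_{\tau=1}^t (\langle \vec{\ell}^\tau, \vec{x}^\tau\rangle \vec{1} - \vec{\ell}^\tau)$ tracked by RM. So I would establish (by induction) the invariant $\vec{L}^t = -\vec{R}^t$.

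Second, I would evaluate \cref{line:ftrl next strategy} in closed form. With $\regu(\vec{x}) = \tfrac12\|\vec{x}\|_2^2$ and $\cD = \bbR^n_{\ge 0}$, the subproblem
\[
\vec{\theta}^t = \argmin_{\hat{\vec{x}} \ge \vec{0}} \mleft\{ \langle \vec{L}^{t-1}, \hat{\vec{x}}\rangle + \tfrac{1}{2\eta} \|\hat{\vec{x}}\|_2^2 \mright\}
\]
decouples across coordinates, and each one-dimensional problem $\min_{\hat x_i \ge 0} -R_i^{t-1}\hat x_i + \tfrac{1}{2\eta} \hat x_i^2$ is solved by $\hat x_i = \eta [R_i^{t-1}]^+$. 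Hence $\vec{\theta}^t = \eta [\vec{R}^{t-1}]^+$, where $[\,\cdot\,]^+$ denotes coordinate-wise positive part.

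Third, I would apply the forcing action rule. Since $\vec{\theta}^t \in \bbR^n_{\ge 0}$, the halfspace $H^t = \{\vec{z} : \langle \vec{\theta}^t, \vec{z}\rangle \le 0\}$ is of the type described after equation~\eqref{eq:blackwell simplex utility}, so when $\vec{\theta}^t \neq \vec{0}$ a forcing action is
\[
\vec{x}^t = \vec{g}(\vec{\theta}^t) = \frac{\vec{\theta}^t}{\|\vec{\theta}^t\|_1} = \frac{\eta[\vec{R}^{t-1}]^+}{\eta\|[\vec{R}^{t-1}]^+\|_1} = \frac{[\vec{R}^{t-1}]^+}{\|[\vec{R}^{t-1}]^+\|_1},
\]
which is exactly the RM strategy; the stepsize $\eta$ drops out. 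When $\vec{\theta}^t = \vec{0}$ (equivalently, all regrets are nonpositive), both algorithms are free to select any point in $\Delta^n$, so we may take the same tie-breaking convention (e.g.\ uniform) in both. Combined with the invariant $\vec{L}^t = -\vec{R}^t$ from Step~1, this yields identical iterates round by round, proving the theorem.

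The steps are essentially mechanical; the one place requiring a little care is verifying that the scale-invariance of the forcing map $\vec{g}$ truly removes $\eta$, so that \emph{every} choice of stepsize produces the same iterates as parameter-free RM --- this is the conceptually surprising point the theorem is designed to highlight, rather than a technical obstacle.
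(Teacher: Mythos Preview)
Your proposal is correct and follows essentially the same approach as the paper: both compute the closed-form FTRL iterate as $\eta$ times the coordinate-wise positive part of the cumulative regret vector, then observe that the normalization in the forcing action $\vec{g}(\vec{\theta}^t)=\vec{\theta}^t/\|\vec{\theta}^t\|_1$ cancels $\eta$, yielding exactly the RM update. Your treatment is slightly more explicit about the invariant $\vec{L}^t=-\vec{R}^t$ and the tie-breaking when $\vec{\theta}^t=\vec{0}$, but there is no substantive difference in method.
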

\begin{restatable}[OMD reduces to \rmp{}]{theorem}{thmrmpisomd}\label{thm:rmp is omd}
  For all $\eta \!>\! 0$, when \cref{algo:olo to approachability} is set up with $\cD \!=\! \bbR^n_{\ge 0}$ and regret minimizer $\cL^\text{\normalfont omd}_\eta$ to play $\Gamma$, it produces the same iterates as the \rmp{} algorithm.
\end{restatable}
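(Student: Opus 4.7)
The plan is to expand the OMD iterate in closed form and verify that the resulting strategy fed into Algorithm~\ref{algo:olo to approachability} matches \rmp{} step by step. Recall that \rmp{} maintains a nonnegative vector of thresholded cumulative regrets $\vec{Q}^t = [\vec{Q}^{t-1} + \vec{r}^t]^+$, initialized at $\vec{Q}^0 = \vec{0}$, and outputs the strategy $\vec{Q}^{t-1}/\|\vec{Q}^{t-1}\|_1$ (any point of $\Delta^n$ when $\vec{Q}^{t-1} = \vec{0}$). Here $\vec{r}^t = \langle\vec{\ell}^t, \vec{x}^t\rangle\vec{1} - \vec{\ell}^t$ is exactly the Blackwell payoff $\vec{u}(\vec{x}^t, \vec{\ell}^t)$ from \eqref{eq:blackwell simplex utility}, so the loss fed to $\cL^\text{omd}_\eta$ at line~\ref{line:pass in loss} is $-\vec{r}^t$.

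The next step is to instantiate OMD's update. Because $\regu(\vec{x}) = \nicefrac{1}{2}\|\vec{x}\|_2^2$ induces the Bregman divergence $\div{\vec{x}}{\vec{z}} = \nicefrac{1}{2}\|\vec{x} - \vec{z}\|_2^2$, line~\ref{line:omd next zt} reduces to a Euclidean proximal step:
\[
    \vec{z}^t = \argmin_{\hat{\vec{z}} \in \bbR^n_{\ge 0}} \Bigl\{\langle -\vec{r}^t, \hat{\vec{z}}\rangle + \frac{1}{2\eta}\|\hat{\vec{z}} - \vec{z}^{t-1}\|_2^2\Bigr\} = [\vec{z}^{t-1} + \eta \vec{r}^t]^+,
\]
where $[\,\cdot\,]^+$ denotes the coordinate-wise positive part, i.e., the Euclidean projection onto $\bbR^n_{\ge 0}$. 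Since $\nabla\regu(\vec{0}) = \vec{0}$, line~\ref{line:omd setup} gives $\vec{z}^0 = \vec{0}$; and the non-predictive version ($\vec{m}^t = \vec{0}$) of line~\ref{line:omd next xt} yields $\vec{\theta}^t = \vec{z}^{t-1}$. Recalling that the forcing action for $H^t = \{\vec{z} : \langle \vec{\theta}^t, \vec{z}\rangle \le 0\}$ is $\vec{\theta}^t/\|\vec{\theta}^t\|_1$ (any point of $\Delta^n$ when $\vec{\theta}^t = \vec{0}$), Algorithm~\ref{algo:olo to approachability} ends up playing $\vec{x}^t = \vec{z}^{t-1}/\|\vec{z}^{t-1}\|_1$.

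A one-line induction closes the argument: by positive homogeneity of $[\,\cdot\,]^+$, if $\vec{z}^{t-1} = \eta\, \vec{Q}^{t-1}$ then
\[
    \vec{z}^t = [\eta\, \vec{Q}^{t-1} + \eta\, \vec{r}^t]^+ = \eta\,[\vec{Q}^{t-1} + \vec{r}^t]^+ = \eta\, \vec{Q}^t,
\]
and the base case $\vec{z}^0 = \vec{0} = \eta\, \vec{Q}^0$ holds. The stepsize $\eta$ therefore cancels after normalization, giving $\vec{x}^t = \vec{Q}^{t-1}/\|\vec{Q}^{t-1}\|_1$ — exactly the \rmp{} iterate — and explaining why every positive $\eta$ produces identical play. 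I anticipate no genuine obstacle beyond verifying that the proximal step over $\bbR^n_{\ge 0}$ collapses to coordinate-wise thresholding; this is also precisely where the extra flexibility granted by Proposition~\ref{prop:olo to approachability} (allowing $\cD = C^\circ = \bbR^n_{\ge 0}$ rather than the bounded $\cK$) becomes essential, since any bounded choice of $\cD$ would destroy the clean $[\,\cdot\,]^+$ update and thereby the exact correspondence with \rmp{}.
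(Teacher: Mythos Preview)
Your proof is correct and follows essentially the same approach as the paper: both compute the closed-form OMD proximal step over $\bbR^n_{\ge 0}$ to obtain $\vec{z}^t = [\vec{z}^{t-1} + \eta\,\vec{u}(\vec{x}^t,\vec{\ell}^t)]^+$, note that $\vec{\theta}^t = \vec{z}^{t-1}$ in the non-predictive case, and observe that $\eta$ only rescales the iterates and therefore cancels in the normalization $\vec{\theta}^t/\|\vec{\theta}^t\|_1$. Your explicit induction $\vec{z}^t = \eta\,\vec{Q}^t$ is a slightly more detailed rendering of what the paper phrases as ``the only effect of $\eta$ is a rescaling of all iterates by a constant,'' but the argument is the same.
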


Pseudocode for RM and \rmp{} is given in \cref{algo:prm,algo:prmp} (when $\vec{m}^t = \vec{0}$). In hindsight, the equivalence between RM and \rmp{} with FTRL and OMD is clear. The computation of $\vec{\theta}^t$ on Line~3 in both PRM and \prmp{} corresponds to the closed-form solution for the minimization problems of Line~4 in FTRL and Line~3 in OMD, respectively, in accordance with Line~2 of \cref{algo:olo to approachability}. Next, Lines~4 and~5 in both PRM and \prmp{} compute the forcing action required in Line~3 of \cref{algo:olo to approachability} using the function $\vec{g}$ defined above. Finally, in accordance with Line~6 of \cref{algo:olo to approachability}, Line~7 of PRM corresponds to Line~6 of FTRL, and Line~7 of \prmp{} to Line~5 of OMD.


\section{Predictive Blackwell Approachability, and Predictive RM and \rmp{}}

It is natural to wonder whether it is possible to devise an algorithm for Blackwell approachability games that is able to guarantee faster convergence to the target set when good predictions of the next vector payoff are available. We call this setup \emph{predictive Blackwell approachability}. We answer the question in the positive by leveraging \cref{prop:olo to approachability}. Since the loss incurred by the regret minimizer is $\vec{\ell}^t \defeq -\vec{u}(\vec{x}^t, \vec{y}^t)$ (\cref{line:pass in loss} in \cref{algo:olo to approachability}), any prediction $\vec{v}^{t}$ of the  payoff $\vec{u}(\vec{x}^t, \vec{y}^t)$ is naturally a prediction about the next loss incurred by the underlying regret minimizer $\cL$ used in \cref{algo:olo to approachability}. Hence, as long as the prediction is propagated as in \cref{line:pass in prediction} in \cref{algo:olo to approachability}, \cref{prop:olo to approachability} holds verbatim. In particular, we prove the following. All proofs for this section are in \cref{app:prm prmp} in the full version of the paper.

\begin{restatable}{proposition}{proppredictiveblackwell}
  Let $(\cX,\cY,\vec{u}(\cdot,\cdot),S)$ be a Blackwell approachability game, where every halfspace $H \supseteq S$ is approachable (\cref{def:forcing action}). For all $T$, given predictions $\vec{v}^t$ of the payoff vectors, there exist algorithms for playing the game (that is, pick $\vec{x}^t \in \cX$ at all $t$) that guarantee
  \begin{align*}
        &\min_{\hat{\vec{s}}\in S}\mleft\| \hat{\vec{s}}\!-\! \frac{1}{T}\!\sum_{t=1}^T \!\vec{u}(\vec{x}^t\!\!, \vec{y}^t)\mright\|_2
        \!\!\!\!\le\!\!\frac{1}{\sqrt{T}}\!\mleft(\!\! 1 \!+\! \frac{2}{T}\! \sum_{t=1}^T \!\|\vec{u}(\vec{x}^t\!\!,\! \vec{y}^t) \!-\! \vec{v}^t\|_2^2\!\mright)\!.
  \end{align*}
\end{restatable}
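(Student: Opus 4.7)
The plan is to exhibit an explicit algorithm by instantiating \cref{algo:olo to approachability} with a predictive OLO oracle and then chaining together the bound in \cref{prop:olo to approachability} with the predictive regret bound of \cref{prop:oco bound}. First I would reduce to the conic case: by the construction of \citet{Abernethy11:Blackwell}, any approachability game whose forceable halfspaces cover $S$ can be transformed into an equivalent game whose target set is a closed convex cone $C \subseteq \bbR^n$, with the reduction being linear in the payoffs (so a prediction of the original payoff translates into a prediction of the transformed payoff). Hence it suffices to prove the bound when $S = C$ is a closed convex cone.

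Next I would instantiate \cref{algo:olo to approachability} on the conic game with $\cD = \cK \defeq C^\circ \cap \bbB_2^n$ and with $\cL$ taken to be predictive OMD (equivalently, predictive FTRL) using the Euclidean regularizer $\regu(\vec{x}) = \tfrac12 \|\vec{x}\|_2^2$, which is $1$-strongly convex with respect to $\|\cdot\|_2$, whose dual is itself. The key bookkeeping point is that on \cref{line:pass in loss} the oracle is fed the loss $\vec{\ell}^t = -\vec{u}(\vec{x}^t, \vec{y}^t)$, so a prediction $\vec{v}^t$ of the payoff corresponds exactly to the prediction $\vec{m}^t = -\vec{v}^t$ passed on \cref{line:pass in prediction}, and therefore $\|\vec{\ell}^t - \vec{m}^t\|_2 = \|\vec{u}(\vec{x}^t,\vec{y}^t) - \vec{v}^t\|_2$.

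Combining the two bounds, \cref{prop:olo to approachability} gives
\[
    \min_{\hat{\vec{s}}\in C}\ \mleft\| \hat{\vec{s}} - \frac{1}{T}\sum_{t=1}^T \vec{u}(\vec{x}^t, \vec{y}^t)\mright\|_2 \le \frac{1}{T}\max_{\xhat \in \cK} R_\cL^T(\xhat),
\]
while \cref{prop:oco bound} (dropping the negative term) bounds $R_\cL^T(\xhat)$ by $\regu(\xhat)/\eta + \eta \sum_{t=1}^T \|\vec{u}(\vec{x}^t,\vec{y}^t) - \vec{v}^t\|_2^2$. Because $\cK \subseteq \bbB_2^n$, we have $\regu(\xhat) \le \tfrac12$ for every $\xhat \in \cK$. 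Choosing the stepsize $\eta = 1/\sqrt{T}$ then yields
\[
    \min_{\hat{\vec{s}}\in C}\ \mleft\| \hat{\vec{s}} - \frac{1}{T}\sum_{t=1}^T \vec{u}(\vec{x}^t, \vec{y}^t)\mright\|_2 \le \frac{1}{2\sqrt{T}} + \frac{1}{T\sqrt{T}} \sum_{t=1}^T \|\vec{u}(\vec{x}^t,\vec{y}^t) - \vec{v}^t\|_2^2,
\]
which is tighter than the bound claimed in the statement (certainly implying it after absorbing the $\tfrac12$ into the constant $1$ and the coefficient into $2$).

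The main obstacle I expect is purely bookkeeping rather than conceptual: tracking signs when converting payoff predictions into loss predictions, verifying that the reduction from non-conic $S$ to a conic target set commutes with predictions, and handling the fact that the stepsize $\eta = 1/\sqrt T$ depends on the horizon (which can be side-stepped either by assuming a known $T$ as in the statement's ``for all $T$'' quantifier, or by a standard doubling trick). None of these require new ideas beyond what is already set up in \cref{prop:olo to approachability,prop:oco bound}.
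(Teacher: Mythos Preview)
Your approach is essentially the same as the paper's: reduce to the conic case via \citet{Abernethy11:Blackwell}, run \cref{algo:olo to approachability} with a Euclidean predictive OLO oracle, chain \cref{prop:olo to approachability} with \cref{prop:oco bound}, bound $\regu(\xhat)\le\tfrac12$ on $\cK$, and set $\eta=1/\sqrt{T}$. The paper happens to pick $\cD=S^\circ$ and predictive FTRL rather than your $\cD=\cK$ and OMD, but either choice is covered by \cref{prop:olo to approachability} and yields the same arithmetic.

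One small bookkeeping slip worth flagging: the sentence ``it suffices to prove the bound when $S=C$ is a closed convex cone'' is not literally true, because the Abernethy et al.\ reduction from a general convex $S$ to a cone costs a factor of $2$ in the distance bound (the paper states this explicitly). Your conic-case inequality $\tfrac{1}{2\sqrt{T}}+\tfrac{1}{T\sqrt{T}}\sum_t\|\cdot\|_2^2$ is indeed tighter than the statement, but only for conic $S$; for general $S$ you must multiply by $2$, which lands you exactly on the stated constants $1$ and $2$ rather than strictly below them. So the claimed bound is sharp for your argument once the reduction is accounted for, not slack as you suggest.
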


We now focus on how predictive Blackwell approachability ties into our discussion of RM and \rmp{}.
In \cref{sec:ftrl omd rm rmp} we showed that when \cref{algo:olo to approachability} is used in conjunction with FTRL and OMD on the Blackwell approachability game $\Gamma$ of \cref{sec:ftrl omd rm rmp}, the iterates coincide with those of RM and \rmp{}, respectively.
In the rest of this section we investigate the use of \emph{predictive} FTRL and \emph{predictive} OMD in that framework.
Specifically, we use predictive FTRL and preditictive OMD as the regret minimizers to solve the Blackwell approachability game introduced in
\cref{sec:ftrl omd rm rmp}, and coin the resulting predictive regret minimization algorithms for simplex domains \emph{predictive regret matching (PRM)} and \emph{predictive regret matching$^+$ (\prmp{})}, respectively.
Ideally, starting from the prediction $\vec{m}^t$ of the next loss, we would want the prediction $\vec{v}^t$ of the next utility in the equivalent Blackwell game $\Gamma$ (\cref{sec:ftrl omd rm rmp}) to be $\vec{v}^t = \langle \vec{m}^t, \vec{x}^{t} \rangle \vec{1} - \vec{m}^t$ to maintain symmetry with \eqref{eq:blackwell simplex utility}. However, $\vec{v}^t$ is computed before $\vec{x}^t$ is computed, and $\vec{x}^t$ depends on $\vec{v}^t$, so the previous expression requires the computation of a fixed point. To sidestep this issue, we let
\[
    \vec{v}^t \defeq \langle \vec{m}^t, \vec{x}^{t-1} \rangle \vec{1} - \vec{m}^t
\]
instead.
We give pseudocode for PRM and \prmp{} as \cref{algo:prm,algo:prmp}.
%
In the rest of this section, we discuss formal guarantees for PRM and \prmp{}.
\begin{restatable}[Correctness of PRM, \prmp{}]{theorem}{thmprmprmp}\label{thm:prm prmp bound}
 Let $\cL^\text{\normalfont ftrl*}_\eta$ and $\cL^\text{\normalfont omd*}_\eta$ denote the predictive FTRL and predictive OMD algorithms instantiated with the same choice of regularizer and domain as in \cref{sec:ftrl omd rm rmp}, and predictions $\vec{v}^t$ as defined above for the Blackwell approachability game $\Gamma$. For all $\eta > 0$, when \cref{algo:olo to approachability} is set up with $\cD = \bbR_{\ge 0}^n$, the regret minimizer $\cL^\text{\normalfont ftrl*}_\eta$ (resp., $\cL^\text{\normalfont omd*}_\eta$) to play $\Gamma$, it produces the same iterates as the PRM (resp., \prmp{}) algorithm. Furthermore, PRM and \prmp{} are regret minimizer for the domain $\Delta^n$, and at all times $T$ satisfy the regret bound
 \begin{align*}
     R^T(\hat{\vec{x}}) &\le \sqrt{2} \mleft(\sum_{t=1}^T \|\vec{u}(\vec{x}^t\!, \vec{\ell}^t) - \vec{v}^t\|_2^2\mright)^{\!\!\!1/2}.
\end{align*}
\end{restatable}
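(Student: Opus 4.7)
The plan is to prove the theorem in two stages: first establish the iterate-equivalence between PRM (resp.~\prmp{}) and \cref{algo:olo to approachability} instantiated with $\cL^{\text{ftrl*}}_\eta$ (resp.~$\cL^{\text{omd*}}_\eta$) on the simplex Blackwell game $\Gamma$, and then derive the stated regret bound by chaining \cref{prop:oco bound} with \cref{prop:olo to approachability}.

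For the equivalence, I would follow the template of \cref{thm:rm is ftrl} and \cref{thm:rmp is omd}, with the only new ingredient being the linear prediction term. In \cref{algo:olo to approachability} the loss passed to $\cL$ at time $t$ is $-\vec u(\vec x^t, \vec\ell^t)$ and the prediction passed in is $-\vec v^t = \vec m^t - \langle \vec m^t, \vec x^{t-1}\rangle \vec 1$. For $\cL = \cL^{\text{ftrl*}}_\eta$ on the conic domain $\cD = \bbR^n_{\ge 0}$ with $\regu(\cdot) = \frac{1}{2}\|\cdot\|_2^2$, the $\argmin$ on \cref{line:ftrl next strategy} is a Euclidean projection onto the nonnegative orthant and admits the closed form $\vec\theta^t = \eta\,[\vec r^{t-1} + \vec v^t]^+$, where $\vec r^{t-1} \defeq -\vec L^{t-1}$ equals the running sum $\sum_{s<t} \vec u(\vec x^s, \vec\ell^s)$ of utilities. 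The subsequent normalization $\vec x^t = \vec\theta^t / \|\vec\theta^t\|_1$ (with an arbitrary simplex fallback when $\vec\theta^t = \vec 0$) annihilates the positive scalar $\eta$, matching \cref{algo:prm} line-by-line. The OMD case is completely analogous: a straightforward induction on $t$ shows that the iterate $\vec z^t$ maintained by $\cL^{\text{omd*}}_\eta$ satisfies $\vec z^t = \eta\,\tilde{\vec z}^t$, where $\tilde{\vec z}^t$ is the quantity maintained by \cref{algo:prmp}, so again the $\eta$-factor is absorbed upon normalization.

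For the regret bound, I would stitch three ingredients. First, \cref{prop:olo to approachability} gives $\min_{\hat{\vec s}\in \bbR^n_{\le 0}}\|\hat{\vec s} - \frac{1}{T}\sum_t \vec u(\vec x^t, \vec\ell^t)\|_2 \le \frac{1}{T}\max_{\hat{\vec x} \in \cK} R^T_\cL(\hat{\vec x})$ with $\cK = \bbR^n_{\ge 0} \cap \bbB^n_2$. Second, \cref{prop:oco bound} (after dropping the non-positive consecutive-differences term and using $\regu(\hat{\vec x}) \le \tfrac{1}{2}$ on $\cK$) bounds the OLO regret by $R^T_\cL(\hat{\vec x}) \le \frac{1}{2\eta} + \eta \sum_t \|\vec u(\vec x^t, \vec\ell^t) - \vec v^t\|_2^2$, where I have used that the dual of the $2$-norm is the $2$-norm and that the loss-prediction gap fed into $\cL$ equals the payoff-prediction gap in $\Gamma$. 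Since the iterates of PRM/\prmp{} are $\eta$-independent, I can pick $\eta$ \emph{a posteriori} as $\eta = 1/\sqrt{2\sum_t \|\vec u - \vec v\|_2^2}$, yielding an approachability-distance bound of $\frac{1}{T}\sqrt{2\sum_t \|\vec u - \vec v\|_2^2}$. Third, since the simplex regret satisfies $\max_{\hat{\vec x} \in \Delta^n} R^T(\hat{\vec x}) = \max_i \bigl(\sum_t \vec u(\vec x^t, \vec\ell^t)\bigr)_i \le \bigl\|[\sum_t \vec u(\vec x^t, \vec\ell^t)]^+\bigr\|_2 = T \cdot \text{distance}$, multiplying through produces the claimed $\sqrt{2}\bigl(\sum_t \|\vec u - \vec v\|_2^2\bigr)^{1/2}$ bound; the second equality in the theorem statement is then just the substitution of the definitions of $\vec u$ and $\vec v^t$.

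The main subtlety is the \emph{a posteriori} choice of $\eta$ in the last paragraph: it is legitimate only because of the scale-invariance established in the equivalence step, which in turn hinges on the conic structure of $\cD = \bbR^n_{\ge 0}$ (this is precisely the reason why \cref{prop:olo to approachability} was stated with the flexibility of taking $\cD = C^\circ$ rather than $\cD = \cK$). Beyond this, the computations are routine: the Euclidean projections onto $\bbR^n_{\ge 0}$ for both FTRL and OMD reduce to componentwise $[\,\cdot\,]^+$, and converting an approachability distance to a simplex regret is just the elementary inequality $\max_i v_i \le \|\vec v^+\|_2$.
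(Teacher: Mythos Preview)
Your proposal is correct and follows essentially the same route as the paper: the iterate-equivalence arguments via the closed-form Euclidean projections onto $\bbR^n_{\ge 0}$ and the scale-invariance in $\eta$ are identical, and the regret analysis chains \cref{prop:olo to approachability} and \cref{prop:oco bound} and then optimizes over $\eta$ \emph{a posteriori}, exactly as the paper does. The only cosmetic difference is in converting the approachability distance to simplex regret: the paper inserts a zero term $\min_{\hat{\vec s}\in\bbR^n_{\le 0}}\langle -\hat{\vec s},\hat{\vec x}\rangle$ and applies Cauchy--Schwarz, whereas you use the explicit identity $\min_{\hat{\vec s}\in\bbR^n_{\le 0}}\|\hat{\vec s}-\vec v\|_2=\|[\vec v]^+\|_2$ together with $\max_i v_i\le\|[\vec v]^+\|_2$; these are the same computation phrased two ways.
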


\noindent At a high level, the main insight behind the regret bound of \cref{thm:prm prmp bound} is to combine
\cref{prop:olo to approachability} with the guarantees of predictive FTRL and
predictive OMD (\cref{prop:oco bound}). In particular, combining \eqref{eq:simplex to blackwell} with \cref{prop:olo to approachability}, we find that the regret $R^T$ cumulated by the strategies $\vec{x}^1,\dots,\vec{x}^T$ produced up to time $T$ by PRM and \prmp{} satisfies
\begin{equation}\label{eq:bound1}
    \frac{1}{T}\max_{\xhat\in\Delta^n} R^T(\xhat) \le \frac{1}{T}\max_{\hat{\vec{x}} \in \bbR_{\ge 0}^n \cap \bbB_2^n} R_\cL^T(\xhat),
\end{equation}
where $\cL = \cL_\eta^\text{ftrl*}$ for PRM and $\cL = \cL_\eta^\text{omd*}$ for \prmp{}. Since the domain of the maximization on the
right hand side is a subset of the domain $\cD = \bbR^n_{\ge 0}$ of $\cL$, the bound in
\cref{prop:oco bound} holds, and in particular
\begin{align}
    \max_{\xhat\in\Delta^n}\!R^T(\xhat)\! &\le\! \max_{\xhat \in \bbR_{\ge 0}^n \cap \bbB_2^n}\!\mleft\{\!\frac{\|\xhat\|_2^2}{2\eta} \!+\! \eta\!\sum_{t=1}^T \|\vec{u}(\vec{x}^t\!\!,\vec{\ell}^t) \!-\! \vec{v}^t\|^2_2\!\mright\}\nonumber\\
                                       &\le\mleft(\frac{1}{2\eta} + \eta\sum_{t=1}^T \|\vec{u}(\vec{x}^t, \vec{\ell}^t) - \vec{v}^t\|^2_2\mright),
\label{eq:bound2}
\end{align}
where in the first inequality we used the fact that $\regu(\xhat) = \|\xhat\|_2^2/2$ by construction and in the second inequality we used the definition of unit ball $\bbB_2^n$.
Finally, using the fact that the iterates produced by PRM and \prmp{} do not depend on the chosen step size $\eta > 0$ (first part of \cref{thm:prm prmp bound}),
we conclude that \eqref{eq:bound2} must hold true for any $\eta > 0$, and so in particular also the $\eta > 0$ that minimizes the right hand side:
\begin{align*}
    \max_{\xhat\in\Delta^n} R^T(\xhat) &\le \inf_{\eta > 0}\mleft\{\frac{1}{2\eta} + \eta\sum_{t=1}^T \|\vec{u}(\vec{x}^t, \vec{\ell}^t) - \vec{v}^t\|^2_2\mright\}\\
                                       &= \sqrt{2}\mleft(\sum_{t=1}^T \|\vec{u}(\vec{x}^t, \vec{\ell}^t) - \vec{v}^2\|_2^2\mright)^{\!\!\!1/2}.
\end{align*}

    \section{Experiments}

We conduct experiments on solving two-player zero-sum games. As mentioned previously, for EFGs the CFR framework is used for decomposing regrets into local regret minimization problems at each simplex corresponding to a decision point in the game~\citep{Zinkevich07:Regret,Farina19:Online}, and we do the same.
However, as the regret minimizer for each local decision point, we use \prmp{} instead of RM. In addition, we apply two heuristics that usually lead to better practical performance: we use quadratic averaging of the strategy iterates, that is, we average the sequence-form strategies $\vec{x}^1, \dots, \vec{x}^T$ using the formula
$
    \frac{6}{T(T+1)(2T+1)} \sum_{t=1}^T t^2 \vec{x}^t,
$
and we use the \emph{alternating updates} scheme.
We call this algorithm \pcfrp{}. We compare \pcfrp{} to the prior state-of-the-art CFR variants: \cfrp{}~\citep{Tammelin14:Solving}, \emph{Discounted CFR (DCFR)} with its recommended parameters~\citep{Brown19:Solving}, and \emph{Linear CFR (LCFR)}~\citep{Brown19:Solving}.

\begin{figure*}[!t]
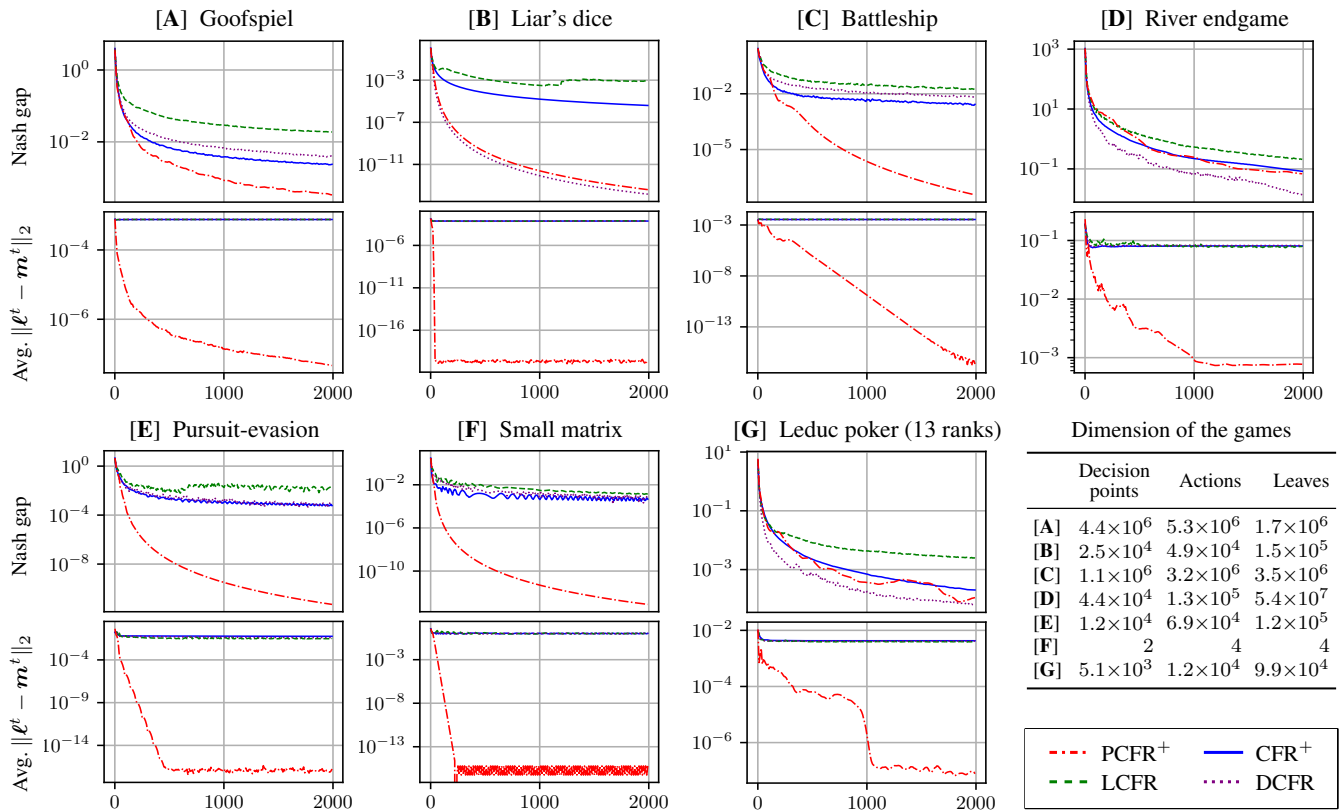
\raggedright
    \addplotA{goof5}{Goofspiel}%
    \hspace*{-.5mm}\addplotA*{ld_new}{Liar's dice}%
    \hspace*{-.5mm}\addplotA*{bs23_4turns}{Battleship}%
    \hspace*{-.5mm}\addplotA*{river_endgame7}{River endgame}\\
    \addplotA{search6}{Pursuit-evasion}%
    \hspace*{-.5mm}\addplotA*{sm}{Small matrix}%
    \hspace*{-.5mm}\addplotA*{leduc13}{Leduc poker (13 ranks)}%
    \hfill\scalebox{.95}{\begin{minipage}[t]{4.4cm}%
      \small%
      \centering Dimension of the games\\[1.08mm]
      {%
        \fontsize{8}{8}\selectfont%
        \setlength{\tabcolsep}{.9mm}%
        \renewcommand{\arraystretch}{1.1}%
        \sisetup{
            scientific-notation=true,
            round-precision=1,
            round-mode=places,
            exponent-product={\mkern-4mu\times\mkern-4.5mu}
        }%
      \begin{tabular}{l>{\raggedleft\let\newline\\\arraybackslash\hspace{0pt}}m{1.1cm}rr}
        \toprule
         & \centering Decision points & Actions & Leaves \\%
        \midrule%
        \ref*{game:goof5}          & \num{4369010} & \num{5332052} & \num{1728000} \\%
        \ref*{game:ld_new}         & \num{24576}   & \num{49142}   & \num{147420} \\
        \ref*{game:bs23_4turns}    & \num{1050723} & \num{3236158} & \num{3487428} \\
        \ref*{game:river_endgame7} & \num{44020}   & \num{129222}  & \num{54200135} \\
        \ref*{game:search6}        & \num{11888}   & \num{69029}   & \num{118514} \\
        \ref*{game:sm}             &         $2$ &          $4$ &          $4$ \\
        \ref*{game:leduc13}        &  \num{5148} &  \num{12014} &  \num{98956} \\
        \bottomrule
      \end{tabular}
      }\\[.4cm]
      \begin{tcolorbox}[
        boxsep=0pt,
        left=1pt,right=1pt,top=5pt,bottom=4pt,
        boxrule=.5pt,
        colback=black!0!white,
        colframe=black,
        arc=0pt
      ]\centering
        \begin{tabular}{ll}
         \linesty{pred,dash dot}~~\pcfrp{}&
         \linesty{pblue}~~\cfrp{}\\[.6mm]
         \linesty{pgreen,dashed}~~LCFR&
         \linesty{ppurple,dotted}~~DCFR\\
        \end{tabular}
      \end{tcolorbox}
    \end{minipage}}\\
    \vspace{-3mm}
    \caption{Performance of \pcfrp, \cfrp, DCFR, and LCFR on five EFGs. In all plots, the x axis is the number of iterations of each algorithm. For each game, the top plot shows that the Nash gap on the y axis (on a log scale), the bottom plot shows and the average prediction error (on a log scale).}
    \label{fig:plots}
    \vspace{-5mm}
\end{figure*}

We conduct the experiments on common benchmark games. We show results on seven games in the main body of the paper. An additional 11 games are shown in the appendix of the full version of the paper. The experiments shown in the main body are representative of those in the appendix. A description of all the games is in \cref{app:games} in the full version of the paper, and the results are shown in \cref{fig:plots}. The x-axis shows the number of iterations of each algorithm. Every algorithm pays almost exactly the same cost per iteration, since the predictions require only one additional thresholding step in \pcfrp.
For each game, the top plot shows on the y-axis the Nash gap, while the bottom plot shows the accuracy in our predictions of the regret vector, measured as the average $\ell_2$ norm of the difference between the actual loss $\vec{\ell}^t$ received and its prediction $\vec{m}^t$ across all regret minimizers at all decision points in the game. For all non-predictive algorithms (\cfrp{}, LCFR, and DCFR), we let $\vec{m}^t = \vec{0}$. For our predictive algorithm, we set $\vec{m^t} = \vec{\ell}^{t-1}$ at all times $t \ge 2$ and $\vec{m}^1 = \vec{0}$. Both y-axes are in log scale.
On Battleship and Pursuit-evasion, \pcfrp{} is faster than the other algorithms by 3-6 orders of magnitude already after 500 iterations, and around 10 orders of magnitude after 2000 iterations. On Goofspiel, \pcfrp{} is also significantly faster than the other algorithms, by 0.5-1 order of magnitude. Finally, in the River endgame, our only poker experiment here, \pcfrp{} is slightly faster than \cfrp, but slower than DCFR.
Finally, \prmp{} converges very rapidly on the \emph{smallmatrix} game, a 2-by-2 matrix game where \cfrp{} and other RM-based methods converge at a rate slower than $T^{-1}$~\cite{Farina19:Optimistic}.
Across all non-poker games in the appendix, we also find that \pcfrp{} beats the other algorithms, often by several orders of magnitude. We conclude that \pcfrp{} seems to be the fastest method for solving non-poker EFGs. The only exception to the non-poker-game empirical rule is Liar's Dice (game \ref{game:ld_new}), where our predictive method performs comparably to DCFR. In the appendix, we also test \cfrp{} with quadratic averaging (as opposed to the linear averaging that \cfrp{} normally uses). This does not change any of our conclusions, except that for Liar's Dice, \cfrp{} performs comparably to DCFR and \pcfrp{} when using quadratic averaging (in fact, quadratic averaging hurts \cfrp{} in every game except poker and Liar's Dice).

We tested on three poker games, the River endgame shown here (which is a real endgame encountered by the \emph{Libratus} AI~\citep{Brown17:Superhuman} in the man-machine ``Brains vs. Artificial Intelligence: Upping the Ante'' competition), as well as Kuhn and Leduc poker in the appendix. On Kuhn poker, \pcfrp{} is extremely fast and the fastest of the algorithms. That game is known to be significantly easier than deeper EFGs for predictive algorithms~\citep{Farina19:Optimistic}. On Leduc poker as well as the River endgame, the predictions in \pcfrp{} do not seem to help as much as in other games. On the River endgame, the performance is essentially the same as that of \cfrp. On Leduc poker, it leads to a small speedup over \cfrp. On both of those games, DCFR is fastest. In contrast, DCFR actually performs worse than \cfrp{} in our non-poker experiments, though it is sometimes on par with \cfrp. In the appendix, where we try quadratic averaging in \cfrp{}, we find that for poker games this does speed up \cfrp{}, and allows it to be slightly faster than \pcfrp{} on the River endgame and Leduc poker.
We conclude that \pcfrp{} is much faster than \cfrp{} and DCFR on non-poker games, whereas on poker games DCFR is the fastest. 

The convergence rate of \pcfrp{} is closely related to how good the predictions $\vec{m}^t$ of $\vec{\ell}^t$ are. On Battleship and Pursuit-evasion, the predictions become extremely accurate very rapidly, and \pcfrp{} converges at an extremely fast rate. On Goofspiel, the predictions are fairly accurate (the error is of the order $10^{-5}$) and \pcfrp{} is still significantly faster than the other algorithms. On the River endgame, the average prediction error is of the order  $10^{-3}$, and \pcfrp{} performs on par with \cfrp, and slower than DCFR. Similar trends prevail in the experiments in the appendix.
Additional experimental insights are described in the appendix. 

\section{Conclusions and Future Research}

We extended \citet{Abernethy11:Blackwell}'s reduction of Blackwell approachability to regret minimization beyond the compact setting. This extended reduction allowed us to show that FTRL applied to the decision of which halfspace to force in Blackwell approachability is equivalent to the regret matching algorithm. OMD applied to the same problem turned out to be equivalent to \rmp. 
Then, we showed that the predictive variants of FTRL and OMD yield predictive algorithms for Blackwell approachability, as well as predictive variants of RM and \rmp. Combining \prmp{} with CFR, we introduced the \pcfrp{} algorithm for solving EFGs. Experiments across many common benchmark games showed that \pcfrp{} outperforms the prior state-of-the-art algorithms on non-poker games by orders of magnitude.

%
This work also opens future directions.
Can \prmp{} guarantee $T^{-1}$ convergence on matrix games like optimistic FTRL and OMD, or do the less stable updates prevent that?
Can one develop a predictive variant of DCFR, which is faster on poker domains?
Can one combine DCFR and \pcfrp, so DCFR would be faster initially but \pcfrp{} would overtake? If the cross-over point could be approximated, this might yield a best-of-both-worlds algorithm.

    \section*{Acknowledgments}
    This material is based on work supported by the National Science Foundation under grants IIS-1718457, IIS-1901403, and CCF-1733556, and the ARO under award W911NF2010081. Gabriele Farina is supported by a Facebook fellowship.

    \bibliographystyle{aaai21}
    \bibliography{dairefs}

\iftrue
    \clearpage
    \onecolumn
    \leftlinenumbers
    \appendix

    \makeatletter
        \renewcommand{\section}{%
          \@startsection{section}{1}{\z@}%
                        {-2.0ex \@plus -0.5ex \@minus -0.2ex}%
                        { 1.5ex \@plus  0.3ex \@minus  0.2ex}%
                        {\Large\bf\raggedright}%
        }
        \renewcommand{\subsection}{%
          \@startsection{subsection}{2}{\z@}%
                        {-1.8ex \@plus -0.5ex \@minus -0.2ex}%
                        { 0.8ex \@plus  0.2ex}%
                        {\large\bf\raggedright}%
        }
    \makeatother

        \section*{Additional Bibliographic Remarks}
        \begin{enumerate}
            \item Gordon's Lagrangian Hedging framework \citep{Gordon05:NoRegret,Gordon07:NoRegret}
        partially overlaps with the construction by \citet{Abernethy11:Blackwell}
        that we used in the paper. It appears that Abernethy et al. were not
        aware of Gordon's results. We did not investigate to what extent the \emph{predictive}
        point of view we adopted in the paper could apply to Gordon's result.

            \item In his PhD thesis, \citet{Burch18:Time} mentions an algorithm that
                he coins ``optimistic \rmp{}''. No theory is provided, and unfortunately Burch never
                defined the algorithm formally, so it is not clear whether his algorithm is the same as \prmp{} as defined in
                \cref{algo:prmp} in our paper. \citet{Brown17:Dynamic} gave an interpretation of
                optimistic \rmp{} by Burch that would imply it is different from \prmp{}. We indend to check with Burch directly for the final version of this paper.
        \end{enumerate}

        \section{Analysis of (Predictive) FTRL}\label{app:proofs ftrl}

In the proof of \cref{prop:omd bound} we will use the following technical lemma (see, e.g, \cite{Farina19:Optimistic}).

\begin{lemma}\label{lem:psi lipschitz}
  Let $\regu : \cD \to \bbR_{\ge 0}$ be a $1$-strongly convex differentiable regularizer with respect to some norm $\|\cdot\|$, and let $\|\cdot\|_*$ be the dual norm to $\|\cdot\|$. Finally, let $\vec{\psi} : \bbR^n \to \cD$ be the function
    \[
        \vec{\psi} : \vec{g} \mapsto \argmin_{\xhat \in \cD}\mleft\{\langle \vec{g}, \xhat \rangle + \frac{1}{\eta} \regu(\xhat)\mright\}.
    \]
    Then, $\vec{\psi}$ is $\eta$-Lipschitz continuous with respect to the dual norm, in the sense that
    \[
        \|\vec{\psi}(\vec{g}) - \vec{\psi}(\vec{g'})\| \le \eta\,\|\vec{g} - \vec{g}'\|_* \quad\forall\vec{g},\vec{g}'\in\bbR^n.
    \]
\end{lemma}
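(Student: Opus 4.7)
The plan is to invoke the first-order optimality conditions at the two points $\vec{x} \defeq \vec{\psi}(\vec{g})$ and $\vec{x}' \defeq \vec{\psi}(\vec{g}')$, combine them to produce a monotonicity inequality, and then squeeze that inequality between strong convexity (on the gradient-of-regularizer side) and a Fenchel/Hölder-type bound (on the $\vec{g}-\vec{g}'$ side).

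First I would write down the variational inequalities that characterize $\vec{x}$ and $\vec{x}'$ as minimizers over the closed convex set $\cD$ of a differentiable convex function. This gives
\[
    \Bigl\langle \vec{g} + \tfrac{1}{\eta}\nabla\regu(\vec{x}),\ \vec{x}' - \vec{x}\Bigr\rangle \ge 0,
    \qquad
    \Bigl\langle \vec{g}' + \tfrac{1}{\eta}\nabla\regu(\vec{x}'),\ \vec{x} - \vec{x}'\Bigr\rangle \ge 0.
\]
Adding the two and rearranging yields
\[
    \tfrac{1}{\eta}\bigl\langle \nabla\regu(\vec{x}) - \nabla\regu(\vec{x}'),\ \vec{x} - \vec{x}' \bigr\rangle \ \le\ \bigl\langle \vec{g}' - \vec{g},\ \vec{x} - \vec{x}' \bigr\rangle.
\]

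Next I would bound each side. For the left-hand side, the standard consequence of $1$-strong convexity of $\regu$ (sum of the two strong-convexity inequalities at $\vec{x}$ and $\vec{x}'$) gives
\[
    \bigl\langle \nabla\regu(\vec{x}) - \nabla\regu(\vec{x}'),\ \vec{x} - \vec{x}' \bigr\rangle \ \ge\ \|\vec{x} - \vec{x}'\|^2.
\]
For the right-hand side, the definition of the dual norm (equivalently, Hölder's inequality) gives
\[
    \bigl\langle \vec{g}' - \vec{g},\ \vec{x} - \vec{x}' \bigr\rangle \ \le\ \|\vec{g}' - \vec{g}\|_*\,\|\vec{x} - \vec{x}'\|.
\]

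Chaining these three inequalities produces $\tfrac{1}{\eta}\|\vec{x}-\vec{x}'\|^2 \le \|\vec{g}-\vec{g}'\|_*\,\|\vec{x}-\vec{x}'\|$, and dividing through by $\|\vec{x}-\vec{x}'\|$ (trivial when it is zero) finishes the proof. The only mildly subtle point is making sure the first-order optimality condition is written in the correct variational-inequality form over the (possibly unbounded, but closed and convex) set $\cD$ — this is needed because the paper explicitly wants to allow non-compact domains like $\bbR^n_{\ge 0}$ — but since $\langle \vec{g}, \cdot\rangle + \tfrac{1}{\eta}\regu(\cdot)$ is strongly convex, the minimizer exists and is unique, and the variational inequality above is standard. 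No other step is delicate.
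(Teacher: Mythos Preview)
Your proof is correct and is the standard argument for this fact. Note, however, that the paper does not actually supply its own proof of this lemma: it states the result and simply refers to \citet{Farina19:Optimistic}. Your argument via the two variational inequalities, strong monotonicity of $\nabla\regu$, and the dual-norm bound is exactly the textbook proof one would expect to find behind that citation, so there is no meaningful comparison to make.
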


\begin{restatable}{proposition}{propftrlbound}\label{prop:ftrl bound}
    Let $\regu : \cD \to \bbR_{\ge 0}$ be a $1$-strongly regularizer with respect to some norm $\|\cdot\|$, and let $\|\cdot\|_*$ be the dual norm to $\|\cdot\|$. For all $\xhat \in \cD$, all $\eta > 0$, and all times $T$, the regret cumulated by (predictive) FTRL (\cref{algo:predictive ftrl}) compared to any fixed strategy $\hat{\vec{x}} \in \cD$ is bounded as
    \begin{equation}\label{eq:oftrl regret bound}
        R^T(\xhat) \le \frac{\regu(\xhat)}{\eta} + \eta\sum_{t=1}^T \|\vec{\ell}^t - \vec{m}^t\|_*^2 - \frac{1}{4\eta} \sum_{t=1}^{T-1} \|\vec{x}^{t+1} - \vec{x}^t\|^2.
    \end{equation}
\end{restatable}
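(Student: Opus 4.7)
My plan is to adapt the standard be-the-regularized-leader argument to the predictive setting. Let $\vec\psi(\vec g) \defeq \argmin_{\xhat\in\cD}\{\langle \vec g,\xhat\rangle + \frac{1}{\eta}\regu(\xhat)\}$ be the FTRL argmin map, so that $\vec{x}^t = \vec\psi(\vec L^{t-1}+\vec m^t)$, and introduce the auxiliary hindsight iterates $\vec{y}^t \defeq \vec\psi(\vec L^t)$. The first step is to decompose the regret as
\[
R^T(\xhat) \;=\; \sum_{t=1}^T \langle\vec\ell^t,\,\vec x^t - \vec y^t\rangle \;+\; \sum_{t=1}^T \langle\vec\ell^t,\,\vec y^t - \xhat\rangle,
\]
where the first sum measures how far the predictive iterates $\vec x^t$ deviate from the hindsight iterates $\vec y^t$, and the second is a classical be-the-leader expression.

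For the be-the-leader term, I would apply the sharpened BTRL lemma that exploits strong convexity of the regularizer. Each $\vec y^t$ is the unique minimizer of the $(1/\eta)$-strongly convex function $G_t(\vec x) \defeq \langle \vec L^t, \vec x\rangle + \frac{1}{\eta}\regu(\vec x)$, so strong convexity at the minimizer yields $G_t(\vec y^{t-1}) - G_t(\vec y^t) \ge \frac{1}{2\eta}\|\vec y^{t-1}-\vec y^t\|^2$. Rewriting the left-hand side as $G_{t-1}(\vec y^{t-1}) - G_{t-1}(\vec y^t) + \langle\vec\ell^t,\vec y^{t-1}-\vec y^t\rangle$, telescoping, and using $\regu\ge 0$ together with $G_T(\vec y^T) \le G_T(\xhat)$, should give
\[
\sum_{t=1}^T \langle\vec\ell^t, \vec y^t - \xhat\rangle \;\le\; \frac{\regu(\xhat)}{\eta} - \frac{1}{2\eta}\sum_{t=1}^T \|\vec y^t - \vec y^{t-1}\|^2.
\]

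For the prediction-stability term, the plan is to invoke \cref{lem:psi lipschitz}: since $\vec x^t = \vec\psi(\vec L^{t-1}+\vec m^t)$ and $\vec y^t = \vec\psi(\vec L^{t-1}+\vec\ell^t)$, the lemma yields $\|\vec x^t - \vec y^t\| \le \eta\|\vec\ell^t - \vec m^t\|_*$. Splitting $\langle\vec\ell^t, \vec x^t - \vec y^t\rangle = \langle\vec\ell^t - \vec m^t, \vec x^t - \vec y^t\rangle + \langle\vec m^t, \vec x^t - \vec y^t\rangle$ and applying Hölder on the first summand produces a term bounded by $\eta\|\vec\ell^t - \vec m^t\|_*^2$; the residual $\langle\vec m^t, \vec x^t - \vec y^t\rangle$ can be handled by combining the first-order variational inequalities at $\vec x^t$ and $\vec y^t$ with 1-strong convexity of $\regu$, charging any excess to the negative $\|\vec y^t - \vec y^{t-1}\|^2$ budget accumulated in the previous step.

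The main obstacle, in my view, is the final bookkeeping step: upgrading the negative stability term on the hindsight sequence $\{\vec y^t\}$ to the claimed negative stability term on the actual iterates $\{\vec x^t\}$. My plan here is to exploit the chain $\vec x^t \leadsto \vec y^{t-1} \leadsto \vec y^t \leadsto \vec x^{t+1}$, each link of $\vec x^s$ to $\vec y^{s-1}$ being controlled by \cref{lem:psi lipschitz}, together with Young's inequality to re-express $\|\vec x^{t+1}-\vec x^t\|^2$ in terms of $\|\vec y^t - \vec y^{t-1}\|^2$ plus additional terms proportional to $\eta^2\|\vec\ell^t - \vec m^t\|_*^2$ that can be absorbed back into the main $\eta\sum_t\|\vec\ell^t - \vec m^t\|_*^2$ summand. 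The factor-of-two loss intrinsic to Young's inequality is exactly what turns the $\frac{1}{2\eta}$ coefficient produced by strong convexity into the $\frac{1}{4\eta}$ coefficient stated in the proposition.
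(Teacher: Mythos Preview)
Your decomposition and the use of the hindsight iterates $\vec y^t=\vec\psi(\vec L^t)$ match the paper's, and your bound on $\langle\vec\ell^t-\vec m^t,\vec x^t-\vec y^t\rangle$ via \cref{lem:psi lipschitz} is exactly the paper's bound on term~$\circled{B}$. The gap is in the last bookkeeping step, and it is a real one. Your plan is to first extract $-\tfrac{1}{2\eta}\sum_t\|\vec y^t-\vec y^{t-1}\|^2$ from the be-the-leader argument and then upgrade it to $-\tfrac{1}{4\eta}\sum_t\|\vec x^{t+1}-\vec x^t\|^2$ via the chain $\vec x^t\leadsto\vec y^{t-1}\leadsto\vec y^t\leadsto\vec x^{t+1}$ and \cref{lem:psi lipschitz}. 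But the link $\vec x^s\leadsto\vec y^{s-1}$ is the wrong one: since $\vec x^s=\vec\psi(\vec L^{s-1}+\vec m^s)$ and $\vec y^{s-1}=\vec\psi(\vec L^{s-1})$, the lemma only gives $\|\vec x^s-\vec y^{s-1}\|\le\eta\|\vec m^s\|_*$, not $\eta\|\vec\ell^s-\vec m^s\|_*$. The ``absorbable'' terms you produce are therefore proportional to $\eta\|\vec m^t\|_*^2$, which do not vanish when predictions are perfect and cannot be folded into $\eta\sum_t\|\vec\ell^t-\vec m^t\|_*^2$. Relatedly, combining the variational inequalities ``at $\vec x^t$ and $\vec y^t$'' only reproduces $\|\vec x^t-\vec y^t\|\le\eta\|\vec\ell^t-\vec m^t\|_*$ and does not by itself control $\langle\vec m^t,\vec x^t-\vec y^t\rangle$.

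The paper sidesteps this by \emph{not} separating the be-the-leader bound from the $\langle\vec m^t,\vec x^t-\vec y^t\rangle$ term. It applies the strong-convexity inequality at the minimizers $\vec y^{t-1}$ (with test point $\vec x^t$) and $\vec x^t$ (with test point $\vec y^t$), which together yield
\[
\langle\vec m^t,\vec x^t-\vec y^t\rangle \le \tfrac{1}{\eta}\big(\regu(\vec y^t)-\regu(\vec y^{t-1})\big)+\langle\vec L^{t-1},\vec y^t-\vec y^{t-1}\rangle-\tfrac{1}{2\eta}\big(\|\vec x^t-\vec y^{t-1}\|^2+\|\vec y^t-\vec x^t\|^2\big),
\]
and then adds $\langle\vec\ell^t,\vec y^t-\xhat\rangle$ \emph{before} summing so that everything telescopes in one shot. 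The key point is that the negative terms come out as $\|\vec x^t-\vec y^{t-1}\|^2+\|\vec y^t-\vec x^t\|^2$; after an index shift and the parallelogram inequality they give $\tfrac{1}{2}\|\vec x^{t+1}-\vec x^t\|^2$ directly, with no Lipschitz conversion needed and no stray $\|\vec m^t\|_*$ terms ever appearing.
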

\begin{proof} We combine several techniques and insights from the original works of \citet{Rakhlin13:Online} and \citet{Syrgkanis15:Fast}.
    Let $\vec{\psi} : \bbR^n \to \cD$ be the function that maps
    \[
        \vec{\psi} : \vec{g} \mapsto \argmin_{\xhat \in \cD}\mleft\{\langle \vec{g}, \xhat \rangle + \frac{1}{\eta} \regu(\xhat)\mright\}.
    \]
    With that notation, at all times $t$, predictive FTRL outputs the decision $\vec{x}^t = \vec{\psi}(\vec{L}^{t-1}+\vec{m}^t)$, where $\vec{L}^{t-1} = \sum_{\tau=1}^{t-1} \vec{\ell}^\tau$. For the purpose of this proof, we also introduce the sequence $\vec{w}^t \defeq \vec{\psi}(\vec{L}^t)$ for $t = 1, 2, \dots$. For any $\xhat \in \cD$,
    \[
        R^T(\xhat) = \sum_{t=1}^T \langle \vec{\ell}^t, \vec{x}^t - \xhat\rangle = %
\underbrace{\sum_{t=1}^T\langle \vec{m}^t, \vec{x}^t - \vec{w}^t \rangle%
            +%
            \langle \vec{\ell}^t, \vec{w}^t - \xhat \rangle}_{\circled{A}}%
            +
            \underbrace{\sum_{t=1}^T\langle \vec{\ell}^t - \vec{m}^t, \vec{x}^t - \vec{w}^t \rangle}_{\circled{B}}%
    \]
    We now bound each of the three terms on the right-hand side:
    \begin{itemize}[leftmargin=8mm,nolistsep,itemsep=1mm]
        \item[\circled{A}]
        A critical observation to bound \circled{A} is the following. Since $\vec{\psi}(\vec{g})$ is a minimizer of $\langle \vec{g},\hat{\vec{x}}\rangle + \frac{1}{\eta}\regu(\hat{\vec{x}})$, then by the fist-order optimality conditions,
        \begin{equation}\label{eq:psi opt}
            \mleft\langle \vec{g} + \frac{1}{\eta}\nabla\regu(\vec{\psi}(\vec{g})),\ \vec{\xi} - \vec{\psi}(\vec{g})\mright\rangle \ge 0 \quad\forall\vec{g}\in\bbR^n, \vec{\xi} \in \cD.
        \end{equation}
        Using the hypothesis on the $1$-strongly convexity of $\regu$ and applying (\ref{eq:psi opt}), for all $\vec{\xi}$ we obtain
        \begin{align*}
            \frac{1}{\eta}\regu(\vec{\xi}) + \langle \vec{g},\vec{\xi}\rangle &\ge \frac{1}{\eta}\regu(\vec{\psi}(\vec{g})) + \langle \vec{g}, \vec{\psi}(\vec{g})\rangle + \mleft\langle \vec{g} + \frac{1}{\eta}\nabla\regu(\vec{\psi}(\vec{g})),\ \vec{\xi} - \vec{\psi}(\vec{g})\mright\rangle + \frac{1}{2\eta}\|\vec{\xi} - \vec{\psi}(\vec{g})\|^2\\
            &\ge \frac{1}{\eta}\regu(\vec{\psi}(\vec{g})) + \langle \vec{g}, \vec{\psi}(\vec{g})\rangle + \frac{1}{2\eta}\|\vec{\xi} - \vec{\psi}(\vec{g})\|^2\numberthis\label{eq:crucial}.
        \end{align*}
        By applying~\eqref{eq:crucial} to the two choices $(\vec{g}, \vec{\xi}) = (\vec{L}^{t-1}, \vec{x}^t), (\vec{L}^{t-1} + \vec{m}^t, \vec{w}^t)$, respectively, we have the two inequalities
        \begin{align*}
            \frac{1}{\eta}\regu(\vec{x}^t) + \langle \vec{L}^{t-1}, \vec{x}^t\rangle &\ge \frac{1}{\eta}\regu(\vec{w}^{t-1}) + \langle \vec{L}^{t-1}, \vec{w}^{t-1}\rangle + \frac{1}{2\eta}\|\vec{x}^t - \vec{w}^{t-1}\|^2\\
            \frac{1}{\eta}\regu(\vec{w}^t) + \langle \vec{L}^{t-1} + \vec{m}^t, \vec{w}^t\rangle &\ge \frac{1}{\eta}\regu(\vec{x}^{t}) + \langle \vec{L}^{t-1} + \vec{m}^t, \vec{x}^{t}\rangle + \frac{1}{2\eta}\|\vec{w}^t - \vec{x}^{t}\|^2.
        \end{align*}
        Summing the two above inequalities and rearranging terms yields
        \begin{align*}
            \langle\vec{m}^t,\vec{x}^t -\vec{w}^t\rangle &\le \frac{1}{\eta}(\regu(\vec{w}^t) - \regu(\vec{w}^{t-1})) + \langle\vec{L}^{t-1}, \vec{w}^t - \vec{w}^{t-1}\rangle-\frac{1}{2\eta}\Big(\|\vec{x}^t - \vec{w}^{t-1}\|^2 + \|\vec{w}^t - \vec{x}^t\|^2\Big).
        \end{align*}
        Summing over $t = 1, \dots, T$ and simplifying telescopic terms,
        \begin{align*}
            &\sum_{t=1}^T\langle\vec{m}^t,\vec{x}^t -\vec{w}^t\rangle \le \frac{1}{\eta}(\regu(\vec{w}^T) - \regu(\vec{w}^{0})) + \sum_{t=1}^T \langle\vec{L}^{t-1}, \vec{w}^t - \vec{w}^{t-1}\rangle - \sum_{t=1}^T\frac{1}{2\eta}\Big(\|\vec{x}^t - \vec{w}^{t-1}\|^2 + \|\vec{w}^t - \vec{x}^t\|^2\Big)\\
            &\hspace{1.5cm}\le \frac{1}{\eta}(\regu(\vec{w}^T) - \regu(\vec{w}^{0})) + \sum_{t=1}^T \langle\vec{L}^{t-1}, \vec{w}^t - \vec{w}^{t-1}\rangle - \sum_{t=1}^{T-1}\frac{1}{2\eta}\Big(\|\vec{x}^{t+1} - \vec{w}^{t}\|^2 + \|\vec{w}^{t} - \vec{x}^{t}\|^2\Big)\\
            &\hspace{1.5cm}\le \frac{1}{\eta}(\regu(\vec{w}^T) - \regu(\vec{w}^{0})) + \sum_{t=1}^T \langle\vec{L}^{t-1}, \vec{w}^t - \vec{w}^{t-1}\rangle - \sum_{t=1}^{T-1}\frac{1}{4\eta}\|\vec{x}^{t+1} - \vec{x}^{t}\|^2,
        \end{align*}
         where the second inequality follows by removing a term from
the last parenthesis and rearranging, and the third from the parallelogram inequality $\|\vec{a}\|^2 + \|\vec{b}\|^2 \ge \frac{1}{2}\|\vec{a}+\vec{b}\|^2$ valid for all choices of vectors $\vec{a},\vec{b}$ and norm $\|\cdot\|$.

        In order to recognize \circled{A} on the left-hand side, we add the quantity $\sum_{t=1}^T \langle\vec{\ell}^t,\vec{w}^t-\xhat\rangle$ on both sides, and obtain
        \begin{align*}
            \circled{A} &\le \frac{1}{\eta}(\regu(\vec{w}^T) - \regu(\vec{w}^{0})) + \sum_{t=1}^T \Big(\langle \vec{\ell}^t, \vec{w}^t - \xhat\rangle + \langle\vec{L}^{t-1}, \vec{w}^t - \vec{w}^{t-1}\rangle\Big) -\frac{1}{4\eta} \sum_{t=1}^{T-1}\|\vec{x}^{t+1} - \vec{x}^{t}\|^2\\
                &= \frac{1}{\eta}(\regu(\vec{w}^T) - \regu(\vec{w}^{0})) + \sum_{t=1}^T \Big(\langle \vec{L}^t, \vec{w}^t \rangle - \langle\vec{L}^{t-1}, \vec{w}^{t-1}\rangle - \langle \vec{\ell}^t, \xhat\rangle\Big) - \frac{1}{4\eta}\sum_{t=1}^{T-1}\|\vec{x}^{t+1} - \vec{x}^{t}\|^2\\
                &= \frac{1}{\eta}(\regu(\vec{w}^T) - \regu(\vec{w}^{0})) + \langle \vec{L}^T, \vec{w}^T - \xhat \rangle - \frac{1}{4\eta}\sum_{t=1}^{T-1}\|\vec{x}^{t+1} - \vec{x}^{t}\|^2,\numberthis\label{eq:ftrl last step}
        \end{align*}
        where we simplified the telescopic sum $\sum_{t=1}^T\langle\vec{L}^t, \vec{w}^t\rangle - \langle\vec{L}^{t-1},\vec{w}^{t-1}\rangle = \langle\vec{L}^T, \vec{w}^T\rangle$ in the last step. Finally, using \cref{eq:crucial} with $\vec{g} = \vec{L}^T, \vec{\xi} = \xhat$, we can write
        \[
            \frac{1}{\eta}\regu(\xhat) + \langle\vec{L}^T, \xhat\rangle \ge \frac{1}{\eta}\regu(\vec{w}^T) + \langle\vec{L}^T,\vec{w}^T\rangle \implies
            \frac{1}{\eta}\regu(\vec{w}^T) + \langle\vec{L}^T, \vec{w}^T - \xhat\rangle \le \frac{1}{\eta}\regu(\xhat),
        \]
        and substituting the last expression into~\eqref{eq:ftrl last step}, we obtain
        \begin{equation}\label{eq:ftrl part A}
            \circled{A} \le \frac{1}{\eta}(\regu(\xhat) - \regu(\vec{w}^0)) - \sum_{t=1}^{T-1}\frac{1}{4\eta}\|\vec{x}^{t+1} - \vec{x}^{t}\|^2 \le \frac{\regu(\xhat)}{\eta} - \frac{1}{4\eta}\sum_{t=1}^{T-1}\|\vec{x}^{t+1} - \vec{x}^{t}\|^2.
        \end{equation}
        \item[\circled{B}] By applying the generalized Cauchy-Schwarz inequality and \cref{lem:psi lipschitz},
            \[
                \langle \vec{\ell}^t - \vec{m}^t, \vec{x}^t - \vec{w}^t \rangle \le \|\vec{\ell}^t - \vec{m}^t\|_*\,\|\vec{x}^t - \vec{w}^t\| \le \eta \| \vec{\ell}^t - \vec{m}^t\|_*^2.
            \]
            Hence,
            \begin{equation}\label{eq:ftrl part B}
                \circled{B} = \sum_{t=1}^T \langle \vec{\ell}^t - \vec{m}^t, \vec{x}^t - \vec{w}^t \rangle \le \eta \sum_{t=1}^T \| \vec{\ell}^t - \vec{m}^t\|_*^2.
            \end{equation}
    \end{itemize}
    Finally, summing the bounds for \circled{A}~\eqref{eq:ftrl part A} and for \circled{B}~\eqref{eq:ftrl part B}, we obtain the statement.
\end{proof}

\section{Analysis of (Predictive) OMD}\label{app:proofs omd}

In the proof of \cref{prop:omd bound} we will use the two following technical lemmas.

\begin{lemma}\label{lem:amgm}
    For any $\vec{a}, \vec{b} \in\bbR^n$ and $\rho > 0$, it holds that
    $\displaystyle
        \langle \vec{a}, \vec{b} \rangle \le \frac{\rho}{2}  \|\vec{a}\|_*^2 + \frac{1}{2\rho} \|\vec{b}\|^2
    $.
\end{lemma}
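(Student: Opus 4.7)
The plan is to prove this by combining two standard facts: the generalized Cauchy--Schwarz (Hölder) inequality relating a dual norm pairing to the product of norms, and the scalar AM--GM (Young's) inequality with parameter $\rho$. Both are completely classical, so the proof will be short.

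First I would invoke the generalized Cauchy--Schwarz inequality, which states that for any $\vec{a}, \vec{b} \in \bbR^n$ and any pair of dual norms $\|\cdot\|, \|\cdot\|_*$,
\[
    \langle \vec{a}, \vec{b}\rangle \;\le\; \|\vec{a}\|_* \cdot \|\vec{b}\|.
\]
This is the definition of the dual norm $\|\vec{a}\|_* = \sup_{\|\vec{x}\| \le 1} \langle \vec{a}, \vec{x}\rangle$, applied after normalizing $\vec{b}$ by $\|\vec{b}\|$ (with the trivial case $\vec{b}=\vec{0}$ handled separately since then both sides vanish).

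Next I would apply the scalar AM--GM inequality in its parametrized form: for any nonnegative reals $x, y \ge 0$ and any $\rho > 0$, expanding the nonnegative square $\bigl(\sqrt{\rho}\,x - y/\sqrt{\rho}\bigr)^2 \ge 0$ gives $\rho x^2 + y^2/\rho \ge 2xy$, i.e., $xy \le \frac{\rho}{2} x^2 + \frac{1}{2\rho} y^2$. Substituting $x = \|\vec{a}\|_*$ and $y = \|\vec{b}\|$ into this inequality and chaining with the Cauchy--Schwarz bound above immediately yields
\[
    \langle \vec{a}, \vec{b}\rangle \;\le\; \|\vec{a}\|_*\, \|\vec{b}\| \;\le\; \frac{\rho}{2}\|\vec{a}\|_*^2 + \frac{1}{2\rho}\|\vec{b}\|^2,
\]
which is the stated bound.

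There is no real obstacle here: the lemma is a textbook weighted Young's inequality for dual norm pairings, and the only (trivial) care needed is in handling the boundary case $\vec{b} = \vec{0}$ within the dual-norm inequality. I would present the proof in one or two lines, in the order described above.
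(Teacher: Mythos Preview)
Your proof is correct and matches the paper's own argument essentially verbatim: both combine the generalized Cauchy--Schwarz inequality $\langle \vec{a},\vec{b}\rangle \le \|\vec{a}\|_*\,\|\vec{b}\|$ with the weighted AM--GM bound $xy \le \tfrac{\rho}{2}x^2 + \tfrac{1}{2\rho}y^2$. The only cosmetic difference is the order of presentation (the paper starts from the right-hand side and works down via AM--GM then Cauchy--Schwarz, whereas you chain the two inequalities left to right).
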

\begin{proof}
    By the arithmetic mean-geometric mean inequality, we have
    \[
        \frac{\rho}{2} \|\vec{a}\|_*^2 + \frac{1}{2\rho}\|\vec{b}\|^2 = \frac{1}{2}\mleft(\rho \|\vec{a}\|_*^2 + \frac{1}{\rho}\|\vec{b}\|^2\mright) \ge \sqrt{\|\vec{a}\|_*^2 \cdot\|\vec{b}\|^2} = \|\vec{a}\|_* \cdot\|\vec{b}\| \ge \langle \vec{a}, \vec{b}\rangle,
    \]
    where we used the generalized Cauchy-Schwarz inequality in the last step.
\end{proof}

\begin{lemma}\label{lem:divergence triangle}
    Let $\cD \subseteq \bbR^d$ be closed and convex, let $\vec{g} \in\bbR^n, \vec{c} \in \cD$, and let $\regu : \cD \to \bbR_{\ge 0}$ be a $1$-strongly convex differentiable regularizer with respect to some norm $\|\cdot\|$, and let $\|\cdot\|_*$ be the dual norm to $\|\cdot\|$. Then,
    \[
        \vec{a}^* \defeq \argmin_{\hat{\vec{a}}\in\cD} \mleft\{\langle \vec{g}, \hat{\vec{a}}\rangle + \frac{1}{\eta} \div{\hat{\vec{a}}}{\vec{c}}\mright\}
    \]
    is well defined (that is, the minimizer exists and is unique), and for all $\hat{\vec{a}} \in \cD$ satisfies the inequality
    \[
        \langle \vec{g}, \vec{a}^* - \hat{\vec{a}}\rangle \le \frac{1}{\eta}\Big(\div{\hat{\vec{a}}}{\vec{c}} - \div{\hat{\vec{a}}}{\vec{a}^*} - \div{\vec{a}^*}{\vec{c}}\Big).
    \]
\end{lemma}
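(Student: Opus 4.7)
\smallskip
\noindent\textbf{Proof proposal.} The plan is to first establish existence and uniqueness of $\vec{a}^*$ by a standard strong-convexity argument, then derive the inequality by combining the first-order optimality condition at $\vec{a}^*$ with the classical three-point identity for Bregman divergences.

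First I would observe that $\div{\cdot}{\vec{c}}$ is $1$-strongly convex with respect to $\|\cdot\|$ on $\cD$: indeed its gradient in the first argument is $\nabla\regu(\hat{\vec{a}}) - \nabla\regu(\vec{c})$, so strong convexity of $\regu$ transfers verbatim to $\div{\cdot}{\vec{c}}$. Adding the linear term $\langle \vec{g}, \hat{\vec{a}}\rangle$ preserves strong convexity, and strong convexity together with closedness of $\cD$ makes the objective coercive on $\cD$ (the quadratic lower bound eventually dominates the linear perturbation), so a minimizer exists; uniqueness follows immediately from strict convexity. This handles the well-definedness claim even though $\cD$ may be unbounded.

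Next, since $\vec{a}^*$ minimizes a convex differentiable objective over the closed convex set $\cD$, the standard first-order optimality condition gives
\[
    \mleft\langle \vec{g} + \frac{1}{\eta}\bigl(\nabla\regu(\vec{a}^*) - \nabla\regu(\vec{c})\bigr),\ \hat{\vec{a}} - \vec{a}^* \mright\rangle \ge 0 \qquad \forall\,\hat{\vec{a}} \in \cD,
\]
which rearranges to $\langle \vec{g}, \vec{a}^* - \hat{\vec{a}}\rangle \le \frac{1}{\eta}\langle \nabla\regu(\vec{a}^*) - \nabla\regu(\vec{c}),\ \hat{\vec{a}} - \vec{a}^*\rangle$. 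The remaining task is to recognize the right-hand side as a combination of three Bregman divergences.

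To that end I would expand all three divergences using the definition $\div{\vec{x}}{\vec{y}} = \regu(\vec{x}) - \regu(\vec{y}) - \langle \nabla\regu(\vec{y}), \vec{x} - \vec{y}\rangle$ and verify the three-point identity
\[
    \div{\hat{\vec{a}}}{\vec{c}} - \div{\hat{\vec{a}}}{\vec{a}^*} - \div{\vec{a}^*}{\vec{c}} = \langle \nabla\regu(\vec{a}^*) - \nabla\regu(\vec{c}),\ \hat{\vec{a}} - \vec{a}^*\rangle.
\]
This is a direct algebraic cancellation: the $\regu(\hat{\vec{a}})$ and $\regu(\vec{a}^*)$ terms cancel in pairs, and the inner products involving $\nabla\regu(\vec{c})$ collapse to $\langle \nabla\regu(\vec{c}), \vec{a}^* - \hat{\vec{a}}\rangle$, which combines with the remaining $\langle \nabla\regu(\vec{a}^*), \hat{\vec{a}} - \vec{a}^*\rangle$ to produce exactly the claimed expression. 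Substituting this identity into the optimality bound yields the lemma.

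No step is genuinely an obstacle here: the first-order condition is standard on closed convex sets, and the three-point identity is purely mechanical. The only mild subtlety is the coercivity argument needed to justify existence of the minimizer when $\cD$ is unbounded, but this is handled by the quadratic lower bound from strong convexity.
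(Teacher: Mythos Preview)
Your proposal is correct and follows essentially the same route as the paper: first-order optimality at $\vec{a}^*$ combined with the three-point Bregman identity. If anything, you are slightly more careful than the paper, which skips the existence/uniqueness argument and goes straight to the optimality condition.
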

\begin{proof}
    The necessary first-order optimality conditions for the argmin problem in the statement is
    \[
        \mleft\langle \nabla_{\vec{a}}\mleft[\langle \vec{g}, {\vec{a}}\rangle + \frac{1}{\eta} \div{{\vec{a}}}{\vec{c}}\mright](\vec{a}^*), \hat{\vec{a}} - \vec{a}^* \mright\rangle \ge 0 \quad \forall \,\hat{\vec{a}}\in \cD.
    \]
    Expanding the gradient, we have that for all $\hat{\vec{a}}\in\cD$
    \[
        \mleft\langle \vec{g} + \frac{1}{\eta}\Big(\nabla \regu(\vec{a}^*) - \nabla\regu(\vec{c})\Big), \hat{\vec{a}} - \vec{a}^* \mright\rangle \ge 0
        \iff
        \langle \vec{g}, \vec{a}^* - \hat{\vec{a}} \rangle \le \frac{1}{\eta}\Big\langle
            \nabla \regu(\vec{a}^*) - \nabla\regu(\vec{c}), \hat{\vec{a}} - \vec{a}^* 
        \Big\rangle.
    \]
    Finally, noting that
    \begin{align*}
        \Big\langle\nabla \regu(\vec{a}^*) - \nabla\regu(\vec{c}), \hat{\vec{a}} - \vec{a}^*\Big\rangle &= \Big(\regu(\hat{\vec{a}}) - \regu(\vec{c}) - \langle \nabla \regu(\vec{c}), \hat{\vec{a}} - \vec{c} \rangle \Big)\\
            &\hspace{1cm} -\Big(\regu(\hat{\vec{a}}) - \regu(\vec{a}^*) - \langle \nabla \regu(\vec{a}^*), \hat{\vec{a}} - \vec{a}^* \rangle \Big)\\
            &\hspace{1cm} -\Big(\regu(\vec{a}^*) - \regu(\vec{c}) - \langle \nabla \regu(\vec{c}), \vec{a}^* - \vec{c} \rangle \Big)\\
        &= \div{\hat{\vec{a}}}{\vec{c}} - \div{\hat{\vec{a}}}{\vec{a}^*} - \div{\vec{a}^*}{\vec{c}},
    \end{align*}
    yields the statement.
\end{proof}

\begin{restatable}{proposition}{propomdbound}\label{prop:omd bound}
    Let $\regu : \cD \to \bbR_{\ge 0}$ be a $1$-strongly convex differentiable regularizer with respect to some norm $\|\cdot\|$, and let $\|\cdot\|_*$ be the dual norm to $\|\cdot\|$. For all $\xhat \in \cD$, all $\eta > 0$, and all times $T$, the regret cumulated by (predictive) OMD (\cref{algo:predictive omd}) compared to any fixed strategy $\hat{\vec{x}} \in \cD$ is bounded as
    \begin{equation}\label{eq:oomd regret bound}
        R^T(\xhat) \le \frac{\div{\xhat}{\vec{z}^0}}{\eta} + \eta\sum_{t=1}^T \|\vec{\ell}^t - \vec{m}^t\|_*^2 - \frac{1}{8\eta} \sum_{t=1}^{T-1} \|\vec{x}^{t+1} - \vec{x}^t\|^2.
    \end{equation}
\end{restatable}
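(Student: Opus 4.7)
The plan is to follow the standard predictive OMD analysis, decomposing the per-round regret through the intermediate iterate $\vec{z}^t$ and then applying the two technical lemmas provided. For each $t$, I would write
\[
    \langle \vec{\ell}^t, \vec{x}^t - \xhat \rangle = \langle \vec{\ell}^t - \vec{m}^t, \vec{x}^t - \vec{z}^t \rangle + \langle \vec{m}^t, \vec{x}^t - \vec{z}^t \rangle + \langle \vec{\ell}^t, \vec{z}^t - \xhat \rangle.
\]
The second and third terms will be handled by \cref{lem:divergence triangle}; the first by \cref{lem:amgm}.

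For the second term, I would apply \cref{lem:divergence triangle} with $(\vec{g}, \vec{c}) = (\vec{m}^t, \vec{z}^{t-1})$, so that $\vec{a}^* = \vec{x}^t$ by \cref{line:omd next xt}, taking $\hat{\vec{a}} = \vec{z}^t$. For the third term, I would apply the same lemma with $(\vec{g}, \vec{c}) = (\vec{\ell}^t, \vec{z}^{t-1})$, so that $\vec{a}^* = \vec{z}^t$ by \cref{line:omd next zt}, taking $\hat{\vec{a}} = \xhat$. Adding the two inequalities, the $\frac{1}{\eta} D_\regu(\vec{z}^t \| \vec{z}^{t-1})$ terms cancel (one with positive, one with negative sign), leaving
\[
    \langle\vec{m}^t, \vec{x}^t - \vec{z}^t\rangle + \langle\vec{\ell}^t, \vec{z}^t - \xhat\rangle \le \frac{1}{\eta}\bigl(\div{\xhat}{\vec{z}^{t-1}} - \div{\xhat}{\vec{z}^{t}}\bigr) - \frac{1}{\eta}\bigl(\div{\vec{z}^t}{\vec{x}^t} + \div{\vec{x}^t}{\vec{z}^{t-1}}\bigr).
\]
By $1$-strong convexity of $\regu$, each Bregman divergence in the last parenthesis is lower-bounded by $\frac{1}{2}\|\cdot\|^2$. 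For the first inner-product term I would apply \cref{lem:amgm} with $\rho = 2\eta$ to get $\langle \vec{\ell}^t - \vec{m}^t, \vec{x}^t - \vec{z}^t\rangle \le \eta\|\vec{\ell}^t - \vec{m}^t\|_*^2 + \frac{1}{4\eta}\|\vec{x}^t - \vec{z}^t\|^2$.

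Summing over $t = 1,\dots,T$, the divergence differences telescope to at most $\div{\xhat}{\vec{z}^0}$ (using $\div{\xhat}{\vec{z}^T} \ge 0$), producing
\[
    R^T(\xhat) \le \frac{\div{\xhat}{\vec{z}^0}}{\eta} + \eta\sum_{t=1}^T \|\vec{\ell}^t - \vec{m}^t\|_*^2 - \frac{1}{4\eta}\sum_{t=1}^T \|\vec{z}^t - \vec{x}^t\|^2 - \frac{1}{2\eta}\sum_{t=1}^T \|\vec{x}^t - \vec{z}^{t-1}\|^2.
\]

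The main obstacle, and the only genuinely delicate step, is converting the last two negative quadratic sums into $-\frac{1}{8\eta}\sum_{t=1}^{T-1}\|\vec{x}^{t+1} - \vec{x}^t\|^2$. Here I would use the triangle inequality together with $(a+b)^2 \le 2a^2 + 2b^2$ to write, for each $t = 1,\dots,T-1$,
\[
    \|\vec{x}^{t+1} - \vec{x}^t\|^2 \le 2\|\vec{x}^{t+1} - \vec{z}^t\|^2 + 2\|\vec{z}^t - \vec{x}^t\|^2,
\]
so that $\tfrac{1}{8\eta}\|\vec{x}^{t+1} - \vec{x}^t\|^2 \le \tfrac{1}{4\eta}\|\vec{z}^t - \vec{x}^t\|^2 + \tfrac{1}{4\eta}\|\vec{x}^{t+1} - \vec{z}^t\|^2$. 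The $\tfrac{1}{4\eta}\|\vec{z}^t - \vec{x}^t\|^2$ piece is absorbed by the first negative sum, while the $\tfrac{1}{4\eta}\|\vec{x}^{t+1} - \vec{z}^t\|^2$ piece is absorbed by a shifted portion of the second negative sum (after reindexing $t \mapsto t+1$, the summand $\|\vec{x}^{t+1} - \vec{z}^t\|^2$ appears for $t = 1,\dots,T-1$ with coefficient $\tfrac{1}{2\eta}$, which is strictly larger than needed). Care must be taken to keep track of index ranges and to confirm that the leftover positive contributions are nonnegative and can be dropped, yielding exactly the claimed bound.
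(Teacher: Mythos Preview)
Your proposal is correct and follows essentially the same route as the paper's proof: the same three-term decomposition through $\vec{z}^t$, the same applications of \cref{lem:divergence triangle} and \cref{lem:amgm} (with $\rho=2\eta$), the same telescoping and strong-convexity lower bounds on the Bregman divergences, and the same final conversion via $\|\vec{x}^{t+1}-\vec{x}^t\|^2 \le 2\|\vec{x}^{t+1}-\vec{z}^t\|^2 + 2\|\vec{z}^t-\vec{x}^t\|^2$ after aligning the index ranges. The only cosmetic difference is that the paper weakens the $-\tfrac{1}{2\eta}\|\vec{x}^t-\vec{z}^{t-1}\|^2$ coefficient to $-\tfrac{1}{4\eta}$ before reindexing, whereas you do so after.
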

\begin{proof}
    We combine several techniques and insights from the original works of \citet{Rakhlin13:Online} and \citet{Syrgkanis15:Fast}.
    For any $\xhat \in \cD$,
    \[
        R^T(\xhat) = \sum_{t=1}^T \langle \vec{\ell}^t, \vec{x}^t - \xhat\rangle = \sum_{t=1}^T \bigg(%
            \underbrace{\langle \vec{\ell}^t - \vec{m}^t, \vec{x}^t - \vec{z}^t \rangle}_{\circled{A}}%
            +%
            \underbrace{\langle \vec{m}^t, \vec{x}^t - \vec{z}^t \rangle}_{\circled{B}}%
            +%
            \underbrace{\langle \vec{\ell}^t, \vec{z}^t - \xhat \rangle}_{\circled{C}}%
        \bigg)
    \]
    We now bound each of the three terms on the right-hand side:
    \begin{itemize}[leftmargin=12mm,nolistsep,itemsep=1mm]
        \item[\circled{A}] We use \cref{lem:amgm} with $\rho = 2\eta$ to bound the first term:
            \[
                \langle \vec{\ell}^t - \vec{m}^t, \vec{x}^t - \vec{z}^{t} \rangle \le \eta\|\vec{\ell}^t - \vec{m}^t\|_*^2 + \frac{1}{4\eta}\|\vec{x}^t - \vec{z}^{t}\|^2.
            \]
        \item[\circled{B}\,\circled{C}] In order to bound these terms, we use \cref{lem:divergence triangle}:
            \begin{align*}
                \langle \vec{m}^t, \vec{x}^t - \vec{z}^{t}\rangle &\le \frac{1}{\eta}\Big( \div{\vec{z}^{t}}{\vec{z}^{t-1}} - \div{\vec{z}^t}{\vec{x}^t} - \div{\vec{x}^t}{\vec{z}^{t-1}} \Big)\\
                \langle \vec{\ell}^t, \vec{z}^t - \xhat\rangle &\le \frac{1}{\eta} \Big( \div{\xhat}{\vec{z}^{t-1}} - \div{\xhat}{\vec{z}^t} - \div{\vec{z}^t}{\vec{z}^{t-1}} \Big)
            \end{align*}
    \end{itemize}
    Hence, combining all bounds, we have that for any $\xhat \in \cD$,
    \begin{align*}
        R^T(\xhat) &\le \sum_{t=1}^T \bigg(%
            \eta\|\vec{\ell}^t - \vec{m}^t\|_*^2 + \frac{1}{4\eta}\|\vec{x}^t - \vec{z}^{t}\|^2%
            \\%
            &\hspace{3cm}+\frac{1}{\eta}\Big(  \div{\xhat}{\vec{z}^{t-1}} - \div{\xhat}{\vec{z}^t} - \div{\vec{z}^t}{\vec{x}^t} - \div{\vec{x}^t}{\vec{z}^{t-1}} \Big)%
        \bigg)\\
            &\le \sum_{t=1}^T \bigg(%
            \eta\|\vec{\ell}^t - \vec{m}^t\|_*^2 + \frac{1}{4\eta}\|\vec{x}^t - \vec{z}^{t}\|^2 + \frac{1}{\eta}\Big(\div{\xhat}{\vec{z}^{t-1}} - \div{\xhat}{\vec{z}^t}\Big)\\%
            &\hspace{7cm}- \frac{1}{2\eta}\Big(\|\vec{x}^t - \vec{z}^t\|^2 + \|\vec{x}^t - \vec{z}^{t-1}\|^2\Big)\bigg)\\
            &= \sum_{t=1}^T \bigg(%
            \eta\|\vec{\ell}^t - \vec{m}^t\|_*^2 - \frac{1}{4\eta}\|\vec{x}^t - \vec{z}^{t}\|^2 - \frac{1}{2\eta}\|\vec{x}^t - \vec{z}^{t-1}\|^2 + \frac{1}{\eta}\Big(\div{\xhat}{\vec{z}^{t-1}} - \div{\xhat}{\vec{z}^t}\Big)\bigg)\\
            &\le \sum_{t=1}^T \bigg(%
            \eta\|\vec{\ell}^t - \vec{m}^t\|_*^2 - \frac{1}{4\eta}\|\vec{x}^t - \vec{z}^{t}\|^2 - \frac{1}{4\eta}\|\vec{x}^t - \vec{z}^{t-1}\|^2 + \frac{1}{\eta}\Big(\div{\xhat}{\vec{z}^{t-1}} - \div{\xhat}{\vec{z}^t}\Big)\bigg)\\
    \end{align*}
    where we used the fact that $\div{\vec{a}}{\vec{b}} \ge \frac{1}{2}\|\vec{a} - \vec{b}\|^2$ for all $\vec{a},\vec{b}\in\cD$ (because $\regu$ is $1$-strongly convex by hypothesis) in the second inequality. Since the differences of divergences on the right-hand side are telescopic, we further obtain
    \begin{align*}
        R^T(\xhat) &\le \frac{\div{\xhat}{\vec{z}^0} - \div{\xhat}{\vec{z}^t}}{\eta}%
           +\eta\sum_{t=1}^T\|\vec{\ell}^t - \vec{m}^t\|_*^2%
           -\frac{1}{4\eta}\sum_{t=1}^T\|\vec{x}^t - \vec{z}^{t}\|^2%
           -\frac{1}{4\eta}\sum_{t=1}^T\|\vec{x}^t - \vec{z}^{t-1}\|^2\\
        &\le \frac{\div{\xhat}{\vec{z}^0}}{\eta}%
            +\eta\sum_{t=1}^T\|\vec{\ell}^t - \vec{m}^t\|_*^2%
            -\frac{1}{4\eta}\sum_{t=1}^T\|\vec{x}^t - \vec{z}^{t}\|^2%
            -\frac{1}{4\eta}\sum_{t=1}^T\|\vec{x}^t - \vec{z}^{t-1}\|^2\\
        &= \frac{\div{\xhat}{\vec{z}^0}}{\eta}%
            +\eta\sum_{t=1}^T\|\vec{\ell}^t - \vec{m}^t\|_*^2%
            -\frac{1}{4\eta}\sum_{t=1}^T\|\vec{x}^t - \vec{z}^{t}\|^2%
            -\frac{1}{4\eta}\sum_{t=0}^{T-1}\|\vec{x}^{t+1} - \vec{z}^t\|^2\\
        &\le \frac{\div{\xhat}{\vec{z}^0}}{\eta}%
            +\eta\sum_{t=1}^T\|\vec{\ell}^t - \vec{m}^t\|_*^2%
            -\frac{1}{4\eta}\sum_{t=1}^{T-1}\|\vec{x}^t - \vec{z}^{t}\|^2%
            -\frac{1}{4\eta}\sum_{t=1}^{T-1}\|\vec{x}^{t+1} - \vec{z}^t\|^2\\
        &= \frac{\div{\xhat}{\vec{z}^0}}{\eta}%
            +\eta\sum_{t=1}^T\|\vec{\ell}^t - \vec{m}^t\|_*^2%
            -\frac{1}{4\eta}\sum_{t=1}^{T-1}\Big(\|\vec{x}^t - \vec{z}^{t}\|^2 + \|\vec{x}^{t+1} - \vec{z}^t\|^2\Big),
    \end{align*}
    where we used the nonnegativity of divergences in the second inequality, and some trivial manipulation of summation indices in the later steps. Finally, we use the triangle inequality for the norm $\|\cdot\|$ to conclude that at all $t=1,\dots, T-1$
    \[
        \|\vec{x}^t - \vec{z}^{t}\|^2 + \|\vec{x}^{t+1} - \vec{z}^t\|^2 \ge \frac{1}{2}\|\vec{x}^{t+1} - \vec{x}^t\|^2,
    \]
    and hence for all $\xhat \in \cD$
    \[
        R^T(\xhat) \le \frac{\div{\xhat}{\vec{z}^0}}{\eta}%
            +\eta\sum_{t=1}^T\|\vec{\ell}^t - \vec{m}^t\|_*^2%
            -\frac{1}{8\eta}\sum_{t=1}^{T-1} \|\vec{x}^{t+1} - \vec{x}^t\|^2.
    \]
\end{proof}

    When $\nabla\regu(\vec{z}^0) = \vec{0}$ as in \cref{line:omd setup} in \cref{algo:predictive omd}, $\div{\xhat}{\vec{z}^0} \le \regu(\xhat)$ and so \cref{prop:omd bound} becomes
    \begin{corollary}
          For all $\xhat \in \cD$, all $\eta > 0$, and all times $T$, the regret cumulated by (predictive) OMD (\cref{algo:predictive omd}) compared to any fixed strategy $\hat{\vec{x}} \in \cD$ is bounded as
    \begin{equation}\label{eq:oomd regret bound}
        R^T(\xhat) \le \frac{\regu(\xhat)}{\eta} + \eta\sum_{t=1}^T \|\vec{\ell}^t - \vec{m}^t\|_*^2 - \frac{1}{8\eta} \sum_{t=1}^{T-1} \|\vec{x}^{t+1} - \vec{x}^t\|^2.
    \end{equation}
    \end{corollary}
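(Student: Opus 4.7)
The plan is to mirror the standard analysis of predictive OMD, adapted to the non-compact domain setting by working with Bregman divergences rather than with quantities like $\max_{\xhat\in\cD}\regu(\xhat)$ that might be unbounded. The starting point is the decomposition, valid for any $\xhat\in\cD$,
\[
    R^T(\xhat) = \sum_{t=1}^T \langle\vec{\ell}^t,\vec{x}^t - \xhat\rangle = \sum_{t=1}^T\Big(\langle\vec{\ell}^t-\vec{m}^t,\vec{x}^t-\vec{z}^t\rangle + \langle\vec{m}^t,\vec{x}^t-\vec{z}^t\rangle + \langle\vec{\ell}^t,\vec{z}^t-\xhat\rangle\Big),
\]
which isolates an error term depending on the prediction quality and two inner products that can be attacked via first-order optimality of the OMD updates.

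The first step is to bound the prediction-error term by the generalized Cauchy--Schwarz inequality combined with an AM-GM split (this is the obvious place to introduce a free parameter $\rho$): choosing $\rho=2\eta$ produces $\eta\|\vec{\ell}^t-\vec{m}^t\|_*^2 + \tfrac{1}{4\eta}\|\vec{x}^t-\vec{z}^t\|^2$. I would isolate this AM-GM step as a small lemma, since it is routine. The second and third inner products are both of the form $\langle\vec{g}, \vec{a}^* - \hat{\vec{a}}\rangle$ where $\vec{a}^*$ minimizes $\langle\vec{g},\cdot\rangle + \frac{1}{\eta}\div{\cdot}{\vec{c}}$ over $\cD$. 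The key lemma I would state and prove is the three-point identity: for this kind of minimizer, first-order optimality yields
\[
    \langle\vec{g},\vec{a}^*-\hat{\vec{a}}\rangle \le \tfrac{1}{\eta}\big(\div{\hat{\vec{a}}}{\vec{c}} - \div{\hat{\vec{a}}}{\vec{a}^*} - \div{\vec{a}^*}{\vec{c}}\big).
\]
Applying this with $(\vec{g},\vec{a}^*,\vec{c})=(\vec{m}^t,\vec{x}^t,\vec{z}^{t-1})$ for the second term and $(\vec{\ell}^t,\vec{z}^t,\vec{z}^{t-1})$ for the third term, and then summing, creates a telescoping sum $\sum_t\div{\xhat}{\vec{z}^{t-1}}-\div{\xhat}{\vec{z}^t} \le \div{\xhat}{\vec{z}^0}$ together with a pile of negative divergence terms.

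Next I would use $1$-strong convexity of $\regu$ to lower-bound those negative divergences by $\tfrac12\|\cdot\|^2$, which combines with the $+\tfrac{1}{4\eta}\|\vec{x}^t-\vec{z}^t\|^2$ coming from AM-GM to leave negative contributions $-\tfrac{1}{4\eta}\|\vec{x}^t-\vec{z}^t\|^2 - \tfrac{1}{4\eta}\|\vec{x}^t-\vec{z}^{t-1}\|^2$ at each step. The main obstacle—and the only step requiring care—is converting these into the stability term $-\tfrac{1}{8\eta}\sum_{t=1}^{T-1}\|\vec{x}^{t+1}-\vec{x}^t\|^2$ stated in the proposition. I would pair $\|\vec{x}^t-\vec{z}^t\|^2$ at time $t$ with $\|\vec{x}^{t+1}-\vec{z}^t\|^2$ at time $t+1$ (after reindexing the second sum) and apply the triangle inequality in the form $\|a\|^2+\|b\|^2 \ge \tfrac12\|a+b\|^2$ to obtain $\|\vec{x}^t-\vec{z}^t\|^2+\|\vec{x}^{t+1}-\vec{z}^t\|^2 \ge \tfrac12\|\vec{x}^{t+1}-\vec{x}^t\|^2$, accounting for the factor $\tfrac18$ in the final bound. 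Collecting all of the pieces yields the stated inequality, and observing that when $\nabla\regu(\vec{z}^0)=\vec{0}$ one has $\div{\xhat}{\vec{z}^0}\le\regu(\xhat)$ recovers the form quoted in \cref{prop:oco bound}.
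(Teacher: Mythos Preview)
Your proposal is correct and follows essentially the same route as the paper: the same three-term decomposition, the AM--GM lemma (with $\rho=2\eta$) for the prediction-error term, the three-point Bregman identity (the paper's \cref{lem:divergence triangle}) for the two optimality-based terms, telescoping of $\div{\xhat}{\vec{z}^{t-1}}-\div{\xhat}{\vec{z}^t}$, strong-convexity lower bounds on the remaining divergences, and finally the reindex-and-pair step with $\|a\|^2+\|b\|^2\ge\tfrac12\|a+b\|^2$ to extract the $-\tfrac{1}{8\eta}\|\vec{x}^{t+1}-\vec{x}^t\|^2$ stability term. The only cosmetic difference is that the paper first obtains $-\tfrac{1}{4\eta}\|\vec{x}^t-\vec{z}^t\|^2-\tfrac{1}{2\eta}\|\vec{x}^t-\vec{z}^{t-1}\|^2$ and then explicitly weakens the second coefficient to $\tfrac{1}{4\eta}$ before pairing, whereas you jump straight to the symmetric $-\tfrac{1}{4\eta}$ form; this is harmless.
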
 

        \section{Online Linear Optimization to Approachability}\label{app:proofs olo to approachability}

\propolotoapproachability*
\begin{proof}
    Let $\cK \defeq C^\circ \cap \bbB^n_2$. As proved by \citet{Abernethy11:Blackwell}, the distance from the generic point $\vec{z}$ to the convex cone $C$ can be computed as
    \[
        \min_{\hat{\vec{s}}\in C} \|\hat{\vec{s}} - \vec{z}\|_2 = \max_{\thetahat \in \cK}\, \langle \thetahat, \vec{z} \rangle.
    \]
    Hence,
    \begin{align}
        \min_{\hat{\vec{s}}\in C}\ \mleft\| \hat{\vec{s}} - \frac{1}{T}\sum_{t=1}^T \vec{u}(\vec{x}^t, \vec{y}^t)\mright\|_2 &= \max_{\thetahat \in \cK}\,\mleft\langle \thetahat, \frac{1}{T}\sum_{t=1}^T \vec{u}(\vec{x}^t, \vec{y}^t)\mright\rangle\nonumber\\
            &= - \frac{1}{T}\sum_{t=1}^T \langle\vec{\theta}^t, \vec{\ell}^t\rangle + \frac{1}{T}\max_{\thetahat \in \cK} \mleft\{ \sum_{t=1}^T \langle\vec{\ell}^t, \vec{\theta}^t - \thetahat \rangle\mright\}\\
            &= - \frac{1}{T}\sum_{t=1}^T \langle\vec{\theta}^t, \vec{\ell}^t\rangle + \frac{1}{T}\max_{\thetahat \in \cK} R(\hat{\vec{\theta}})\label{eq:distance bound}
    \end{align}
    where the second step uses $\vec{\ell}^t = -\vec{u}(\vec{x}^t,\vec{y}^t)$.
    Since $\vec{\theta}^t \in \cD \subseteq C^\circ$, the halfspace $H^t \defeq \{\vec{z} : \langle \vec{\theta}^t, \vec{z}\rangle \le 0\}$ contains $C$ at all times $t$. Furthermore, by construction $\vec{x}^t$ forces $H^t$, and so $\langle \vec{\theta}^t, \vec{\ell}^t\rangle = -\langle \vec{\theta}^t, \vec{u}(\vec{x}^t, \vec{y}^t)\rangle \ge 0$, and therefore
    \begin{equation}\label{eq:bound sum}
        -\frac{1}{T} \sum_{t=1}^T \langle \vec{\theta}^t, \vec{\ell}^t\rangle \le 0.
    \end{equation}
    Plugging~\eqref{eq:bound sum} into~\eqref{eq:distance bound} yields the statement.
\end{proof}

        \section{Connections between FTRL, OMD and RM, \rmp{}}\label{app:proofs rm rmp}

\lemsimplextoblackwell*
\begin{proof}
    The regret $R^T(\xhat)$ cumulated by PRM and \prmp{} satisfies
        \begin{align*}
            \frac{1}{T} R^T(\xhat) &= \frac{1}{T} \sum_{t=1}^T \Big(\langle \vec{\ell}^t, \vec{x}^t\rangle - \langle\vec{\ell}^t, \xhat\rangle\Big) = \sum_{t=1}^T \Big(\langle \vec{\ell}^t, \vec{x}^t\rangle\langle \vec{1}, \xhat\rangle - \langle\vec{\ell}^t, \xhat\rangle\Big)\\
                &= \mleft\langle \frac{1}{T}\sum_{t=1}^T \langle \vec{\ell}^t,\vec{x}^t\rangle\vec{1} - \vec{\ell}^t, \xhat\mright\rangle
                = \mleft\langle \frac{1}{T}\sum_{t=1}^T \vec{u}(\vec{x}^t, \vec{\ell}^t), \xhat\mright\rangle\\
                &= \min_{\hat{\vec{s}} \in \bbR^n_{\le 0}} \mleft\langle -\hat{\vec{s}} + \frac{1}{T}\sum_{t=1}^T \vec{u}(\vec{x}^t, \vec{\ell}^t), \xhat\mright\rangle,\numberthis\label{eq:regret analysis}
        \end{align*}
        where we used the fact that $\xhat \in \Delta^{\!n}$ in the second equality, and the fact that $\min_{\hat{\vec{s}}\in\bbR^n_{\le 0}} \langle -\hat{\vec{s}}, \xhat\rangle = 0$ since $\xhat \ge \vec{0}$. Applying the Cauchy-Schwarz inequality to the right-hand side of~\eqref{eq:regret analysis}, we obtain
        \begin{align*}
            \frac{1}{T} R^T(\xhat) &\le \min_{\hat{\vec{s}} \in \bbR^n_{\le 0}} \mleft\| -\hat{\vec{s}} + \frac{1}{T}\sum_{t=1}^T \vec{u}(\vec{x}^t, \vec{\ell}^t)\mright\|_2 \|\xhat\|_2.
        \end{align*}
        So, using the fact that $\|\xhat\|_2 \le 1$ for any $\xhat \in \Delta^{\!n}$
        \begin{align*}
            \frac{1}{T} R^T(\xhat) &\le \min_{\hat{\vec{s}} \in \bbR^n_{\le 0}} \mleft\| -\hat{\vec{s}} + \frac{1}{T}\sum_{t=1}^T \vec{u}(\vec{x}^t, \vec{\ell}^t)\mright\|_2
        \end{align*}
        as we wanted to show.
\end{proof}

\thmrmisftrl*
\begin{proof}
    Given the definition of $\Gamma$ and \cref{algo:olo to approachability}, at all times $t$, $\cL_\eta^\text{ftrl}$ observes loss $-\vec{u}(\vec{x}^t,\vec{\ell}^t)$, where $\vec{u}(\vec{x}^t, \vec{\ell}^t) \defeq \langle \vec{\ell}^t, \vec{x}^t\rangle\vec{1} - \vec{\ell}^t$ is the vector-valued payoff in $\Gamma$ and measures the increase of regret at time $t$ relative to each vertex of the simplex. For the specific choice of domain $\cD = \bbR_{\ge 0}^n$ and regularizer $\regu(\vec{x}) = \frac{1}{2}\|\vec{x}\|_2^2$, the computation of the next iterate (\cref{line:ftrl next strategy} in non-predictive FTRL, \cref{algo:predictive ftrl}) reduces to
    \begin{align*}
      \vec{\theta}^t &= \argmin_{\hat{\vec{x}} \in \bbR^n_{\ge 0}} \mleft\{\mleft\langle-\sum_{t=1}^T \vec{u}(\vec{x}^t,\vec{\ell}^t), \xhat\mright\rangle + \frac{1}{2\eta}\|\xhat\|_2^2\mright\} \\
        &= \argmin_{\hat{\vec{x}} \in \bbR^n_{\ge 0}} \mleft\{\mleft\langle-2\eta\sum_{t=1}^T \vec{u}(\vec{x}^t,\vec{\ell}^t), \xhat\mright\rangle + \|\xhat\|_2^2\mright\} \\
        &= \argmin_{\hat{\vec{x}} \in \bbR^n_{\ge 0}} \mleft\|\xhat - \eta \sum_{t=1}^T \vec{u}(\vec{x}^t, \vec{\ell}^t)\mright\|_2^2
        = \mleft[\eta \sum_{t=1}^T \vec{u}(\vec{x}^t, \vec{\ell}^t)\mright]^+
        = \eta \mleft[\sum_{t=1}^T \vec{u}(\vec{x}^t, \vec{\ell}^t)\mright]^+.
    \end{align*}
    Now, the value of $\eta > 0$ does not affect the forcing action that needs to be played on Line 3 of \cref{algo:olo to approachability}. Indeed, whenever $\vec{\theta}^t \neq 0$, $\vec{g}(\vec{\theta}^t) = \vec{\theta}^t / \|\vec{\theta}^t\|_1$, so $\eta$ cancels out in the fraction and at all $t$,
    \[
        \vec{x}^t = \frac{\mleft[\sum_{t=1}^T \vec{u}(\vec{x}^t, \vec{\ell}^t)\mright]^+}{\mleft\|\mleft[\sum_{t=1}^T \vec{u}(\vec{x}^t, \vec{\ell}^t)\mright]^+\mright\|_1}.
    \]
    This is exactly the strategy output by RM.
\end{proof}

\thmrmpisomd*
\begin{proof}
      Given the definition of $\Gamma$ and \cref{algo:olo to approachability}, at all times $t$, $\cL_\eta^\text{omd}$ observes loss $-\vec{u}(\vec{x}^t,\vec{\ell}^t)$, where $\vec{u}(\vec{x}^t, \vec{\ell}^t) \defeq \langle \vec{\ell}^t, \vec{x}^t\rangle\vec{1} - \vec{\ell}^t$ is the vector-valued payoff in $\Gamma$ and measures the increase of regret at time $t$ relative to each vertex of the simplex.  In the non-predictive version of OMD $\vec{m}^t = \vec{0}$, \cref{line:omd next xt} in \cref{algo:predictive omd} is equivalent to $\argmin \div{\xhat}{\vec{z}^{t-1}} = \vec{z}^{t-1}$. Hence, for the specific choice of domain $\cD = \bbR_{\ge 0}^n$ and regularizer $\regu(\vec{x}) = \frac{1}{2}\|\vec{x}\|_2^2$, the computation of the next iterate (\cref{line:omd next zt} in non-predictive OMD, \cref{algo:predictive omd}) reduces to
    \begin{align*}
        \vec{\theta}^t =\ \vec{z}^{t-1} &= \argmin_{\hat{\vec{z}} \in \bbR^n_{\ge 0}} \mleft\{\Big\langle-\vec{u}(\vec{x}^{t-1},\vec{\ell}^{t-1}), \zhat\Big\rangle + \frac{1}{\eta}\div{\zhat}{\vec{z}^{t-2}}\mright\} \\
        &= \argmin_{\hat{\vec{z}} \in \bbR^n_{\ge 0}} \mleft\{\Big\langle-\vec{u}(\vec{x}^{t-1},\vec{\ell}^{t-1}), \zhat\Big\rangle + \frac{1}{2\eta}\|\zhat - \vec{z}^{t-2}\|_2^2\mright\} \\
        &= \argmin_{\hat{\vec{z}} \in \bbR^n_{\ge 0}} \Big\|\zhat - \vec{z}^{t-2} - \eta\, \vec{u}(\vec{x}^{t-1}, \vec{\ell}^{t-1})\Big\|_2^2
        = \mleft[\vec{z}^{t-2} + \eta\, \vec{u}(\vec{x}^{t-1}, \vec{\ell}^{t-1})\mright]^+ \\ &= \mleft[\vec{\theta}^{t-1} + \eta\, \vec{u}(\vec{x}^{t-1}, \vec{\ell}^{t-1})\mright]^+.\numberthis\label{eq:omd proj}
    \end{align*}
    Since $\vec{\theta}^1 = \vec{z}^0 = \vec{0}$, the only effect of the step size $\eta$ is a rescaling of all iterates $\{\vec{\theta}^t\}$ by a constant. However, the forcing action $\vec{g}(\vec{\theta}^t) = \vec{\theta}^t / \|\vec{\theta}^t\|_1$ is invariant to positive rescaling of $\vec{\theta}^t$. For this reason, all choices of $\eta > 0$ result in the same iterates being output by the algorithm. So, in particular we can assume without loss of generality that $\eta = 1$ in~\eqref{eq:omd proj}, which corresponds exactly to the update step in \rmp{}.
\end{proof}

        \section{Predictive Blackwell Approachability and Predictive RM, \rmp{}}\label{app:prm prmp}

\proppredictiveblackwell*
\begin{proof}
  As shown by \citet{Abernethy11:Blackwell}, a Blackwell approachability game
  with a non-conic target set can be converted to a conic target set at the
  cost of a factor 2 in the distance bound. Hence, we assume that $S$ is a
  closed convex cone, and use the construction of \cref{algo:olo to approachability}
  instantiated with the FTRL algorithm with domain $\cD = S^\circ$, regularizer
  $\regu(\vec{x}) = \frac{1}{2}\|\vec{x}\|_2^2$, and step size parameter $\eta > 0$.
  \cref{prop:olo to approachability}, along with the aforementioned factor $2$
  reduction from generic convex target set to conic target set, implies that
    \begin{align*}
        \min_{\hat{\vec{s}}\in C}\ \mleft\| \hat{\vec{s}} - \frac{1}{T}\sum_{t=1}^T \vec{u}(\vec{x}^t, \vec{y}^t)\mright\|_2 &\le \frac{2}{T}\max_{\xhat \in S^\circ \cap \bbB_2^n} R^T(\xhat)\\
        &\le \frac{2}{T}\max_{\xhat \in S^\circ \cap \bbB_2^n} \mleft(\frac{\|\xhat\|_2^2}{2\eta} + \eta \sum_{t=1}^T \|\vec{u}(\vec{x}^t, \vec{y}^t) - \vec{v}^t\|_2^2\mright)\\
        &\le \frac{2}{T}\mleft(\frac{1}{2\eta} + \eta \sum_{t=1}^T \|\vec{u}(\vec{x}^t, \vec{y}^t) - \vec{v}^t\|_2^2\mright)
    \end{align*}
  where the second inequality follows from expanding the regret bound for FTRL
  (\cref{prop:ftrl bound}), and the third inequality follows from the fact that
  $\xhat \in \bbB_2^n$. Setting $\eta = \frac{1}{\sqrt{T}}$ yields the result.
\end{proof}

\thmprmprmp*
\begin{proof}
    Given the definition of $\Gamma$ and \cref{algo:olo to approachability}, at all times $t$, $\cL_\eta^\text{ftrl*}$ and $\cL_\eta^\text{omd*}$ observe loss $-\vec{u}(\vec{x}^t,\vec{\ell}^t)$, where $\vec{u}(\vec{x}^t, \vec{\ell}^t) \defeq \langle \vec{\ell}^t, \vec{x}^t\rangle\vec{1} - \vec{\ell}^t$ is the vector-valued payoff in $\Gamma$ and measures the increase of regret at time $t$ relative to each vertex of the simplex. Furthermore, at all $t$ the prediction given to $\cL_\eta^\text{ftrl*}$ and $\cL_\eta^\text{omd*}$ is $-\vec{v}^t$ (Line 2, \cref{algo:olo to approachability}). We now break up the analysis according to the OLO oracle used.

    \paragraph{$\cL_\eta^\text{ftrl*}$ corresponds to Predictive RM} For the specific choice of domain $\cD = \bbR^n_{\ge 0}$ and regularizer $\regu = \|\cdot\|_2^2$, \cref{line:ftrl next strategy} in \cref{algo:predictive ftrl} has the closed-form solution
    \[
        \vec{\theta}^{t} = \mleft[-\eta \mleft(-\sum_{t=1}^T \vec{u}(\vec{x}^t, \vec{\ell}^t) - \vec{v}^t\mright)\mright]^+ = \eta\mleft[\sum_{t=1}^T \vec{u}(\vec{x}^t, \vec{\ell}^t) + \vec{v}^t\mright]^+.
    \]
    Since the forcing action $\vec{g}(\vec{\theta}^t) = \vec{\theta^t} / \|\vec{\theta^t}\|_1$ is invariant to positive constants, we see that the action $\vec{x}^t$ picked by \cref{algo:olo to approachability} (Line 3) is the same for all values of $\eta > 0$ and is computed as
    \begin{equation}\label{eq:prm xt}
        \vec{x}^t  = \frac{\mleft[\sum_{t=1}^T \vec{u}(\vec{x}^t, \vec{\ell}^t) + \vec{v}^t\mright]^+}{\mleft\|\mleft[\sum_{t=1}^T \vec{u}(\vec{x}^t, \vec{\ell}^t) + \vec{v}^t\mright]^+\mright\|_1}.
    \end{equation}
    provided $\vec{\theta}^t \neq \vec{0}$, and is an arbitrary vector $\vec{x}^t \in \Delta^n$ otherwise, in accordance with the analysis of the approachability of halfspaces in $\Gamma$ (\cref{sec:ftrl omd rm rmp}). By using the definition of $\vec{u}(\vec{x}^t, \vec{\ell}^t) \defeq \langle \vec{\ell}^t, \vec{x}^t\rangle\vec{1} - \vec{\ell}^t$ and $\vec{v}^t \defeq \langle \vec{m}^t, \vec{x}^{t-1}\rangle\vec{1} - \vec{m}^t$, we see that at all times $t$ the iterates produced by Line 4 in \cref{algo:prm} are exactly as in~\eqref{eq:prm xt}.

\paragraph{$\cL_\eta^\text{omd*}$ corresponds to Predictive \rmp{}} For the specific choice of domain $\cD = \bbR^n_{\ge 0}$ and regularizer $\regu = \|\cdot\|_2^2$, as already note in the proof of \cref{thm:rmp is omd}, \cref{line:omd next zt} in Predictive OMD (\cref{algo:predictive omd}) has the closed-form solution
    \begin{align}
      \vec{z}^{t} = \mleft[\vec{z}^{t-1} + \eta\,\vec{u}(\vec{x}^t, \vec{\ell}^t)\mright]^+\label{eq:prmp1}
    \end{align}
    at all $t$. Similarly, \cref{line:omd next xt} in Predictive OMD (\cref{algo:predictive omd}) has the closed-form solution
    \begin{align}
        \vec{\theta}^t &= \mleft[\vec{z}^{t-1} + \eta \vec{v}^t\mright]^+.\label{eq:prmp2}
    \end{align}
    Since both~\eqref{eq:prmp1} and~\eqref{eq:prmp2} are homogeneous in $\eta > 0$ (that is, the only effect of $\eta$ is to rescale all $\vec{\theta}^t$ and $\vec{z}^t$ by the same constant) and the forcing action $\vec{g}(\vec{\theta}^t) = \vec{\theta}^t / \|\vec{\theta}^t\|_1$ for $\Gamma$ is invariant to positive rescaling of $\vec{\theta}^t$, we see that \cref{algo:olo to approachability} outputs the same iterates no matter the choice of step size parameter $\eta > 0$. In particular, we can assume without loss of generality that $\eta = 1$. In that case, \cref{eq:prmp1} corresponds exactly to Line 7 in \prmp{} (\cref{algo:prmp}), and line \cref{eq:prmp2} corresponds exactly to Line 4.

    \paragraph{Regret analysis} The regret $R^T(\xhat)$ cumulated by PRM and \prmp{} satisfies
    \begin{align*}
        \frac{1}{T} R^T(\xhat) &= \frac{1}{T} \sum_{t=1}^T \Big(\langle \vec{\ell}^t, \vec{x}^t\rangle - \langle\vec{\ell}^t, \xhat\rangle\Big) = \sum_{t=1}^T \Big(\langle \vec{\ell}^t, \vec{x}^t\rangle\langle \vec{1}, \xhat\rangle - \langle\vec{\ell}^t, \xhat\rangle\Big)\\
            &= \mleft\langle \frac{1}{T}\sum_{t=1}^T \langle \vec{\ell}^t,\vec{x}^t\rangle\vec{1} - \vec{\ell}^t, \xhat\mright\rangle
            = \mleft\langle \frac{1}{T}\sum_{t=1}^T \vec{u}(\vec{x}^t, \vec{\ell}^t), \xhat\mright\rangle\\
            &= \min_{\hat{\vec{s}} \in \bbR^n_{\le 0}} \mleft\langle -\hat{\vec{s}} + \frac{1}{T}\sum_{t=1}^T \vec{u}(\vec{x}^t, \vec{\ell}^t), \xhat\mright\rangle,\numberthis\label{eq:regret analysis}
    \end{align*}
    where we used the fact that $\xhat \in \Delta^{\!n}$ in the second equality, and the fact that $\min_{\hat{\vec{s}}\in\bbR^n_{\le 0}} \langle -\hat{\vec{s}}, \xhat\rangle = 0$ since $\xhat \ge \vec{0}$. Applying the Cauchy-Schwarz inequality to the right-hand side of~\eqref{eq:regret analysis}, we obtain
    \begin{align*}
        \frac{1}{T} R^T(\xhat) &\le \min_{\hat{\vec{s}} \in \bbR^n_{\le 0}} \mleft\| -\hat{\vec{s}} + \frac{1}{T}\sum_{t=1}^T \vec{u}(\vec{x}^t, \vec{\ell}^t)\mright\|_2 \|\xhat\|_2.
    \end{align*}
    So, using the fact that $\|\xhat\|_2 \le 1$ for any $\xhat \in \Delta^{\!n}$, and applying \cref{prop:olo to approachability},
    \begin{align}
        \frac{1}{T} R^T(\xhat) &\le \min_{\hat{\vec{s}} \in \bbR^n_{\le 0}} \mleft\| -\hat{\vec{s}} + \frac{1}{T}\sum_{t=1}^T \vec{u}(\vec{x}^t, \vec{\ell}^t)\mright\|_2 \le \frac{1}{T} \max_{\xhat' \in \bbR^n_{\ge 0} \cap \bbB_2^n}{R_\cL^T(\xhat')},\label{eq:prmp regret step 2}
    \end{align}
    where $R_\cL^T$ is the regret cumulated by the OLO oracle used in \cref{algo:olo to approachability}---in our case, $\cL_\eta^\text{ftrl*}$ for PRM and $\cL_\eta^\text{omd*}$ for \prmp{}.
    In either case ($\cL = \cL_\eta^\text{ftrl*}$ or $\cL = \cL_\eta^\text{omd*}$),
        \cref{prop:oco bound} offers a bound on $R_\cL^T(\xhat)$ that holds for
        all $\xhat \in\cD = \bbR_{\ge 0}^n$. So, in particular the bound holds
        for all points in $\cK = \bbR_{\ge 0}^n \cap \bbB_2^n \subseteq \cD$.
    Consequently,
    \begin{equation}\label{eq:prmp regret step 3}
        \max_{\xhat' \in \bbR^n_{\ge 0} \cap \bbB_2^n}{R_\cL^T(\xhat')} \le \max_{\xhat' \in \bbR^n_{\ge 0} \cap \bbB_2^n} \mleft\{\frac{\|\xhat'\|_2^2}{2\eta} + \eta\sum_{t=1}^T \|\vec{u}(\vec{x}^t, \vec{\ell}^t) - \vec{v}^t\|_2^2\mright\} \le \frac{1}{2\eta} + \eta\sum_{t=1}^T \|\vec{u}(\vec{x}^t, \vec{\ell}^t) - \vec{v}^t\|_2^2,
    \end{equation}
    where we used the fact that $\xhat'\in\bbB_2^n$ in the last step. Substituting~\eqref{eq:prmp regret step 3} into~\eqref{eq:prmp regret step 2}, we have
    \[
        R^T(\xhat) \le \frac{1}{2\eta} + \eta\sum_{t=1}^T \|\vec{u}(\vec{x}^t, \vec{\ell}^t) - \vec{v}^t\|_2^2.
    \]
    Since we have shown above that the iterates produced by the algorithm are independent of $\eta > 0$, we can minimize the right-hand side over $\eta > 0$, obtaining the bound
    \[
        R^T(\xhat) \le \sqrt{2}\mleft(\sum_{t=1}^T \|\vec{u}(\vec{x}^t, \vec{\ell}^t) - \vec{v}^t\|_2^2\mright)^{\!\!1/2}.
    \]
    Finally, expanding the definition of $\vec{u}(\vec{x}^t, \vec{\ell}^t) \defeq \langle \vec{\ell}^t, \vec{x}^t\rangle\vec{1} - \vec{\ell}^t$ and $\vec{v}^t \defeq \langle \vec{m}^t, \vec{x}^{t-1}\rangle\vec{1} - \vec{m}^t$, we obtain the statement.
\end{proof}

        \section{Extensive-Form Games and Counterfactual Regret Minimization}\label{app:efg cfr}

\newcommand{\regret}[1]{R^T_{#1}}
\newcommand{\sympl}[1]{\Delta^{\!#1}}
\newcommand{\sprm}[3]{$(#1,#2,#3)$-stable-predictive}
\newcommand{\seqf}[1]{X^{\triangle}_{#1}}
\newcommand{\sfrm}[1]{\cR^{\triangle}_{#1}}
\newcommand{\sfrT}[1]{R^{\triangle,T}_{#1}}
\newcommand{\sfm}[2]{\vec{m}^{\triangle,#1}_{#2}}
\newcommand{\sfell}[2]{\vec{\ell}^{\triangle,#1}_{#2}}
\newcommand{\sfx}[2]{\vec{x}^{\triangle,#1}_{#2}}
\newcommand{\localrm}[1]{\hat\cR_{#1}}
\newcommand{\nextv}[1]{\mathcal{C}_{#1}}
\newcommand{\subt}[1]{{\triangle}_{#1}}
\newcommand{\laminarregret}[2]{{\hat R}^{#2}_{#1}}

An extensive-form game is a game played on a game tree. Each player in an extensive-form game faces a sequential decision process. A sequential decision process is a tree consisting of two types of nodes: \emph{decision nodes} and \emph{observation nodes}. We denote the set of decision nodes as $\cJ$, and the set of observation nodes with $\cK$. At each decision node $j \in \cJ$, the agent picks an action according to a distribution $\vec{x}_j \in \Delta^{n_j}$ over the set $A_j$ of $n_j = |A_j|$ actions available at that decision node, and the process moves to the observation node that is reached by following the edge corresponding to the selected action at $j$, if any.
At each observation point $k \in \cK$, the agent receives one out of $n_k$ possible signals; the set of signals that the agent can observe is denoted as $S_k$. After the signal is received, the process moves to the decision node that is reached by following the edge corresponding to the signal at $k$.

The observation node that is reached by the agent after picking action $a \in A_j$ at decision point $j\in \cJ$ is denoted by $\rho(j, a)$. Likewise, the decision node reached by the agent after observing signal $s\in S_k$ at observation point $k \in \cK$ is denoted by $\rho(k, s)$. The set of all observation points reachable from $j \in \cJ$ is denoted as $\nextv{j} \defeq \{\rho(j, a): a\in A_j\}$. Similarly, the set of all decision points reachable from $k \in \cK$ is denoted as $\nextv{k} \defeq \{\rho(k, s): s\in S_k\}$. To ease the notation, sometimes we will use the notation $\nextv{ja}$ to mean $\nextv{\rho(j,a)}$.

Pairs $z = (j, a)$ with $j\in\cJ, a\in A_j$ for which $\rho(j,a) = \emptyset$ are called \emph{terminal sequences} and have an associated payoff vector $(u(z), -u(z))$ (that is, we assume the game is zero sum). We denote the set of all terminal sequences (also called \emph{leaves}) with $Z$.

\paragraph{Sequence Form for Sequential Decision Processes}
Given a strategy $\{\vec{x}_j\}_{j\in\cJ}$ for the player, its sequence-form representation~\citep{Stengel96:Efficient}, denoted $\mu(\vec{x})$ is defined as the vector indexed over $\{(j,a): j\in\cJ, a\in A_j\}$ whose entry corresponding to a generic pair $(j,a)$ is the product of the probability of all actions on the path from the root of the decision process to $(j,a)$. We denote the range of $\mu$, that is the set of all possible sequence-form strategies as the $\vec{x}_j$ vary arbitrarily over $\Delta^{|A_j|}$ as $Q$. We call $Q$ the sequence-form strategy space of the player.

It is well-known that a Nash equilibrium in a two-player zero-sum extensive form game can be expressed as a bilinear saddle point problem
\[
    \min_{\vec{q}_1 \in Q_1}\max_{\vec{q}_2 \in Q_2} \vec{q}_1^{\!\top} \mat{A} \vec{q}_2,
\]
where $Q_1$ and $Q_2$ are the sequence-form strategy spaces of Player 1 and 2, respectively, and $\mat{A}$ is a suitable game-dependent matrix. It is also common knowledge that by letting regret minimizers for $Q_1$ and $Q_2$ play against each other, we can sole the bilinear saddle point above (e.g., \citet{Farina19:Online}). So, we now focus on the task of constructing a regret minimizer for a sequence-form strategy space.

\subsection{Counterfactual Regret Minimization}
The counterfactual regret minimization framework \citep{Zinkevich07:Regret} provides a way of constructing a regret minimization for the sequence-form strategy space of a player by combining independent regret minimizers \emph{local} to each of the player's decision points $j \in \cJ$. At each $j\in \cJ$, the corresponding regret minimizer---denoted $\cR_j$---is responsible for selecting the strategy $\vec{x}_j^t$ at all times $t$.

CFR achieves its goal by setting the losses observed by the local regret minimizers in a specific way. In particular, let $\vec{\ell}^t$ be the loss at time $t$ relative to the whole sequence-form strategy space $Q$ of the player. Then, for each decision point $j \in \cJ$, the regret minimizer $\cR_j$ local at $j$ is fed the loss vector $\vec{\ell}^{t}_j \in \bbR^{|A_j|}$, whose entries are defined as
\[
    \vec{\ell}^{t}_j[a] \defeq \vec{\ell}^t[(j,a)] + \sum_{j' \in \cC_{ja}} V^t_{j'}\numberthis\label{eq:counterfactual loss}
\]
for each $a \in A_j$, where
\[
    V^t_j \defeq \sum_{a \in A_j} \vec{x}_j^t[a] \mleft(\vec{\ell}^t[(j,a)] + \sum_{j' \in \cC_{ja}} V^t_{j'}\mright)\qquad \forall j \in \cJ.\numberthis\label{eq:Vt}
\]

\begin{theorem}[Laminar regret decomposition, \citep{Farina19:Online}]\label{thm:laminar}
 At all times $T$, the regret $R^T$ cumulated by the CFR algorithm can be bounded as
    \[
    \max_{\xhat \in Q} R^T(\xhat) \le \max_{\xhat \in Q}\sum_{j\in\cJ} \xhat[\sigma(j)] \cdot R^T_j(\xhat_j)
    \]
    where $R^T_j$ denotes the regret cumulated by the local regret minimizer $\cR_j$ at decision point $j$.
\end{theorem}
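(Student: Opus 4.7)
The plan is to establish the pointwise identity
\[
  R^T(\hat{\vec{x}}) \;=\; \sum_{j\in\cJ} \hat{\vec{x}}[\sigma(j)] \cdot R^T_j(\hat{\vec{x}}_j) \qquad \forall\, \hat{\vec{x}} \in Q,
\]
from which the claimed bound follows immediately by maximizing both sides over $\hat{\vec{x}} \in Q$. To set this up, I would first extend the recursion~\eqref{eq:Vt} by defining, for any (not necessarily played) sequence-form strategy $\hat{\vec{x}} \in Q$ and any time $t$, the quantity $V^t_j(\hat{\vec{x}})$ as the expected value at node $j$ under $\hat{\vec{x}}$ against the loss $\vec{\ell}^t$. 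The $V^t_j$ of~\eqref{eq:Vt} is then $V^t_j(\vec{x}^t)$, and a standard tree-traversal argument gives $\langle \vec{\ell}^t, \mu(\hat{\vec{x}})\rangle = V^t_{\mathrm{root}}(\hat{\vec{x}})$. In particular, $R^T(\hat{\vec{x}}) = \sum_{t=1}^T \bigl(V^t_{\mathrm{root}}(\vec{x}^t) - V^t_{\mathrm{root}}(\hat{\vec{x}})\bigr)$.

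The crucial ingredient is a local one-step decomposition, valid at every $j \in \cJ$:
\[
  V^t_j(\vec{x}^t) - V^t_j(\hat{\vec{x}}) \;=\; \langle \vec{\ell}^{\,t}_j,\; \vec{x}^t_j - \hat{\vec{x}}_j \rangle \;+\; \sum_{a\in A_j} \hat{\vec{x}}_j[a] \sum_{j'\in \nextv{ja}} \bigl(V^t_{j'}(\vec{x}^t) - V^t_{j'}(\hat{\vec{x}})\bigr).
\]
I would prove this by inserting and subtracting the hybrid quantity $W^t_j \defeq \sum_{a} \hat{\vec{x}}_j[a]\bigl(\vec{\ell}^t[(j,a)] + \sum_{j'\in\nextv{ja}} V^t_{j'}(\vec{x}^t)\bigr)$. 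A direct substitution using the counterfactual-loss definition~\eqref{eq:counterfactual loss} collapses $V^t_j(\vec{x}^t) - W^t_j$ to exactly $\langle \vec{\ell}^{\,t}_j, \vec{x}^t_j - \hat{\vec{x}}_j\rangle$, while the residual $W^t_j - V^t_j(\hat{\vec{x}})$ produces the propagated subtree differences weighted by $\hat{\vec{x}}_j[a]$.

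The remaining step is to unroll this one-step identity down the tree by induction on depth. Each recursive substitution at a descendant decision node $j'$ picks up a multiplicative factor $\hat{\vec{x}}_{j''}[a'']$ for some ancestor $(j'', a'')$ on the path to $j'$; observation nodes contribute no factor. After full unrolling, the accumulated product along the unique path from the root to a decision point $j$ is, by the very definition of the sequence-form map $\mu$, precisely the coordinate $\hat{\vec{x}}[\sigma(j)]$. Combining this with the outer summation over $t = 1,\dots,T$ yields the pointwise identity above, and hence the theorem.

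The main obstacle is essentially bookkeeping: one must verify that the telescoping product of local action probabilities picked up during the recursion collapses exactly to the sequence-form entry $\hat{\vec{x}}[\sigma(j)]$, and that every decision node is counted once while observation nodes are correctly traversed without contributing weight. No additional inequality is actually invoked in the derivation; the ``$\le$'' in the statement simply reflects the customary way the bound is used in CFR analyses, where $R^T_j(\hat{\vec{x}}_j)$ is subsequently replaced by the worst-case regret guarantee of the local regret minimizer $\cR_j$.
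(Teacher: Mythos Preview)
The paper does not provide its own proof of this theorem; it is merely stated in the appendix on CFR with a citation to \citet{Farina18:Online} and no accompanying argument, so there is nothing in the paper to compare your proposal against. That said, your proposal is correct and follows the standard route used in the CFR literature (and essentially the one in the cited reference): establish the pointwise identity $R^T(\xhat) = \sum_{j\in\cJ} \xhat[\sigma(j)]\, R^T_j(\xhat_j)$ via the one-step local decomposition you wrote, then unroll recursively so that the product of local action probabilities accumulated along the path to each $j$ collapses to the sequence-form weight $\xhat[\sigma(j)]$. Your remark that the result is in fact an equality---the ``$\le$'' only becomes active once each $R^T_j(\xhat_j)$ is later replaced by the worst-case local regret---is also correct.
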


\cref{thm:laminar} in particular implies that if all local regret minimizers $\cR_j$ ($j \in \cJ$) guarantee $O(T^{1/2})$ regret, then so does the overall algorithm, that is $R^T(\xhat) = O(T^{1/2})$ for all $\xhat \in Q$. 

\subsection{Counterfactual Loss Predictions}

We now describe the construction of the counterfactual loss predictions, starting from a generic prediction $\vec{m}^t$ for $\vec{\ell}^t$ relative to the whole sequence-form strategy space $Q$ of the player. In order to maintain symmetry with \cref{eq:counterfactual loss} and \cref{eq:Vt}, for each decision point $j \in \cJ$, the regret minimizer $\cR_j$ local at $j$ is fed the loss prediction vector $\vec{m}^{t}_j \in \bbR^{|A_j|}$, whose entries are defined as
\[
    \vec{m}^{t}_j[a] \defeq \vec{m}^t[(j,a)] + \sum_{j' \in \cC_{ja}} W^t_{j'}
\]
for each $a \in A_j$, where
\[
    W^t_j \defeq \sum_{a \in A_j} \vec{x}_j^t[a] \mleft(\vec{m}^t[(j,a)] + \sum_{j' \in \cC_{ja}} W^t_{j'}\mright)\qquad \forall j \in \cJ.
\]

It important to observe that the counterfactual loss prediction $\vec{m}^t_j$
depends on the decisions produced at time $t$ in the subtree rooted at $j$. In other words, in order to construct the prediction for what loss $\cR_j$
will observe after producing the decision $\vec{x}_j^t$, we use the ``future'' decisions from the subtrees under $j$. 

In our experiments, we always set $\vec{m}^t = \vec{\ell}^{t-1}$. This is a common choice, that in other algorithms (not ours) is known to lead to asymptotically lower regret than $O(T^{1/2})$~\citep{Syrgkanis15:Fast,Farina19:Optimistic,Farina19:Optimistic}. 

        \section{Description of the Game Instances}\label{app:games}

\begin{description}
  \item[Kuhn poker] (\cref{game:kuhn,game:kuhn13})  is a standard benchmark in the EFG-solving community \citep{Kuhn50:Simplified}. In Kuhn poker, each player puts an ante worth $1$ into the pot. Each player is then privately dealt one card from a deck that contains $R$ unique cards. Then, a single round of betting then occurs, with the following dynamics. First, Player $1$ decides to either check or bet $1$. Then,
  \begin{itemize}[nolistsep]
  \item If Player 1 checks Player 2 can check or raise $1$.
    \begin{itemize}[nolistsep]
      \item If Player 2 checks a showdown occurs; if Player 2 raises Player 1 can fold or call.
        \begin{itemize}
          \item If Player 1 folds Player 2 takes the pot; if Player 1 calls a showdown occurs.
          \end{itemize}
        \end{itemize}
      \item If Player 1 raises Player 2 can fold or call.
        \begin{itemize}[nolistsep]
        \item If Player 2 folds Player 1 takes the pot; if Player 2 calls a showdown occurs.
        \end{itemize}
      \end{itemize}
      When a showdown occurs, the player with the higher card wins the pot and the game immediately ends.

    We used $R=3$ in \cref{game:kuhn} (this corresponds to the original game as introduced by \citet{Kuhn50:Simplified}), while in \cref{game:kuhn13} we used $R=13$.

    \item[Leduc poker] (\cref{game:leduc3,game:leduc5,game:leduc9,game:leduc13}) is another standard benchmark in the EFG-solving
community~\cite{Southey05:Bayes}. The game is played with a deck of $R$ unique
cards, each of which appears exactly twice in the deck. The game is composed of two rounds. In the
first round, each player places an ante of $1$ in the pot and is dealt a single private card. A round of betting then takes place, with Player 1 acting first. At most two bets are allowed per player. Then, a card is is revealed face up and another
round of betting takes place, with the same dynamics described above. After the two betting round, if one of the players has a pair with the public card, that player
wins the pot. Otherwise, the player with the higher card wins the pot. All bets in the first
round are worth $1$, while all bets in the second round are $2$.

We set $R=3$ in \cref{game:leduc3}, $R=5$ in \cref{game:leduc5}, $R=9$ in \cref{game:leduc9}, and $R=13$ in \cref{game:leduc13}.

\item[Small matrix] (\cref{game:sm}) is a small $2 \times 2$ matrix game. Given a mixed strategy $\vec{x} = ( x_1, x_2) \in \Delta^2$ for Player 1 and a mixed strategy $\vec{y} = (y_1, y_2)\in\Delta^2$ for Player 2, the payoff function for player 1 is defined as
      \[
        u(\vec{x}, \vec{y}) \defeq 5 x_1 y_1 - x_1 y_2 + x_2 y_2.
      \]
    This game was found by \cite{Farina19:Optimistic} to be a hard instance for the \cfrp{} game.

    \item[Goofspiel] (\cref{game:goof4,game:goof5}) This is another popular benchmark game, originally proposed by \citet{Ross71:Goofspiel}. It is a two-player card game, employing three identical decks
of $k$ cards each whose values range from $1$ to $k$. At the beginning of the game, each player gets dealt a full deck as their hand, and the third deck (the ``prize'' deck) is shuffled and put face down on the board. In each turn, the topmost card from the prize deck is revealed. Then, each player privately picks a card from their hand. This card acts as a bid to win the card that was just revealed from the prize deck. The selected cards are simultaneously revealed, and the highest one
wins the prize card. If the players' played cards are equal, the prize card is split.  The players’ score are computed as the sum of the values of the prize cards they have won.
        In \cref{game:goof4} the value of $k$ is $k = 4$, while in \cref{game:goof5} $k=5$.

    \item[Limited-information Goofspiel] (\cref{game:goof4_li,game:goof5_li}) This is a variant of the Goofspiel game used by~\citet{Lanctot09:Monte}. In this variant, in each turn the players do not reveal their cards. Rather, they show their cards to a fair umpire, which determines which player has played the highest card and should therefore received the prize card. In case of tie, the umpire directs the players to discard the prize card just like in the Goofspiel game.
        In \cref{game:goof4_li} the number of cards in each deck is $k = 4$, while in \cref{game:goof5_li} $k=5$.

    \item[Pursuit-evasion] (\cref{game:search4,game:search5,game:search6}) is a security-inspired pursuit-evasion game played on the graph shown in Figure~\ref{fig:search_game}. It is a zero-sum variant of the one used by \citet{Kroer18:Robust}, and a similar search game  has been considered by \citet{Bosansky14:Exact} and \citet{Bosansky15:Sequence}.

\begin{figure}[ht]
  \centering
  \begin{tikzpicture}
        \path[fill=black!10!white] (.8, -1.4) rectangle (1.6, 1.4);
        \path[fill=black!10!white] (3.2, -1.4) rectangle (4.0, 1.4);
        \node at (1.2, -1.7) {$P_1$};
        \node at (3.6, -1.7) {$P_2$};

          \node[draw, circle, minimum width=.6cm, inner sep=0] (A) at (0, 0) {$S$};
          \node[draw, circle, minimum width=.6cm] (B) at (1.2, 1) {};
          \node[draw, circle, minimum width=.6cm] (C) at (1.2, 0) {};
          \node[draw, circle, minimum width=.6cm] (D) at (1.2, -1) {};
            \node[draw, circle, minimum width=.6cm] (E) at (2.4, 1) {};
            \node[draw, circle, minimum width=.6cm] (F) at (2.4, 0) {};
            \node[draw, circle, minimum width=.6cm] (G) at (2.4, -1) {};
              \node[draw, circle, minimum width=.6cm] (H) at (3.6, 1) {};
              \node[draw, circle, minimum width=.6cm] (I) at (3.6, 0) {};
              \node[draw, circle, minimum width=.6cm] (J) at (3.6, -1) {};
                    \node[draw, circle, minimum width=.6cm,inner sep=0] (K) at (4.8, 1) {$5$};
                    \node[draw, circle, minimum width=.6cm,inner sep=0] (L) at (4.8, 0) {$10$};
                    \node[draw, circle, minimum width=.6cm,inner sep=0] (M) at (4.8, -1) {$3$};

        \draw[thick,->] (A) edge (B);
        \draw[thick,->] (A) edge (C);
        \draw[thick,->] (A) edge (D);
        \draw[thick,->] (B) edge (E);
        \draw[thick,->] (C) edge (F);
        \draw[thick,->] (D) edge (G);
        \draw[thick,->] (E) edge (F);
        \draw[thick,->] (G) edge (F);
        \draw[thick,->] (E) edge (H);
        \draw[thick,->] (F) edge (I);
        \draw[thick,->] (G) edge (J);
        \draw[thick,->] (H) edge (K);
        \draw[thick,->] (I) edge (L);
        \draw[thick,->] (J) edge (M);

        \draw[thick,gray,dashed] (B) edge (C);
        \draw[thick,gray,dashed] (C) edge (D);

        \draw[thick,gray,dashed] (H) edge (I);
        \draw[thick,gray,dashed] (I) edge (J);
    \end{tikzpicture}
  \caption{The graph on which the search game is played.}
  \label{fig:search_game}
\end{figure}
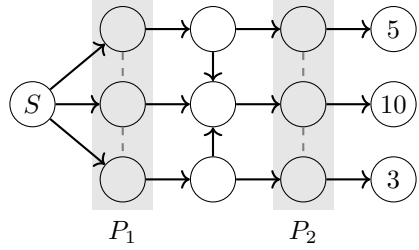

In each turn, the attacker and the defender act simultaneously. The defender controls two patrols, one per each
respective patrol areas labeled $P_1$ and $P_2$. Each patrol can move by one step along the grey dashed lines, or stay in place. The attacker starts from the leftmost node (labeled $S$) and at each turn can move to
any node adjacent to its current position by following the black directed edges. The attacker can also choose to wait
in place for a time step in order to hide all their traces. If a patrol visits a
node that was previously visited by the attacker, and the attacker did not wait
to clean up their traces, they will see that the attacker was there. The goal of the attacker is to reach any of the rightmost nodes, whose corresponding payoffs are $5$, $10$, or $3$, respectively, as indicated in \cref{fig:search_game}. If at any time the attacker and any patrol
meet at the same node, the attacker is loses the game, which leads to
a payoff of $-1$ for the attacker and of $1$ for the defender. The game times out after $m$ simultaneous moves, in which case both players defender
receive payoffs $0$. In \cref{game:search4} we set $m=4$, in \cref{game:search5} we set $m=5$ and in \cref{game:search6} we set $m=6$.

    \item[Battleship] (\cref{game:bs23_3turns,game:bs23_4turns}) is a parametric version of a classic board game, where two
competing fleets take turns shooting at each other~\citep{Farina19:Correlation}.
At the beginning of the game, the players take turns at secretly placing a set
of ships on separate grids (one for each player) of size $3\times 2$. Each ship
has size 2 (measured in terms of contiguous grid cells) and a value of $4$, and
must be placed so that all the cells that make up the ship are fully contained
within each player’s grids and do not overlap with any other ship that the
player has already positioned on the grid. After all ships have been placed. the
players take turns at firing at their opponent. Ships that have been hit at all
their cells are considered sunk. The game continues until either one player has
sunk all of the opponent’s ships, or each player has completed $R$ shots. At the
end of the game, each player’s payoff is calculated as the sum of the values of
the opponent’s ships that were sunk, minus the sum of the values of ships which
that player has lost.

In \cref{game:bs23_3turns} we set $R = 3$, while in \cref{game:bs23_4turns} we set $R=4$.

\item[River Endgame] (\cref{game:river_endgame7}) The river endgame is structured and
  parameterized as follows. The game is parameterized by the conditional
  distribution over hands for each player, current pot size, board state ($5$
  cards dealt to the board), and a betting abstraction. First, Chance deals out
  hands to the two players according to the conditional hand distribution. Then,
  Libratus has the choice of folding, checking, or betting by a number of
  multipliers of the pot size: 0.25x, 0.5x, 1x, 2x, 4x, 8x, and all-in. If
  Libratus checks and the other player bets then Libratus has the choice of
  folding, calling (i.e. matching the bet and ending the betting), or raising by
  pot multipliers 0.4x, 0.7x, 1.1x, 2x, and all-in. If Libratus bets and the
  other player raises Libratus can fold, call, or raise by 0.4x, 0.7x, 2x, and
  all-in. Finally when facing subsequent raises Libratus can fold, call, or
  raise by 0.7x and all-in. When faced with an initial check, the opponent can
  fold, check, or raise by 0.5x, 0.75x, 1x, and all-in. When faced with an
  initial bet the opponent can fold, call, or raise by 0.7x, 1.1x, and all-in.
  When faced with subsequent raises the opponent can fold, call, or raise by
  0.7x and all-in. The game ends whenever a player folds (the other player wins
  all money in the pot), calls (a showdown occurs), or both players check as
  their first action of the game (a showdown occurs). In a showdown the player
  with the better hands wins the pot. The pot is split in case of a tie.
  The specific endgame we use is subgame 4 from the set of open-sourced Libratus subgames at
  \url{https://github.com/Sandholm-Lab/LibratusEndgames}.

  \item[Liar's dice] (\cref{game:ld_new}) is another standard benchmark in the
      EFG-solving community~\citep{Lisy15:Online}. In our instantiation, each
      of the two players initially privately rolls an unbiased $6$-face die.
      The first player begins bidding, announcing any face value up to $6$ and
      the minimum number of dice that the player believes are showing that value
      among the dice of both players. Then, each player has two choices during
      their turn: to make a higher bid, or to challenge the previous bid by
      declaring the previous bidder a ``liar''. A bid is higher than the
      previous one if either the face value is higher, or the number of dice is
      higher. If the current player challenges the previous bid, all dice are
      revealed. If the bid is valid, the last bidder wins and obtains a reward
      of $+1$ while the challenger obtains a negative payoff of $-1$. Otherwise,
      the challenger wins and gets reward $+1$, and the last bidder obtains reward of $-1$.
\end{description}

\section{Additional Experimental Results}

\begin{figure}[H]
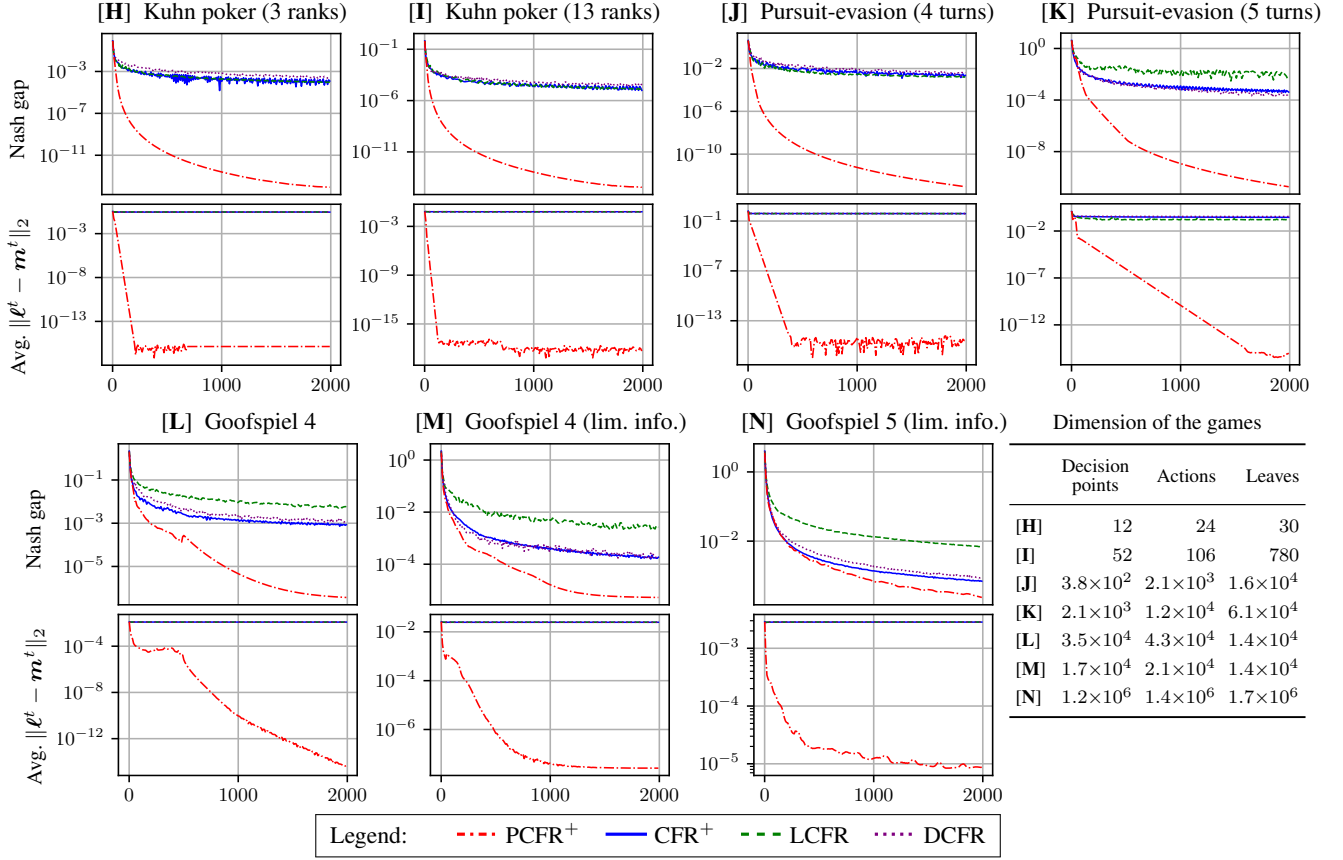
\centering%
    \addplotA{kuhn}{Kuhn poker (3 ranks)}%
    \hspace{-1mm}\addplotA*{kuhn13}{Kuhn poker (13 ranks)}%
    \hspace{-1mm}\addplotA*{search4}{Pursuit-evasion (4 turns)}%
    \hspace{-1mm}\addplotA*{search5}{Pursuit-evasion (5 turns)}\\
    \addplotA{goof4}{Goofspiel 4}%
    \hspace{-1mm}\addplotA*{goof4_li}{Goofspiel 4 (lim. info.)}%
    \hspace{-1mm}\addplotA*{goof5_li}{Goofspiel 5 (lim. info.)}%
    \hspace{-1mm}\scalebox{.9}{\begin{minipage}[t]{4.4cm}%
      \small%
      \centering Dimension of the games\\[1.08mm]
      {%
        \fontsize{8}{8}\selectfont%
        \setlength{\tabcolsep}{.9mm}%
        \renewcommand{\arraystretch}{1.5}%
        \sisetup{
            scientific-notation=true,
            round-precision=1,
            round-mode=places,
            exponent-product={\mkern-4mu\times\mkern-4.5mu}
        }%
      \begin{tabular}{l>{\raggedleft\let\newline\\\arraybackslash\hspace{0pt}}m{1.1cm}rr}
        \toprule
         & \centering Decision points & Actions & Leaves \\%
        \midrule%
        \ref*{game:kuhn}       &        $12$ &         $24$ &         $30$ \\
        \ref*{game:kuhn13}     &        $52$ &        $106$ &        $780$ \\
        \ref*{game:search4}    &   \num{382} &   \num{2081} &  \num{15898} \\
        \ref*{game:search5}    &  \num{2078} &  \num{11899} &  \num{61084} \\
        \ref*{game:goof4}      & \num{34952} &  \num{42658} &  \num{13824} \\
        \ref*{game:goof4_li}   & \num{17432} &  \num{21298} &  \num{13824} \\
        \ref*{game:goof5_li} & \num{1175330} & \num{1428452} & \num{1728000} \\
        \bottomrule
      \end{tabular}
      }
    \end{minipage}}\\
    \centering\makelegendA{}
    \caption{Performance of \pcfrp, \cfrp, DCFR, and LCFR on EFGs. In all plots, the x axis is the number of iterations of each algorithm. For each game, the top plot shows that the Nash gap on the y axis (on a log scale), the bottom plot shows and the average prediction error (on a log scale).}
\end{figure}

\begin{figure}[t]
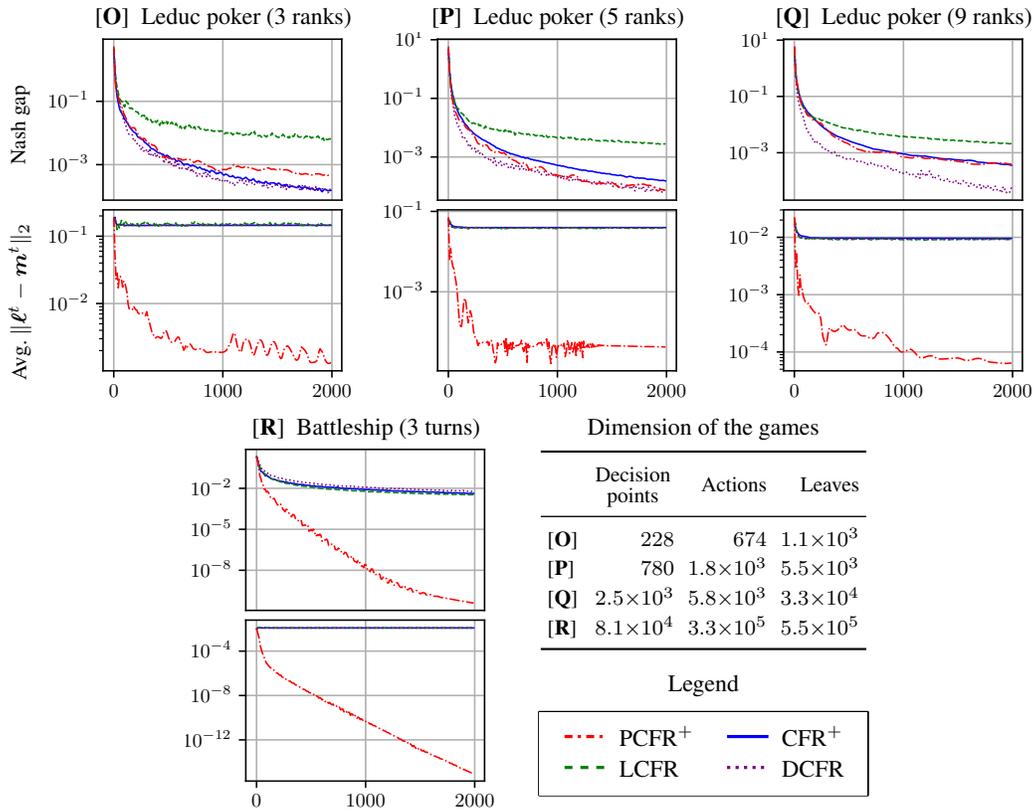
\centering%
    \addplotA{leduc3}{Leduc poker (3 ranks)}%
    \hspace{2mm}\addplotA*{leduc5}{Leduc poker (5 ranks)}%
    \hspace{2mm}\addplotA*{leduc9}{Leduc poker (9 ranks)}%
    \hspace{2mm}\addplotA*{bs23_3turns}{Battleship (3 turns)}%
    \hspace{4mm}\begin{minipage}[t]{4.4cm}%
      \small%
      \centering Dimension of the games\\[1.08mm]
      {%
        \fontsize{8}{8}\selectfont%
        \setlength{\tabcolsep}{.9mm}%
        \renewcommand{\arraystretch}{1.4}%
        \sisetup{
            scientific-notation=true,
            round-precision=1,
            round-mode=places,
            exponent-product={\mkern-4mu\times\mkern-4.5mu}
        }%
      \begin{tabular}{l>{\raggedleft\let\newline\\\arraybackslash\hspace{0pt}}m{1.1cm}rr}
        \toprule
         & \centering Decision points & Actions & Leaves \\%
        \midrule%
        \ref*{game:leduc3}     &       $228$ &        $674$ &   \num{1116}\\
        \ref*{game:leduc5}     &       $780$ &   \num{1822} &   \num{5500}\\
        \ref*{game:leduc9}     &  \num{2484} &   \num{5798} &  \num{32724}\\
        \ref*{game:bs23_3turns}& \num{81027} & \num{327070} & \num{552132} \\
        \bottomrule
      \end{tabular}
      }\\[.3cm]
      \centering Legend\\[.5mm]
      \begin{tcolorbox}[
        boxsep=0pt,
        left=1pt,right=1pt,top=5pt,bottom=4pt,
        boxrule=.5pt,
        colback=black!0!white,
        colframe=black,
        arc=0pt
      ]\centering
        \begin{tabular}{ll}
         \linesty{pred,dash dot}~~\pcfrp{}&
         \linesty{pblue}~~\cfrp{}\\[.6mm]
         \linesty{pgreen,dashed}~~LCFR&
         \linesty{ppurple,dotted}~~DCFR\\
        \end{tabular}
      \end{tcolorbox}
    \end{minipage}\\
    \caption{Performance of \pcfrp, \cfrp, DCFR, and LCFR on EFGs. In all plots, the x axis is the number of iterations of each algorithm. For each game, the top plot shows that the Nash gap on the y axis (on a log scale), the bottom plot shows and the average prediction error (on a log scale).}
\end{figure}

In all games but Leduc 13 (\cref{game:leduc13}), \pcfrp{} significantly outperforms all other algorithms, by 2-8 orders of magnitude. In Leduc 13, \pcfrp{} outperforms \cfrp{} but not the DCFR algorithm. \cfrp{} is equivalent or slightly superior to DCFR, except in Leduc 13, where it outperforms \cfrp{} by slightly less of one order of magnitude. This is in line with the experimental results presented in the body of this paper, where we found that DCFR performs significantly better than \cfrp{} in poker games but not other domains.

\cfrp{}, LCFR, and DCFR perform similarly in the Small matrix game (\cref{game:sm}), and in particular all exhibit slower than $T^{-1}$ convergence. This is not the case for our predictive algorithm \pcfrp{}. This confirms that Small matrix is a hard instance for non-predictive methods but not for predictive methods, as already observed by~\citet{Farina19:Optimistic}.

In all game instances, we empirically find that the prediction error decreases quickly to extremely small values. This suggests that \pcfrp{} might enjoy stability guarantees similar to predictive FTRL and OMD~\citep{Syrgkanis15:Fast}. Exploring such properties is an interesting future research direction.

\paragraph{Correlation between game structure and \pcfrp{} performance} The empirical investigation of \pcfrp{} shows that in most classes of games \pcfrp{} performs significantly better than \cfrp{} and DCFR, while in other games (such as the poker games and Liar's Dice) predictivity seems to be less useful or even detrimental. It is natural to wonder what game structures can benefit from the use of predictive methods and what do not. While we do not currently have a good answer to that question, we have collected here some thoughts and observations.

\begin{itemize}[nolistsep,itemsep=1mm]
    \item \emph{Size}. Some predictive methods proposed in the past were found to only produce a speedup in small games, and perform worse than the state of the art in large games~\citep{Farina19:Optimistic}. This is {not} the case for \pcfrp{}: the river endgame and Liar's Dice are not the largest games in our dataset. So, size does not seem to be a good predictor for whether predictive \cfrp{} is beneficial over \cfrp{} and DCFR.

    \item \emph{Number of terminal states}. The river endgame and Liar's Dice both have a large ratio between the number of terminal states (leaves) and number of decision points. On the other hand, the pursuit-evasion game with 5 turns (\cref{game:search5}) has a significantly larger ratio than Liar's Dice but unlike in Liar's Dice, predictivity yields a speedup of more than 6 orders of magnitude on the Nash gap.

    \item \emph{Private information}. Poker games and Liar's Dice have a strong private information structure: a chance node distributes independent private initial states for the two players, and each player has no information about the opponent's state. This is in contrast with, for example, the Battleship games, where each player is \emph{not} handed a random configuration for their ships by the chance player, but rather privately picks one configuration. This shows that the ``amount of private information'' alone is not a good discriminator for when predictivity can be useful.
    
    \item \emph{Private information induced by chance nodes}. From the discussion in the previous bullet, we conjecture that the way the private information arises (for example, through "dealing out cards" like in Poker games or "rolling a die" as in Liar's Dice) might affect whether predictivity helps or hurts convergence to Nash equilibrium. We leave pursuing this direction open. It is not immediately clear how one could formalize that metric.
\end{itemize}

\subsection{Comparison between Linear and Quadratic Averaging in \pcfrp{} and \cfrp{}}

We also investigated the performance of \cfrp{} with quadratic averaging in all games, as well as the performance of \pcfrp{} with linear averaging. The experimental results are shown in \cref{fig:comparison cfrp quadratic avg 1,fig:comparison cfrp quadratic avg 2}.
Since only the averaging that is used when computing the (approximate) Nash equilibrium varies, but not the iterates themselves, the prediction errors are independent of the averaging variant used. Therefore, in the prediction error plots we only report one curve for each of the two algorithms.

\cfrp{} with quadratic averaging of iterates performs similarly to \cfrp{} with linear averaging. \pcfrp{} with linear averaging performs similarly or slightly better than \pcfrp{} with quadratic averaging in two games. It performs better than \cfrp{} with either linear or quadratic averaging in 11 games, and worse than both in two games (no-limit Texas hold'em river endgame and Leduc poker). We conclude that the speedup of \pcfrp{} is mostly due to the use of loss predictions, rather than the particular averaging of iterates.

\begin{figure}[H]
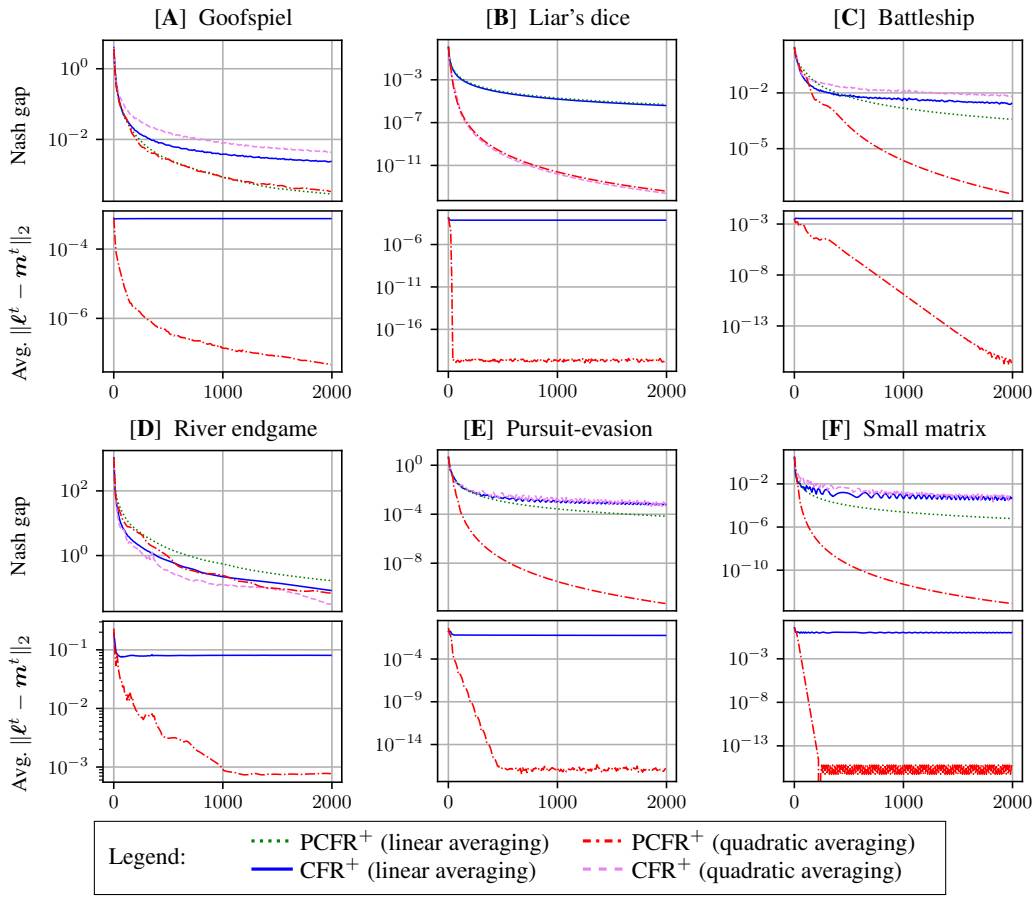
\centering%
    \addplotB{goof5}{Goofspiel}%
    \hspace{2mm}\addplotB*{ld_new}{Liar's dice}%
    \hspace{2mm}\addplotB*{bs23_4turns}{Battleship}\\
    \addplotB{river_endgame7}{River endgame}%
    \hspace*{2mm}\addplotB*{search6}{Pursuit-evasion}%
    \hspace*{2mm}\addplotB*{sm}{Small matrix}\\
    \centering\makelegendB{}
    \caption{Performance of \pcfrp{} and \cfrp{} with linear and quadratic averaging on EFGs. In all plots, the x axis is the number of iterations of each algorithm. For each game, the top plot shows that the Nash gap on the y axis (on a log scale), the bottom plot shows and the average prediction error (on a log scale).}\label{fig:comparison cfrp quadratic avg 1}
\end{figure}

\begin{figure}[H]\centering%
    \addplotB{kuhn}{Kuhn poker}%
    \hspace{2mm}\addplotB*{kuhn13}{Kuhn poker (13 ranks)}%
    \hspace{2mm}\addplotB*{search4}{Pursuit-evasion (4 turns)}\\
    \addplotB{search5}{Pursuit-evasion (5 turns)}%
    \hspace{2mm}\addplotB*{goof4}{Goofspiel 4}%
    \hspace{2mm}\addplotB*{goof4_li}{Goofspiel 4 (lim. info.)}\\
    \addplotB{goof5_li}{Goofspiel 5 (lim. info.)}%
    \hspace{2mm}\addplotB*{leduc3}{Leduc poker (3 ranks)}%
    \hspace{2mm}\addplotB*{leduc5}{Leduc poker (5 ranks)}\\
    \centering\makelegendB{}
    \caption{(continued) Performance of \pcfrp{} and \cfrp{} with linear and quadratic averaging on EFGs. In all plots, the x axis is the number of iterations of each algorithm. For each game, the top plot shows that the Nash gap on the y axis (on a log scale), the bottom plot shows and the average prediction error (on a log scale).}\label{fig:comparison cfrp quadratic avg 2}
\end{figure}

\begin{figure}[H]\centering%
    \addplotB{leduc9}{Leduc poker (9 ranks)}%
    \hspace{2mm}\addplotB*{leduc13}{Leduc poker (13 ranks)}%
    \hspace{2mm}\addplotB*{bs23_3turns}{Battleship (3 turns)}\\
    \caption{(continued) Performance of \pcfrp{} and \cfrp{} with linear and quadratic averaging on EFGs. In all plots, the x axis is the number of iterations of each algorithm. For each game, the top plot shows that the Nash gap on the y axis (on a log scale), the bottom plot shows and the average prediction error (on a log scale).}\label{fig:comparison cfrp quadratic avg 2}
\end{figure}

\subsection{Predictive Discounted CFR and Quadratic-Average Loss Prediction}

DCFR is the regret minimizer that results from applying the counterfactual regret minimization framework (\cref{app:efg cfr}) using the \emph{discounted regret matching} regret minimizer at each decision point. We experimentally evaluated a predictive-in-spirit\footnote{In fact, we do not have a proof that our variant is predictive in the formal sense described in the body of the paper. However, the variant we describe follows the natural pattern of predictive RM and predictive \rmp{}.} variant of discounted regret matching shown in \cref{algo:pdrm}.

    \begin{figure}[H]\centering
    \begin{minipage}[t]{.8\linewidth}\small
        \begin{algorithm}[H]\caption{Predictive discounted regret matching}\label{algo:pdrm}
            \DontPrintSemicolon
            $\vec{z}^0 \gets \vec{0} \in \bbR^n,\ \ \vec{x}^0 \gets \vec{1}/n \in \Delta^n$\;
            $\alpha \gets 1.5, \beta \gets 0$\;
            \Hline{}
            \Fn{\normalfont\textsc{NextStrategy}($\vec{m}^{t}$)}{
                \Comment{\color{commentcolor} Set $\vec{m}^t = \vec{0}$ for non-predictive version}\vspace{1mm}
                $\displaystyle\vec{\theta}^t \gets \frac{t^\alpha}{1+t^\alpha}[\vec{z}^{t-1}]^+ + \frac{t^\beta}{1+t^\beta}[\vec{z}^{t-1}]^- + \langle\vec{m}^t,\vec{x}^t\rangle \vec{1} - \vec{m}^t$\;
                \textbf{if} $\vec{\theta}^t \neq \vec{0}$ \textbf{return} $\vec{x}^t \gets \vec{\theta}^t \ /\ \|\vec{\theta}^t\|_1$\;
                \textbf{else} \hspace{0.715cm}\textbf{return} $\vec{x}^t \gets $ arbitrary point in $\Delta^{\!n}$\hspace*{-1cm}\;
            }
            \Hline{}
            \Fn{\normalfont\textsc{ObserveLoss}($\vec{\ell}^{t}$)}{
                $\displaystyle\vec{z}^t \gets \frac{t^\alpha}{1+t^\alpha}[\vec{z}^{t-1}]^+ + \frac{t^\beta}{1+t^\beta}[\vec{z}^{t-1}]^- + \langle\vec{\ell}^t,\vec{x}^t\rangle \vec{1} - \vec{\ell}^t$\;
            }
        \end{algorithm}
    \end{minipage}
    \end{figure}

To maintain symmetry with predictive CFR and predictive \cfrp{}, we coin \emph{predictive DCFR} the algorithm resulting from applying the  counterfactual regret minimization framework (\cref{app:efg cfr}) using the predictive discounted regret matching regret minimizer at each decision point of the game.

We also investigate the use of the quadratic average of past loss vectors,
\[
\vec{m}^t = \frac{6}{t(t-1)(2t-1)}\sum_{\tau=1}^{t-1} \tau^2 \vec{\ell}^\tau,
\]
as the prediction for the next loss $\vec{\ell}^t$. We call this loss prediction the ``quadratic-average loss prediction''.

We compare predictive DCFR (with and without quadratic-average loss prediction), \pcfrp{} (with and without quadratic-average loss prediction), \cfrp{}, and DCFR in \cref{fig:pdfcr quadratic pred 1,fig:pdfcr quadratic pred 2}.

\begin{figure}[H]
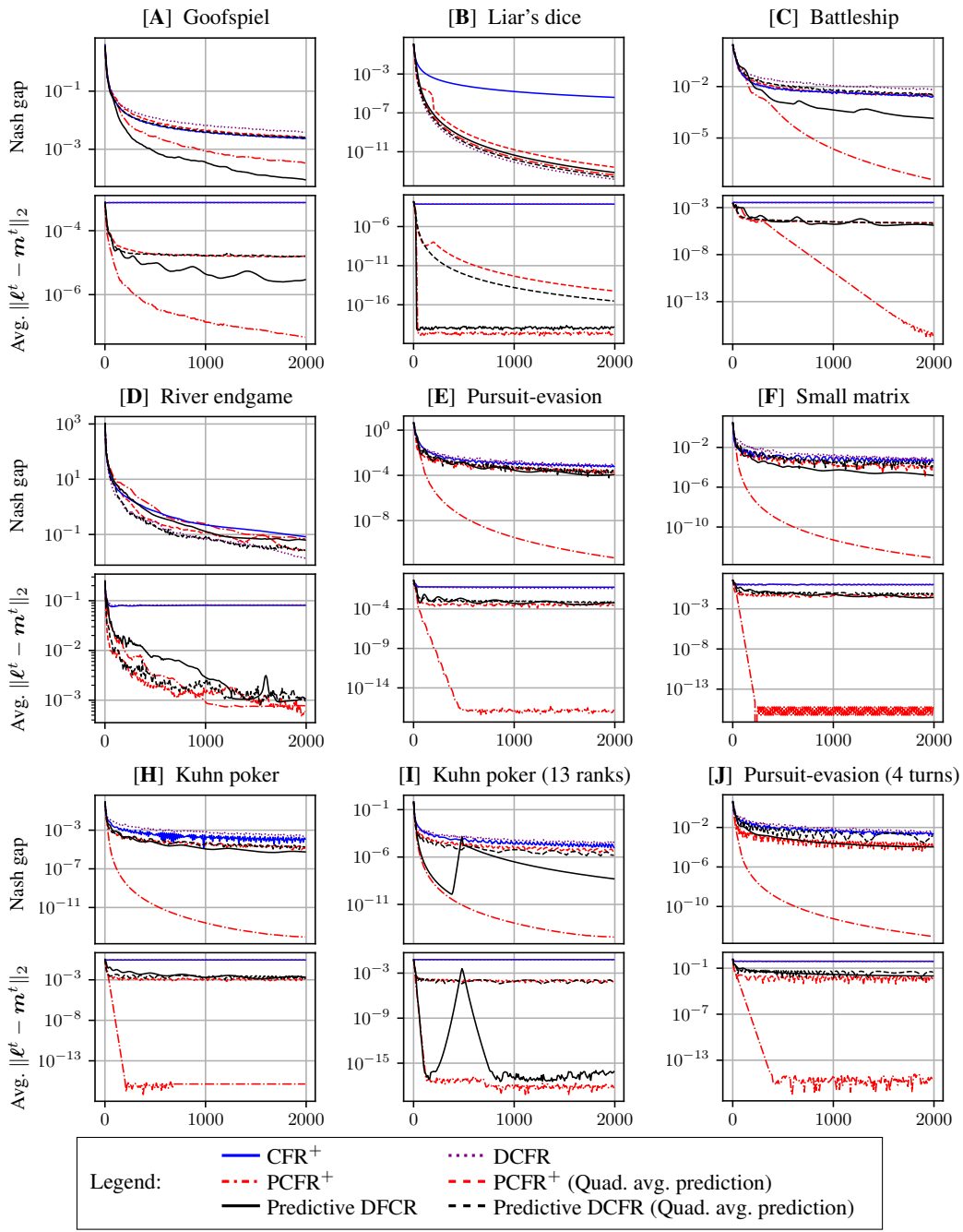
\centering%
    \addplotC{goof5}{Goofspiel}%
    \hspace{2mm}\addplotC*{ld_new}{Liar's dice}%
    \hspace{2mm}\addplotC*{bs23_4turns}{Battleship}\\
    \addplotC{river_endgame7}{River endgame}%
    \hspace{2mm}\addplotC*{search6}{Pursuit-evasion}%
    \hspace{2mm}\addplotC*{sm}{Small matrix}\\
    \addplotC{kuhn}{Kuhn poker}%
    \hspace{2mm}\addplotC*{kuhn13}{Kuhn poker (13 ranks)}%
    \hspace{2mm}\addplotC*{search4}{Pursuit-evasion (4 turns)}\\
    \centering\makelegendC{}
    \caption{Comparison between of discounted CFR and \cfrp{}, with and without quadratic-average loss prediction. In all plots, the x axis is the number of iterations of each algorithm. For each game, the top plot shows that the Nash gap on the y axis (on a log scale), the bottom plot shows and the average prediction error (on a log scale).}\label{fig:pdfcr quadratic pred 1}
\end{figure}

\begin{figure}[H]\centering%

    \addplotC{search5}{Pursuit-evasion (5 turns)}%
    \hspace{2mm}\addplotC*{goof4}{Goofspiel 4}%
    \hspace{2mm}\addplotC*{goof4_li}{Goofspiel 4 (lim. info.)}\\
    \addplotC{goof5_li}{Goofspiel 5 (lim. info.)}%
    \hspace{2mm}\addplotC*{leduc3}{Leduc poker (3 ranks)}%
    \hspace{2mm}\addplotC*{leduc5}{Leduc poker (5 ranks)}\\
    \addplotC{leduc9}{Leduc poker (9 ranks)}%
    \hspace{2mm}\addplotC*{leduc13}{Leduc poker (13 ranks)}%
    \hspace{2mm}\addplotC*{bs23_3turns}{Battleship (3 turns)}\\
    \centering\makelegendC{}
    \caption{(continued) Comparison between of discounted CFR and \cfrp{}, with and without quadratic-average loss prediction. In all plots, the x axis is the number of iterations of each algorithm. For each game, the top plot shows that the Nash gap on the y axis (on a log scale), the bottom plot shows and the average prediction error (on a log scale).}\label{fig:pdfcr quadratic pred 2}
\end{figure}

\fi
\end{document}